\def\senbun#1(#2)#3({\@senbun(#2)(}
\def\@senbun(#1,#2)(#3,#4){%
   \@tempdima#1\p@ \advance\@tempdima#3\p@
   \divide\@tempdima\tw@
   \@tempdimb#2\p@ \advance\@tempdimb#4\p@
   \divide\@tempdimb\tw@
   \edef\@senbun@temp{\noexpand\qbezier(#1,#2)%
      (\strip@pt\@tempdima,\strip@pt\@tempdimb)(#3,#4)}%
   \@senbun@temp}
\newcommand{\BbbC}{{\rm\kern.22em\rule[.1ex]{.06em}{1.4ex}\kern-.28em C}} 
\newcommand{\BbbQ}{{\rm\kern.22em\rule[.1ex]{.06em}{1.4ex}\kern-.28em Q}}
\newcounter{Codeline}
\newcommand{\Newcodeline}{\setcounter{Codeline}{1}}
\newcommand{\Cl}{{\theCodeline}: \addtocounter{Codeline}{1}}
\newcommand{\crm}{\\}
\newcommand{\N}{{\rm I\kern-.22em N}} 
\newcommand{\Z}{{\sf Z\kern-.42em Z}} 
\newcommand{\R}{{\rm I\kern-.22em R}} 
\newcommand{\LU}{{\mathcal{LUMI}}} 
\newcommand{\LUM}{{\mathcal{LUMI}}}
\newcommand{\OB}{{\mathcal{OBLOT}}} 
\newcommand{\LK}{{\mathit{Look}}}
\newcommand{\M}{{\mathit{Move}}}
\newcommand{\conf}{{\cal C}}
\newcommand{\aunfair}{A_{\mathit{unfair}}}
\newcommand{\look}{{\mathit{Look}}}
\newcommand{\comp}{{\mathit{Compute}}}
\newcommand{\move}{{\mathit{Move}}}
\authorrunning{R. Nakai et.al.}
\titlerunning{Gathering Problems for Autonomous Mobile Robots with Lights in ASYNC}
\begin{document}

\title{Asynchronous Gathering Algorithms for Autonomous Mobile Robots with Lights}
\author{Rikuo~NAKAI\inst{1} \and Yuichi~SUDO\inst{2} \and Koichi~WADA\inst{3}}
\institute{
  Graduate School of Science and Engineering, Hosei University,
  \and
  Faculty of Computer and Information Sciences, Hosei University,
  \and
  Faculty of Science and Engineering, Hosei University
}
\maketitle

\begin{abstract}
  We consider  a {\em Gathering} problem for $n$ autonomous mobile robots with persistent memory called {\em  light}
  in an asynchronous scheduler (ASYNC).
  It is well known that Gathering is impossible when robots have no lights in basic common models,
  if the system is semi-synchronous (SSYNC) or even centralized (only one robot is active in each time).
  It is known that Gathering can be solved by robots with $10$ colors of lights in ASYNC. This result is obtained by combining the following results. (1) The simulation of SSYNC robots with $k$ colors by ASYNC robots with $5k$ colors~\cite{DFPSY}, and (2) Gathering is solved by SSYNC robots with $2$ colors~\cite{TWK}.
  
  In this paper, we improve the result by reducing the number of colors and show that Gathering can be solved by ASYNC robots with $3$ colors of lights.
  We also show that we can construct a simulation algorithm of any \emph{unfair} SSYNC algorithm using $k$ colors by ASYNC robots with $3k$ colors, where unfairness does not guarantee that every robot is activated infinitely often. Combining this simulation and the Gathering algorithm by SSYNC robots with $2$ colors~\cite{TWK}, we obtain a Gathering algorithm by ASYNC robots with $6$ colors. Our main result can be obtained by reducing the number of colors from $6$ to $3$.
\end{abstract}

\section{Introduction}\label{sec:intro}

\subsection{Background and Motivation}

The computational issues of autonomous mobile entities %
have been the object of much research in  distributed computing. In this paper, we focus on mobile objects operating on a two-dimensional Euclidean space but there are several research on three-dimensional spaces and graphs~\cite{DBLP:series/lncs/11340}. Each robot operate in $\look$-$\comp$-$\move$ ($\mathit{LCM}$) cycles.
 In the $\mathit{Look}$ phase, an entity, viewed as a point and usually called {\em robot},  obtains a snapshot of the space; in  the $\mathit{Compute}$ phase
it  executes its algorithm (the same for all robots) using the snapshot as input; it then moves towards the computed destination in the $\mathit{Move}$ phase.
Repeating these cycles, the robots are able to collectively perform some tasks and solve some problems. 
The research interest has been on determining  the impact that
{\em internal}  capabilities (e.g., memory, communication) and {\em external}
 conditions (e.g., synchrony, activation scheduler) have on the solvability of a problem.

 We also explore such weakest capabilities to solve the task. The problem considers in this paper is {\em Gathering}, which is one of the most fundamental tasks of autonomous mobile robots. Gathering is the process of $n$ mobile robots, initially located on arbitrary positions, meeting within finite time at a location, not known a priori. When there are only two robots, this task is called {\em Rendezvous}.
Since Gathering and Rendezvous are simple but essential problems, they have been intensively studied,
and a number of possibility and/or impossibility results have been shown under the different assumptions\cite{AP,AOSY,BDT,CFPS,DGCMR,DKLMPW,DP,IKIW,ISKIDWY,KLOT,KLASING200827,LMA,P,SDY}.
The solvability of Gathering and Rendezvous depends on the activation schedule and the synchronization level.
Usually three basic types of schedulers are identified, the fully synchronous (FSYNC), the semi-synchronous (SSYNC) and the asynchronous (ASYNC)\footnote{In addition to these basic models , the new model semi-asynchronous (SAsync)~\cite{DBLP:journals/access/CiceroneSN21} is recently proposed to reveal the gap between SSYNC and ASYNC.}.
Gathering and Rendezvous are trivially solvable in FSYNC and the basic model.
However, these problems cannot be solved in SSYNC without any additional assumptions \cite{FPS}, and the same is true in ASYNC.
In particular, Gathering is not solvable even  in a restricted subclass of SSYNC scheduler,
where exactly one robot is activated in each round and always in the same order (called ROUND-ROBIN) \cite{DGCMR}. %
If all robots are initially located on different positions (called distinct Gathering),
this version of the problem is not solvable even in the CENT scheduler,
in which exactly one robot is activated in each round~\cite{DGCMR}.%

In \cite{DFPSY}, persistent memory called {\em light} has been introduced to reveal relationship between ASYNC and SSYNC and they show asynchronous robots with lights equipped with a constant number of colors, are strictly more powerful than semi-synchronous robots without lights: %
for any algorithm $\mathcal{A}$ designed for semi-synchronous robots (without colors), 
they give a simulation algorithm by which
asynchronous robots with $5$ colors simulate an execution of $\mathcal{A}$.
Rendezvous can be solved by robots with lights without any other additional assumptions~\cite{FSVY,DFPSY,V}.
Gathering is also solvable by robots with lights and it can be solved by robots with $2$ colors of lights in SSYNC~\cite{TWK}.
The power of lights to solve other problems are discussed in \cite{DBLP:journals/iandc/DEmidioSFN18,LFCPSV,DFN}. %

\subsection{Our Contribution}

In this paper, we study Gathering algorithms by robots with lights in the most realistic  schedulers, ASYNC and some of the weakest conditions in term of computational power. As for Gathering algorithms in ASYNC, the following results are known;
Cielieback et al.~\cite{CFPS} solves the distinct Gathering for more than two robots with weak multiplicity detection, where the distinct gathering means all robots are initially placed in different positions, and weak multiplicity detection helps a robot to identify multiple occurrences of robots at a single point. Bhagat et al.~\cite{BM} solves a gathering problem for five or more robots under the additional constraint to minimize the maximum distance traversed by any robots, with weak multiplicity detection or $4$ colors of lights. Both algorithms work without any extra assumptions like agreements of coordinate systems, unit distance and chirality and rigidity of movement, but they solve some constrained gathering problem. 
Gathering can be solved by ASYNC robots with $10$ colors of lights by using the following two results;
\begin{enumerate}
    \item[(a)] Any algorithm in SSYNC using $k$ colors of light can be simulated by ASYNC robots with $5k$ colors of lights~\cite{DFPSY}, and 
    \item[(b)] A Gathering algorithm is constructed by SSYNC robots with $2$ colors of lights~\cite{TWK}. 
\end{enumerate}

Since the algorithm shown in~\cite{TWK} needs the chirality assumption but no any other extra assumptions, the obtained algorithm works only with chirality.

This paper improves the result just stated above by reducing the number of colors and shows that Gathering can be solved by ASYNC robots with $3$ colors of lights in no any other extra assumptions except chirality as follows;

\begin{enumerate}
    \item We construct a simulation algorithm of any \emph{unfair} SSYNC algorithm using $k$ colors by ASYNC robots with $3k$ colors of lights, where unfair SSYNC is that the adversary makes enabled robots (changing its color or moving a different location) active in SSYNC.
We have reduced the number of colors used in the simulation to $3k$ from $5k$, although the simulated algorithms are limited to ones working in unfair SSYNC.
Since many robots algorithms seem to work in unfair SSYNC if it works in (fair) SSYNC, this simulation is interesting in itself and can be used to reduce the number of colors used in algorithms working in ASYNC.

\item We show that the Gathering algorithm with $2$ colors of light shown in~\cite{TWK} can still work in unfair SSYNC. Hence we obtained that Gathering can be solved by ASYNC robots with 6 colors of lights in no any other extra assumptions except chirality.
The Gathering algorithm of~\cite{TWK} is divided into two sub-algorithms. The first one makes a configuration that all robots are located on one straight line from any initial configuration and the second one is a Gathering algorithm from any initial configuration such that all robots are located on the straight line.
The first one needs no lights but assumption of chirality and
the second one need no extra assumptions and uses $2$ colors. 
We show that the both algorithms can work in unfair SSYNC by defining a potential function for each algorithm and showing that each function becomes monotonically decreasing for any behaviour of each algorithm.  

\item We improve the number of colors used in the algorithm into $3$. Since the second algorithm uses $2$ colors in SSYNC, the resultant algorithm have $6$ colors. Thus in order to reduce the number of colors, we directly construct a $3$-color Gathering algorithm from any configuration such that all robots are located on the straight line in ASYNC. Combining with the simulation of the first algorithm,
we have obtained an ASYNC Gathering algorithm with $3$ colors of light.
\end{enumerate}

We have used the method by combining the simulation of SSYNC robots by ASYNC robots and algorithms working in SSYNC to reduce the number of colors used in the resultant algorithm. 
Proving the correctness of algorithms working in ASYNC is complicated and error-prone.
However, our combination method can reduce the complexity of proving the correctness since the proof of working in  ASYNC is divided into the proof of correctness of the simulation and the simulated algorithm working in SSYNC, both of which are easier than that in ASYNC.

\section{Model and Preliminaries}%

\subsection{The Basics}

The systems considered in this paper consist of a team  $R = \{ r_0 ,\cdots,
r_{n-1}\}$ of  computational entities moving and operating
 in the Euclidean plane $\mathbb R^2$. Viewed as  points, and called {\em robots},
the entities can move freely and continuously in the plane.
Each robot has its own local coordinate system and it always perceives itself at its origin;
there might not be consistency between these coordinate systems.
A robot is equipped with sensorial devices that allows it to   observe the positions of the other robots in its local coordinate system.

The robots are {\em identical}: they are indistinguishable by their appearance and they execute the same protocol. The robots are {\em autonomous}, without a central control.  

At any point in time, a robot is either {\em active} or {\em inactive}. Upon becoming active, a robot $r_i$ executes a $\mathit{ Look}$-$\mathit{Compute}$-$\mathit{Move}$ ($\mathit{LCM}$) cycle performing the following three operations:
\begin{enumerate}
\item {$\look$:} The robot activates its sensors to obtain a snapshot of the positions occupied by robots with respect to its own coordinate
system\footnote{This is called the {\em full visibility} (or unlimited visibility)  setting; restricted forms of visibility have also been considered for these systems.}. The snapshot of $r_i$ is denoted as ${\cal SS}_i$.
\item {$\comp$:} The robot executes its algorithm using the snapshot as input. The result of the computation is a destination point.
\item {$\move$:} The robot moves in a straight line toward  the computed destination but the robot may be stopped by an adversary before reaching the computed destination. 
In this case, the movement is called {\em non-rigid}. Otherwise, it is called {\em rigid}.
When stopped before reaching its destination in the non-rigid movement, a robot moves at least a minimum distance $\delta >0$. If the distance to the destination is at most $\delta$, the robot can reach it. 
We assume non-rigid movement throughout the paper.
If the destination is the current location, the robot stays still.
 \end{enumerate}
 
\noindent When inactive, a robot is idle. All robots are initially idle. The amount of time to complete a cycle is assumed to be finite, and the $\mathit{Look}$ operation is assumed to be instantaneous.

There might not be consistency between the local coordinate systems and their unit of distance. 
The absence of any a-priori assumption on consistency of the local coordinate systems is called {\em disorientation}.

The robots are said
to have {\em chirality} if they share the same circular orientation of the plane (i.e., they 
agree on ``clockwise'' direction). If there is chirality, then there exists a unique circular ordering of locations occupied robots~\cite{SY}.
Thus, for each edge of the convex hull obtained by locations of $n$ robots ($n \geq 3$), all robots can agree with the right vertex of the edge.
\subsection{The Models}\label{sec:model}
Different models,  based on the same basic premises defined above, have been considered in the literature
and we will use the following two models.

In the most common model, $\OB$, the robots are {\em silent}: they have no explicit means of communication; furthermore they are {\em oblivious}: at the start of a cycle, a robot has no
memory of observations and computations performed in previous cycles.

In the other common model, $\LUM$, 
each robot $r_i$ is  equipped with a persistent  visible
state variable $\ell_i$, called {\em light}, whose values are taken from a finite set $C$ of states called {\em colors} (including the color that represents the initial state when the light is off). 
The colors of the lights can be set in each cycle by $r_i$ at the end of its {\em Compute} operation. 
A light is {\em persistent} from one computational cycle to the next: the color is not automatically reset at the end of a cycle;  the robot is otherwise oblivious, forgetting all other information from previous cycles.
In $\LUM$, the {\em Look} operation produces a colored snapshot; i.e., it returns the set of pairs 
 $(position,color)$ 
of the other robots\footnote{If  (strong) multiplicity detection is assumed, the snapshot is a multi-set.}.
Note that if $|C|=1$, then the light is not used; thus, this case corresponds to the $\OB$ model. 

We denote by $\ell_i(t)$ the color of light $r_i$ has at time $t$ and $p_i(t) \in \R^2$ the position occupied by robot $r_i$ at time $t$ represented in some global coordinate system. A {\em configuration} ${\cal C}(t)$ at time $t$ is a multi-set of $n$ pairs $(\ell_i(t),p_i(t))$, each defining the color of light and the position of robot $r_i$ at time $t$.
When no confusion arises, ${\cal C}(t)$ is simply denoted by $\cal C$.

If a configuration is that robots are located on a line segment connecting $p$ and $q$ (denoted as $pq)$, 
this configuration is denoted by a regular-expression-like sequence of colors robots have from the endpoint $p$ to the other endpoint $q$. 
Formally we define {\em color-configurations} for a configuration of line segment $pq$ as follows;
Let $\conf(t)$ be the line-segment configuration at time $t$.%
{\em Color-configurations} for $\conf(t)$ are defined as (0)-(3) as follows;
\begin{enumerate}
\item[(0)] Factor $f$ is defined as either $\alpha$, $(\alpha|\beta)$ or $(\alpha|\beta|\gamma)$, where $\alpha$, $\beta$ and $\gamma$ are colors and $(\alpha|\beta)$  and $(\alpha|\beta|\gamma)$) denote $\alpha$ or $\beta$ and $\alpha$,  $\beta$ or $\gamma$, respectively. Color(s) which robots at a point have are denoted as $f$. Let $f$, $g$ and $h$ be factors.
\item[(1)] $fg$ denotes a configuration that 
all robots at $p$ have colors $f$, all robots at $q$ have colors $g$, and
there are no robots inside  the segment.
\item[(2)] $f g^+ h$ denotes a configuration that
all robots at $p$ have colors $f$, all robots at $q$ have colors $h$, and %
there exists at least one point inside the segment where all robots located there have colors $g$.%
\item[(3)] $f g_m h$, 
if all robots at $p$ have colors $f$, all robots at $q$ have colors $h$, and 
all robots at the mid-point of the segment have colors $g$ and there are no robots except on the three locations.%
\end{enumerate}

If a color-configuration is one of  (1) $\sim$ (3), it is denoted as $f g^* h$.
Let $dis(\conf(t))$ denote the length of the segment in the configuration $\conf(t)$.%
The color-configuration for $\conf(t)$ is denoted as $cc(\conf(t))$ and the number of points which have  color $\alpha$ in $\conf(t)$ is denoted as $\#_{\alpha}(\conf(t))$.
We also use this notation for a snapshot of robot.

In Section~\ref{sec:GatheringAlgorithm}, color-configurations defined here are used, and in addition, one abuse of notation is used as follows. Letting $f$, $g$, and $h$ be factors, $f^+gh^*$ denotes that all robots at $p$ have colors $f$, all robots at $q$ have colors $g$ or $h$, and all robots inside the segment have colors $f$, $g$, and $h$.%

\subsection{The Schedulers}

With respect to the activation schedule of the robots, and the duration of their LCM cycles, the fundamental distinction is between the \emph{asynchronous} and \emph{synchronous} settings.

In the \emph{synchronous} setting (SSYNC), also called semi-synchronous, time is divided into discrete intervals, called \emph{rounds}; in each round some robots are activated simultaneously, and perform their LCM cycle in perfect synchronization.

A popular synchronous setting which plays an important role is the 
\emph{fully-synchronous} setting (FSYNC), where every robot is activated in every round; that is, the activation scheduler has no adversarial power.

In the \emph{asynchronous} setting (ASYNC), there is no common notion of time, each robot is activated independently of the others, the duration of each phase is finite but unpredictable and might be different in different cycles.
In this paper, we are concerned with ASYNC and we assume the following;
In a $\mathit{Look}$ operation, a snapshot of the environment is taken at some time $t_L$  and we say that the \emph{$\mathit{Look}$ operation is performed at time $t_L$.}
Each $\mathit{Compute}$ operation of $r_i$ is assumed to be done at time $t_C$ and the color of its light $\ell_i(t)$ and its pending destination $des_i$ are both set to the computed values for any time greater than $t_C$
\footnote{Note that if some robot performs a $\mathit{Look}$ operation at time $t_C$, then
it observes the former color and if it does at time $t_C+\epsilon (\forall\epsilon>0)$, then
it observes the newly computed color.}.
When the movement in a $\mathit{Move}$ operation begins at time $t_B$ and ends at $t_E$, we say that it is performed during interval $[t_B, t_E]$, and the beginning (resp. ending) of the movement is denoted by $\mathit{Move_{BEGIN}}$ (resp. $\mathit{Move_{END}}$) occurring at time $t_B$ (resp. $t_E$).  
In the following, $\mathit{Compute}$, $\mathit{Move_{BEGIN}}$ and $\mathit{Move_{END}}$ are abbreviated as $\mathit{Comp}$, $\mathit{M_{B}}$ and $\mathit{M_{E}}$, respectively.
When a cycle has no actual movement (i.e., robots only change color and their destinations are the current positions),
we can equivalently assume that the $\mathit{Move}$ operation in this cycle is omitted, since we can consider 
the $\mathit{Move}$ operation to be performed just before the next $\mathit{Look}$ operation.

Without loss of generality, we assume the set of time instants at which the robots start executions of $\mathit{Look}$, $\mathit{Comp}$, $\mathit{M_B}$ and $\mathit{M_E}$ to be $\N$. We also assume the followings for each operation.
\begin{enumerate}
\item $\mathit{Comp}$ operation is performed instantaneously at integer time $t_C$ and if some robot performs a $\LK$ operation at time $t_C$, then
it observes the former color and if it does at time $t_C+1$, then
it observes the newly computed color.
\item When the movement in a $\M$ operation begins at $t_B$ and ends at $t_E \geq t_B+1$ and if a robot performs a $\LK$ operation at time $t_B$ then it observes the location before moving and it does at time $t (t_B+1\leq t \leq t_E)$, then it observes any location on the half-open line segment between one before moving (inclusive) and the destination (exclusive) satisfying the following condition, letting $p_{t}$ be the location of the moving robot at time $t$, for times $t$ and $t'$ such that $t_B+1\leq t<t' \leq t_E$, it holds that   $dis(p_{t_B},p_{t}) <dis(p_{t_B},p_{t'})$, where $dis(p,q)$ denotes the distance between $p$ and $q$. The selected location is assumed to be determined by adversary.  Also if it does at time $t_E+1$, it observes the destination.
\end{enumerate}

In SSYNC and ASYNC settings, the selection of which robots are activated is made by an adversarial scheduler, whose only limit is that every robot must be activated infinitely often (i.e., it is a \emph{fair} scheduler). We also consider an \emph{unfair} scheduler. When a robot becomes active and performs the LCM cycle, the robot is \emph{enabled} if it changes its color and/or the computed  destination is different from the current position.
The unfair scheduler does not guarantee that every robot is activated infinitely often.
It is only guaranteed that if there is one or more enabled-robots at a time $t$, at least one enabled-robot will be activated or become non-enabled at some time $t' > t$. 
Note that in a computation under this scheduler, an enabled-robot may not be activated until it becomes the only enabled robot.

\section{Simulating Algorithms in unfair SSYNC by ASYNC $\LU$ robots}
\label{sec:SIM}
In this section, we show that any $\OB$ algorithm working in unfair SSYNC can be simulated by $\LU$ robots with $3$ colors in ASYNC.

\subsection{Simulation in ASYNC for Algorithms in unfair SSYNC}

\Newcodeline
\begin{algorithm}[ht]
  \caption{SIM-for-Unfair($r_i$)[$A_{\mathit{unfair}}$]}
  \label{algo:SIM}
  {\footnotesize
    \begin{tabbing}
      111 \= 11 \= 11 \= 11 \= 11 \= 11 \= 11
      \= \kill
      {\em Assumptions}: non-rigid, $\LU$, $\ell_i$ has $3$ colors($S$, $M$, and $E$), 
      initially $\ell_i=S$;\crm
      {\em Input}: $A_{\mathit{unfair}}$ : algorithm working in unfair SSYNC, snapshot ${\cal SS}_i$ of $r_i$;\crm

      \Cl \> {\bf case}  $cc({\cal SS}_i)$  {\bf of } \crm

      \Cl \> $\in \forall S$: \crm
      \Cl \> \> {\bf if} $r_i$ is  $\aunfair$-enabled {\bf then}\crm
      \Cl\label{ch-alg-here} \> \> \> $r_i$ executes $\aunfair$\crm
      \Cl \> \> \> $des_i \leftarrow$ the computed destination of $\aunfair$\crm
      \Cl \> \> \> $l_i \leftarrow M$\crm
      \Cl \> \> {\bf else} do nothing\crm
      \Cl \> $\in \forall S,M$: \crm
      \Cl \>\> $l_i \leftarrow M$\crm
      \Cl \> $\in \forall M$ {\bf or} $\forall M,E$: 
      \crm
      \Cl \> \> $l_i \leftarrow E$ \crm
      \Cl \> $\in \forall E$ {\bf or} $\forall S,E$:
      \crm
      \Cl \> \> $l_i \leftarrow S$ \crm
      \Cl \> {\bf endcase}
    \end{tabbing}
  }
\end{algorithm}

We show an algorithm in ASYNC that simulates algorithms in unfair SSYNC. The algorithm is shown in Algorithm~\ref{algo:SIM}. The algorithm in square brackets indicates that it is given as input of a robot and simulated. Its transitions of colors of light is shown in Figure~\ref{fig:sim}.The algorithm in square brackets indicates that it is given as input of a robot $r_i$ and simulated. Let $\aunfair$ be a simulated algorithm in unfair SSYNC. 
When a robot $r$ is enabled at a configuration $\conf(t)$ in algorithm $A$, we say that \emph{$r$ is $A$-enabled at $\conf(t)$.}
Our simulating algorithm uses light with 3 colors, S(tay), M(ove), and E(nd). We use the notation $\forall col$  for a color $col$ denoting  a set of configuration such that all robots have color $col$. We also use the notation $\forall col_1,col_2$ for colors $col_1$ and $col_2$ denoting a set of configurations such that each robot has color $col_1$ or $col_2$ and there exists at least one robot with color $col_1$ and there exists at least one robot with color $col_2$. %
Initial configuration is in $\forall S$. This algorithm repeats a \emph{color-cycle}, that is, the transition of $\forall S\rightarrow \forall M\rightarrow \forall E$. %
When the configuration is in $\forall S$, since $\aunfair$-enabled-robots exist, some $\aunfair$-enabled-robots that become active among those execute $\aunfair$, change their colors to $M$, and move to the computed destination. While they move after changing their colors to $M$, other robots change their colors to $M$ until the configuration becomes one in $\forall M$. Note that when the configuration is in $\forall M$, some robots may be still moving. After the robots reach a configuration in $\forall M$, the robots change their colors to $E$ until the configuration becomes one in $\forall E$. In the same way, the configuration changes from a configuration in $\forall E$ to a configuration in $\forall S$. This cycle is repeated until the robots reach a configuration where no robot is $\aunfair$-enabled  and all robots are colored $S$. Each time a configuration in $\forall S$ where one or more robots are $\aunfair$-enabled is reached, at least one of them becomes active and performs $\aunfair$ observing the same configuration. In consequence, this algorithm can simulate algorithms in unfair SSYNC.

\begin{figure}[ht]
  \centering
  \includegraphics[width=0.9 \textwidth,clip]{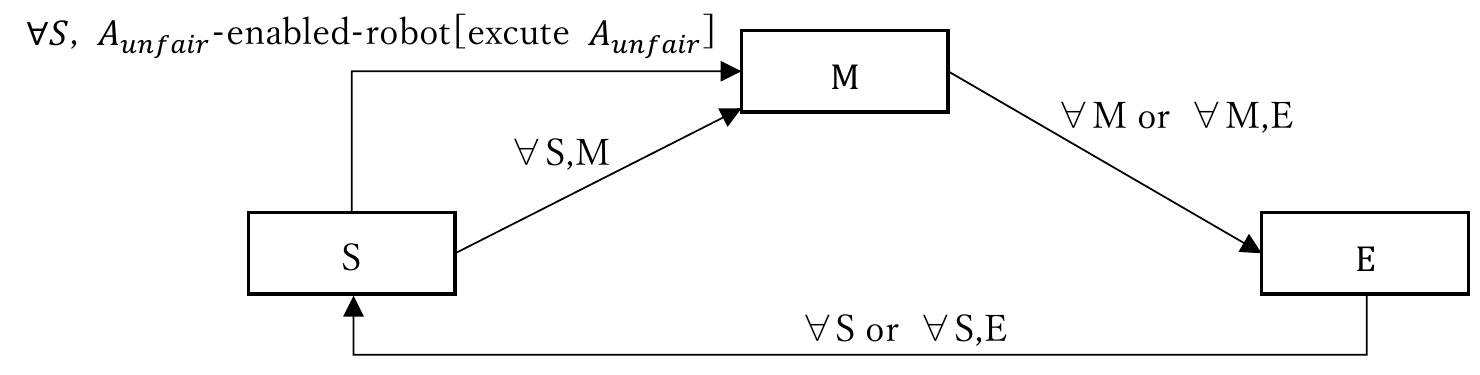}
  \caption{Transition Graph for SIM-for-Unfair($r_i$).}
  \label{fig:sim}
\end{figure}

We show that SIM-for-Unfair($r_i$) simulates $\aunfair$ correctly in ASYNC.

\begin{lemma}\label{lemma:StoM}
 Let the configuration be in $\forall S$ at time $t_S$ and let $R_{e}$ be a set of $\aunfair$-enabled-robots at $t_S$. After $t_S$, the followings hold for SIM-for-Unfair($r_i$) if $R_{e}\neq \emptyset$.
\begin{enumerate}
    \item[(1)] There is a time $t_M>t_S$ at which the configuration is in $\forall M$.
    \item[(2)] There are a time $t_f (t_S < t_f <t_M)$ and a non-empty subset $R'_e$ of $R_e$ such that all robots in $R'_e$ perform $\aunfair$ observing the same configuration and all robots in $R_e-R'_e$ do nothing between $t_S$ and $t_f$.
\end{enumerate}
\end{lemma}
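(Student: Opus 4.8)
The plan is to follow the single color-cycle $\forall S\to\forall M$ started at $t_S$, exploiting the asynchronous separation between the snapshot time (a \LK) and the time at which the newly computed color and destination take effect (the subsequent \CP). I first record the crucial invariant of the $\forall S$ phase: a destination different from the current position is set only on the $S\to M$ branch, where the light is simultaneously set to $M$, so a robot carrying a pending move must have color $M$. Hence in any configuration in $\forall S$ no robot has a pending move, and starting from $t_S$ no robot moves and no light changes until some robot performs a \CP on the $S\to M$ branch. Let $t^*>t_S$ be the first time at which a \CP changes a color. Since no color changes on $(t_S,t^*)$ and the first move cannot begin before $t^*$, every \LK performed in $[t_S,t^*]$ returns exactly the snapshot $\conf(t_S)\in\forall S$; here I use the timing conventions of Section~\ref{sec:model}, namely that a \LK at $t^*$ still reads the pre-$t^*$ colors and that a \LK at the start of a move reads the pre-move position. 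I will take $t_S$ to be the moment the configuration enters $\forall S$, so that $\conf(t_S)$ is the only $\forall S$ snapshot visible in this cycle.

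Next I would establish part~(2). At $\forall S$ the SIM-enabled robots are exactly $R_e$: an $\aunfair$-enabled robot changes its color to $M$ and moves, while every other robot does nothing. As long as no robot acts the configuration stays $\forall S$, so no robot of $R_e$ can become non-enabled without first being activated; the scheduler guarantee therefore forces some robot of $R_e$ to act, and reading $\forall S$ it must take the $S\to M$ branch and execute $\aunfair$, so $t^*$ exists and is realized on this branch. I then define $R'_e\subseteq R_e$ to be those robots whose \LK lies in $[t_S,t^*]$: each reads the identical snapshot $\conf(t_S)$ and, being $\aunfair$-enabled, executes $\aunfair$ on it, and $R'_e\neq\emptyset$ because the robot realizing $t^*$ belongs to it. Setting $t_f$ to be the last \CP time among the robots of $R'_e$ then gives part~(2): every robot of $R'_e$ has executed $\aunfair$ on the common configuration $\conf(t_S)$ by $t_f$, whereas each robot of $R_e-R'_e$ performs its \LK after $t^*$, hence reads a snapshot in $\forall S,M$ and enters the branch that merely sets the light to $M$, so it neither executes $\aunfair$ nor changes its position — this is the sense in which it ``does nothing'' on $[t_S,t_f]$.

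For part~(1) I would argue that after $t^*$ the configuration lies in $\forall S,M$ and that the colors then progress monotonically to $\forall M$. While some robot still carries $S$, every \LK returns a $\forall S,M$ snapshot, so the only transitions the algorithm can trigger are $S\to M$ and the no-op $M\to M$; in particular no robot can advance to $E$ or revert to $S$, and the number of $S$-colored robots never increases. Every robot still colored $S$ — a pending member of $R'_e$, a member of $R_e-R'_e$, or an $\aunfair$-non-enabled robot — is SIM-enabled because it will change its color to $M$, so the scheduler guarantee repeatedly forces one of them to act; as there are finitely many robots, after finitely many steps no $S$ remains, and the first such instant is the desired $t_M\in\forall M$. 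Robots still executing their $\aunfair$-move do not obstruct this, since a moving robot already carries color $M$. Finally $t_S<t^*\le t_f$, and since every robot of $R'_e$ must complete its $S\to M$ \CP before $\forall M$ can be reached we get $t_f<t_M$, which is the ordering the statement requires.

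The main obstacle is precisely this asynchronous gap between \LK and \CP. It is what lets several robots of $R'_e$ commit to the identical snapshot $\conf(t_S)$ even though their color changes, and the moves they trigger, are staggered arbitrarily far past $t^*$; and it is also what forces the weaker reading of ``$R_e-R'_e$ does nothing'' as ``executes no $\aunfair$ step and does not move'' rather than ``is never activated'', because a late \CP of an $R'_e$ robot may legitimately be interleaved after an innocuous $S\to M$ color change of a robot in $R_e-R'_e$. Making the timing conventions carry out exactly this bookkeeping, ruling out stray $A_{\mathit{unfair}}$-executions on an earlier $\forall S$ snapshot (handled by taking $t_S$ as the entry into $\forall S$, since positions stay frozen throughout the phase), and checking that the monotone color progression to $\forall M$ survives the weak unfair-scheduler guarantee while moves are still in flight, is where the real care lies.
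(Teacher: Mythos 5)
Your proof is correct and follows essentially the same route as the paper's: identify the first moment a color changes, collect into $R'_e$ the $\aunfair$-enabled robots that take their snapshot of the frozen configuration $\conf(t_S)$ before that moment, observe that they all execute $\aunfair$ on the identical snapshot while everyone else merely recolors to $M$, and then let the $S\to M$ cascade terminate in $\forall M$ by finiteness and the activation guarantee. The only cosmetic difference is your choice of $t_f$ as the last $\CP$ of $R'_e$ rather than the paper's first $S\to M$ color change, which forces your weaker reading of ``does nothing''; the paper's choice makes that clause hold literally.
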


\begin{proof}
Let  %
$t_f$ be a time at which the first robot changes its color from $S$ to $M$ after $t_S$. Setting $R'_e$ be a set of $\aunfair$-enabled-robots activated between $t_S$ and $t_f$,   
since $\conf(t_S)$ is not changed between $t_S$ and $t_f$, each $\aunfair$-enabled-robot in $R'_e$ observes the same configuration $\conf(t_S)$, the robot performs lines~4-6 in Algorithm~\ref{algo:SIM}. They perform $\aunfair$ with the same configuration and change their colors to $M$.
Any other robot than $R'_e$ does nothing even if it is activated between $t_S$ and $t_f$. Then (2) holds.

Activated robots observing some robot with $M$ after $t_f+1$ change their colors to $M$ (lines~8-9). This continues until $\forall M$ is satisfied and there exists a time $t_M$ stated in (1) due to the fairness of ASYNC. Note that all robots do not finish their $\mathit{LCM}$-cycle at time $t_M$.
\qed
\end{proof}

\begin{lemma}\label{lemma:MtoE}
  If the configuration is in $\forall M$ at time $t_M$, there is a time $t_E$ at which it is in $\forall E$.
\end{lemma}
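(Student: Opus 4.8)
The plan is to mirror the second half of the proof of Lemma~\ref{lemma:StoM}, but the argument is simpler because no robot executes $\aunfair$ during this phase; only the propagation of color from $M$ to $E$ occurs. The target time $t_E$ will be produced as the first instant at which every robot has attained color $E$.

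First I would establish the key invariant that, after $t_M$ and until a configuration in $\forall E$ is first reached, no robot ever carries color $S$. At $t_M$ the configuration is in $\forall M$, so no robot is colored $S$ then. Inspecting Algorithm~\ref{algo:SIM}, the only lines assigning color $S$ are lines~12--13, executed only when a robot observes a snapshot in $\forall E$ or $\forall S,E$. As long as at least one robot still carries color $M$, no such snapshot can arise, so no robot can acquire color $S$; moreover no line reverts a robot from $E$ back to $M$. Hence throughout this phase the colors present are confined to $\{M,E\}$, and whenever the configuration is not yet in $\forall E$ there is at least one robot with color $M$.

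Next I would use this invariant to drive the transition. Any robot activated after $t_M$ while the configuration is not yet in $\forall E$ observes a snapshot containing at least one $M$ and no $S$, hence in $\forall M$ or $\forall M,E$; by lines~10--11 it sets its light to $E$. Thus the set of robots colored $M$ is monotonically non-increasing and strictly decreases each time an $M$-colored robot completes a cycle. A robot still moving at $t_M$ first finishes its $\M$ operation in finite time (its color is already $M$ and is unaffected by the move), after which it may be activated and recolored $E$. By the fairness of ASYNC every robot is eventually activated, so within finite time all robots attain color $E$, yielding a time $t_E$ at which the configuration is in $\forall E$.

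The main obstacle is the invariant of the second paragraph: one must rule out a premature transition to $S$, which in ASYNC could in principle be triggered by a robot whose $\LK$ coincides with the instant the last $M$ becomes $E$. This is handled by the discrete-time convention of Section~\ref{sec:model} (a color set by a $\CP$ at time $t_C$ is observed only from $t_C+1$): any $\LK$ returning $\forall E$ must occur at a time when the configuration already is in $\forall E$, so such a time already witnesses the claim and no harmful transition to $S$ can precede the first occurrence of $\forall E$.
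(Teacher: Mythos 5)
Your proof is correct and follows essentially the same route as the paper's: activated robots observing $\forall M$ or $\forall M,E$ recolor to $E$, and fairness of ASYNC guarantees every robot is eventually activated, so $\forall E$ is reached. You additionally make explicit the invariant (no robot reverts to $S$ or $M$ before $\forall E$ first occurs) that the paper's two-sentence argument leaves implicit, which is a welcome but not essentially different elaboration.
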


\begin{proof}

If the configuration is in $\forall M$ at time $t_M$, activated robots after $t_M$ change their colors from $M$ to $E$ until the configuration in $\forall E$. Since all robots become active after $t_M$ by the fairness of ASYNC, there is a time $t_E$ at which the configuration is in $\forall E$.
\qed
\end{proof}

Note that since robots with $M$ stay when changing their colors from $M$ to $E$,
the configuration at $t_E$ is unchanged until some robots are activated after $t_E$.

The following lemma holds similarly.

\begin{lemma}\label{lemma:EtoS}
  If the configuration is in $\forall E$ at time $t_E$, there is a time $t_S$ at which it is in  to $\forall S$.
\end{lemma}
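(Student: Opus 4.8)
The plan is to mirror the argument used for Lemma~\ref{lemma:MtoE}, exploiting the symmetry of the color-cycle $\forall S \rightarrow \forall M \rightarrow \forall E \rightarrow \forall S$. First I would observe that when the configuration is in $\forall E$ at time $t_E$, every robot that subsequently becomes active reads either $\forall E$ or, once some robots have already switched, $\forall S,E$; in both cases the last case of Algorithm~\ref{algo:SIM} (lines~12--13) dictates $l_i \leftarrow S$ and leaves the destination at the current position, so each activation only recolors a robot from $E$ to $S$ and never moves it.

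Next I would track the monotone progress of the recoloring. As long as at least one robot still carries color $E$, the configuration is in $\forall S,E$ (or still in $\forall E$), and every activation weakly enlarges the set of $S$-colored robots while never reintroducing an $E$. Invoking the fairness of ASYNC — every robot is activated infinitely often — I would conclude that after finitely many activations all robots carry $S$, so there is a time $t_S$ at which the configuration is in $\forall S$, which is exactly the claim.

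The one point that needs checking, and which I expect to be the only real obstacle, is confirming that a robot already recolored to $S$ cannot re-enter the $\forall S$ branch (lines~2--7) and fire $\aunfair$ before the whole team has finished switching. This is ruled out because, while any robot still holds $E$, an activated $S$-robot sees a configuration in $\forall S,E$ rather than in $\forall S$, and the matching case again only sets $l_i \leftarrow S$; hence no robot executes $\aunfair$ or moves during the entire $\forall E \rightarrow \forall S$ phase. As in the remark following Lemma~\ref{lemma:MtoE}, the positions therefore remain frozen throughout this phase, and the configuration reached at $t_S$ stays in $\forall S$ until the next activation.
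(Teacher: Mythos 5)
Your proposal is correct and follows essentially the same route as the paper, which proves this lemma by the same fairness argument used for Lemma~\ref{lemma:MtoE}: every activated robot seeing $\forall E$ or $\forall S,E$ only recolors to $S$ without moving or executing $\aunfair$, so fairness yields a time $t_S$ with all robots colored $S$. Your extra check that an already-recolored $S$-robot cannot prematurely fire $\aunfair$ while some $E$ remains is a sound (and welcome) explicit verification of what the paper leaves implicit.
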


Using Lemmas~\ref{lemma:StoM}-\ref{lemma:EtoS}, we can verify that algorithm SIM-for-Unfair($r_i$) simulates $\OB$-algorithm $\aunfair$ in unfair SSYNC correctly in ASYNC with $3$ colors of $\LU$-light, and
if $\aunfair$ uses $k$ colors of $\LU$-light,  SIM-for-Unfair($r_i$) uses $3k$ colors. Then the following theorem is obtained.

\begin{theorem} \label{theorem:SIM}
  SIM-for-Unfair in ASYNC with $\LU$ of 3k colors simulates algorithms in unfair SSYNC with $\LU$ of k colors.
\end{theorem}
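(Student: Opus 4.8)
The plan is to prove a round-by-round correspondence between the ASYNC execution of SIM-for-Unfair and an execution of $\aunfair$ in unfair SSYNC, using the three color-cycle lemmas as the driving engine. First I would fix the encoding: a robot's $3k$-color light is read as a pair consisting of a \emph{simulation color} in $\{S,M,E\}$ together with one of the $k$ colors of $\aunfair$, where the line ``$r_i$ executes $\aunfair$'' updates only the $\aunfair$-component while lines~8--13 touch only the simulation color. I then define the \emph{projection} $\pi(\conf)$ of a simulation configuration $\conf$ to be the SSYNC configuration obtained by discarding simulation colors and keeping positions together with $\aunfair$-colors. The statement to establish is that the projections of the successive $\forall S$ configurations visited by the simulation form a legal unfair-SSYNC execution of $\aunfair$; transfer of correctness (e.g.\ of Gathering) then follows because $\aunfair$ is assumed correct on every such execution.

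The core argument is an induction on completed color-cycles. The base case is immediate, since the initial configuration is in $\forall S$ and projects to the initial configuration of $\aunfair$. For the step, assume the configuration is in $\forall S$ at time $t_S$ with projection $C_j$. If no robot is $\aunfair$-enabled, every activated robot falls into the do-nothing branch (line~7), the configuration stays in $\forall S$, and this matches a terminated SSYNC execution. Otherwise $R_e\neq\emptyset$, and Lemma~\ref{lemma:StoM} supplies a nonempty $R'_e\subseteq R_e$ whose members all execute $\aunfair$ on the common projected snapshot $C_j$, together with a time $t_M$ at which the configuration lies in $\forall M$; Lemmas~\ref{lemma:MtoE} and~\ref{lemma:EtoS} then carry it through $\forall E$ back to $\forall S$ at some time $t'_S$. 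Because only the robots of $R'_e$ alter their $\aunfair$-color and destination, and do so exactly as $\aunfair$ dictates on $C_j$, the projection $\pi(\conf(t'_S))$ is precisely the result of activating the set $R'_e$ simultaneously on $C_j$, a legal unfair-SSYNC round since $R'_e$ is a nonempty set of enabled robots.

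The step I expect to be the main obstacle is showing that $\pi(\conf(t'_S))$ is a \emph{genuine} SSYNC configuration, i.e.\ that every robot of $R'_e$ has \emph{finished} its move before $t'_S$, so that the recorded positions are the final (possibly non-rigidly truncated, but at least $\delta$) destinations rather than intermediate ones. The key observation is that a robot's move in its $S\to M$ cycle and its later $M\to E$ recoloring belong to two distinct LCM cycles, and in ASYNC a robot cannot begin the $\LK$ of its next cycle until the $\M$ of the current one has terminated. Hence any robot carrying color $E$ has already completed its $S\to M$ movement; since reaching $\forall E$ forces every robot to carry $E$, all movements are complete by the time the configuration is in $\forall E$, and the $E\to S$ transition performs no movement. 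Thus the positions at $t'_S$ are final, and crucially no robot ever feeds an intermediate position into $\aunfair$, because $\aunfair$ is invoked only while the observed configuration is in $\forall S$, which by the same reasoning occurs only after all moves of the previous cycle have settled.

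It remains to discharge the fairness bookkeeping and the color count. The unfair-SSYNC scheduler requires only that whenever enabled robots exist, at least one is eventually activated or becomes non-enabled; in the simulation Lemma~\ref{lemma:StoM} guarantees that from any $\forall S$ with $R_e\neq\emptyset$ a nonempty $R'_e$ actually executes $\aunfair$, so this requirement holds, while enabled robots outside $R'_e$ merely retain their $\aunfair$-color and correspond to enabled robots the SSYNC adversary declined to activate. The light uses $3$ simulation colors times the $k$ colors of $\aunfair$, hence $3k$ colors, which gives the stated bound. Finally, for the converse realizability direction I would note that the ASYNC adversary can schedule activations within each color-cycle so that $R'_e$ equals any prescribed nonempty subset of the currently enabled robots (activate exactly those before the first $S\to M$ change), thereby reproducing an arbitrary legal choice of the unfair-SSYNC scheduler and completing the proof that SIM-for-Unfair simulates $\aunfair$.
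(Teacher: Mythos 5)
Your proposal is correct and takes essentially the same route as the paper: the paper's own proof of this theorem is a one-sentence appeal to Lemmas~\ref{lemma:StoM}--\ref{lemma:EtoS} chained into repeated color-cycles, and your induction over color-cycles with the projection onto $\aunfair$-colors and positions is exactly the elaboration of that argument. Your observation that every robot reaching color $E$ has already completed its move, so that $\aunfair$ is only ever invoked on settled $\forall S$ configurations, makes explicit the key point the paper leaves implicit.
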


\subsection{Gathering Algorithm with Simulation}

In order to show that an algorithm $A$ works in unfair SSYNC, we use a concept of  potential function for $A$, which represents how close current configuration is to the final configuration.   \emph{A potential function} $f_A$ for algorithm $A$ is a function from time $t$ ($\in \N$) to feature value obtained from configuration $\conf(t)$ for the algorithm $A$, which is taken from a total ordered  set. If the potential function $f_A$ for algorithm $A$ working in SSYNC is monotonically decreasing, that is, $f_A(t) > f_A(t+1)$ for any $t$, we can show that the algorithm $A$ can work in unfair SSYNC.  

In the subsequent subsections, we show that the Gathering algorithm with two colors of light shown in SSYNC in~\cite{TWK} can still work in unfair SSYNC by constructing potential functions.

The Gathering algorithm~\cite{TWK} is divided into two sub-algorithms. The first one (called ElectOneLDS) obtains a configuration that all robots are located on one straight line segment (called onLDS) from any initial configuration, and the second one (called $\LU$-Gather) is a Gathering algorithm from any initial  configuration of onLDS.%

We obtain an algorithm with SIM-for-Unfair, ElectOneLDS, and $\LU$-Gather by replacing the line 4
in Algorithm~\ref{algo:SIM} with the line

{\bf if not} onLDS {\bf then} ElectOneLDS($r_i$) 
{\bf else} $\LU$-Gather($r_i$).

\noindent
This algorithm
 simulates ElectOneLDS($r_i$) until onLDS is attained and once onLDS is obtained it simulates $\LU$-Gather and Gathering is completed. If ElectOneLDS and $\LU$-Gather can work in unfair SSYNC, we can show that the combined algorithm solves Gathering in ASYNC.
In the subsequent subsections, We will show that ElectOneLDS and $\LU$-Gather can still work in unfair SSYNC by constructing monotonically decreasing potential functions for the both algorithms.

\subsection{ElectOneLDS works in unfair SSYNC}

ElectOneLDS~\cite{TWK} is the algorithm producing onLDS from any initial configuration and is shown in algorithm~\ref{algo:ElectOneLDS}\footnote{The original algorithm in~\cite{TWK} is slightly different, where in the case of consecutive minimum edges (line~9) all the edges are contracted but here only the rightmost edge is contracted. This change has no effect for the correctness but is necessary to work in ASYNC.}. %
The algorithm changes the control of robots by the configuration that is symmetric or asymmetric and contractible or not. Let ${\cal CH}(\conf(t))$ be the convex hull obtained by a configuration $\conf(t)$ at time $t$. $\conf(t)$ is {\em symmetric} if all edges of the convex hull ${\cal CH}(\conf(t))$ have the same length, otherwise it is {\em asymmetric}. $\conf(t)$ is {\em contractible} 
if $\conf(t)$ satisfies the following conditions;

\begin{enumerate}
\item[(i)] $\conf(t)$ is symmetric, and all robots are on the vertices or on the center of ${\cal CH}(\conf(t))$, or
\item[(ii)] $\conf(t)$ is asymmetric, and all robots are on the vertices or edges.
\end{enumerate}

\Newcodeline
\begin{algorithm}[ht]
  \caption{ElectOneLDS($r_i$)}
  \label{algo:ElectOneLDS}
  {\footnotesize
    \begin{tabbing}
      111 \= 11 \= 11 \= 11 \= 11 \= 11 \= 11
      \= \kill
      {\em Assumptions}: chirality, non-rigid, SSYNC,  ${\cal CH}({\cal SS}_i)$ is the convex hull obtained by ${\cal SS}_i$.\crm

      \Cl  {\bf case} ${\cal CH}({\cal SS}_i)$ {\bf of } \crm

      \Cl \>  non-contractible {\bf and} symmetric: \crm
      \Cl \> \>  {\bf if} $p_i$ is not a vertex nor the center of ${\cal CH}({\cal SS}_i)$ {\bf then} $des_i \leftarrow$ the center of ${\cal CH}({\cal SS}_i)$\crm
      \Cl \>  non-contractible {\bf and} asymmetric: \crm
      \Cl \> \>  {\bf if} $p_i$ is not a vertex {\bf then} $des_i \leftarrow $ the nearest vertex to $p_i$\crm
      \Cl \>  contractible {\bf and} symmetric: \crm
      \Cl \> \>  {\bf if} $p_i$ is not the center ${\cal CH}({\cal SS}_i)$ {\bf then} $des_i \leftarrow p$ the center of ${\cal CH}({\cal SS}_i)$\crm
      \Cl \>  contractible {\bf and} asymmetric: \crm
      \Cl \> \>  {\bf if} ($p_i$ is on a minimum edge {\bf or} on the rightmost edge of consecutive minimum edges)\crm   \> \> \> \>{\bf and} ($p_i$ is not the rightmost vertex of the edge)
      {\bf then} \crm
      \Cl \> \> \>  $des_i \leftarrow$ the rightmost vertex of the edge(s)\crm

      \Cl  {\bf endcase}
    \end{tabbing}
  }
\end{algorithm}

Figure~\ref{fig:ElectOneLDS} shows the transitions between configurations.
We define a potential function $f: \N \rightarrow (\R)^2 \times \N \times (\R)^2$ as follows;
The range of $f$ is a 5-dimensional vector and $f$ is denoted by $<f_1^{area},f_2^{Cdist},f_3^{\#in},f_4^{Edist},f_5^{Vdist}>$.  The order of values of the range set is taken with the lexicographic order.

Let ${\cal CH}(\conf(t))$ be the convex hull obtained from $\conf(t)$ and its vertices are denoted as $<v_0, \ldots, v_{k-1}>$, where the vertices are located counter-clockwise and  the edge $v_0v_1$ is the longest and $v_0$ is the rightmost and topmost vertex in the global coordinate system. Let $p_i (1 \leq i \leq n)$ denote the location of $r_i$.

\begin{enumerate}
    \item The function value $f^{area}_1(t)$ is the area of ${\cal CH}(\conf(t))$. 

\item If $\conf(t)$ is symmetric and not onLDS, the function value $f{^{Cdist}_2}(t)$
is the sum of distances between the center of ${\cal CH}(\conf(t))$ (denoted as $p_c$)  and all robots' locations. Otherwise it is $0$.
That is, 
$$
  f{^{Cdist}_2}(t)= \left\{
  \begin{array}{ll}
    \sum_{i=0}^{n-1}dis(p_c,p_i) &  (\conf(t)\; is\; symmetric) \\
    0                     & (\conf(t)\; is\; asymmetric).
  \end{array}
  \right.
$$
\item If $\conf(t)$ is asymmetric, the function value $f{^{\#in}_3}(t)$ is the number of robots inside ${\cal CH}(\conf(t))$. Otherwise, it is $0$.
\item If $\conf(t)$ is asymmetric, the function value $f{^{Edist}_4}(t)$ the sum of distances traversing the vertices of ${\cal CH}(\conf(t))$ counter-clockwise from  the rightmost and topmost vertex $v_0$ to the locations of robots on ${\cal CH}(\conf(t))$.\\
Let $<v'_0,v'_1, \ldots, v'_{k'-1}>$ be a sequence of the locations of robots on ${\cal CH}(\conf(t))$ such that the locations are listed counter-clockwise from the rightmost and topmost vertex $v_0$. Note that $v'_0=v_0$.
$$
  f{^{Edist}_4}(t)= \left\{
  \begin{array}{ll}
    0                                 & (\conf(t)\; is\; symmetric)\\
    \sum_{i=0}^{k'-1}dist(v_0,v'_i) & (\conf(t)\; is\; asymmetric).
  \end{array}
  \right.
$$

\item If $\conf(t)$ is asymmetric, the function value $f{^{Vdist}_5}(t)$ is the sum of distances from all robots' location to the nearest vertex of 
${\cal CH}(\conf(t))$. That is, letting $nv_i$ be the nearest vertex of ${\cal CH}(\conf(t))$ from the location $p_i$ of  robot $r_i$.\\

$$
  f{^{Vdist}_5}(t)= \left\{
  \begin{array}{ll}
    0                      & (\conf(t)\; is\; symmetric) \\
    \sum_{i=0}^{n-1}dis(nv_i,p_i) & (\conf(t)\; is\; asymmetric).  \end{array}
  \right.
$$

\end{enumerate}

\begin{figure}[ht]
  \centering
  \includegraphics[width=0.9 \textwidth,clip]{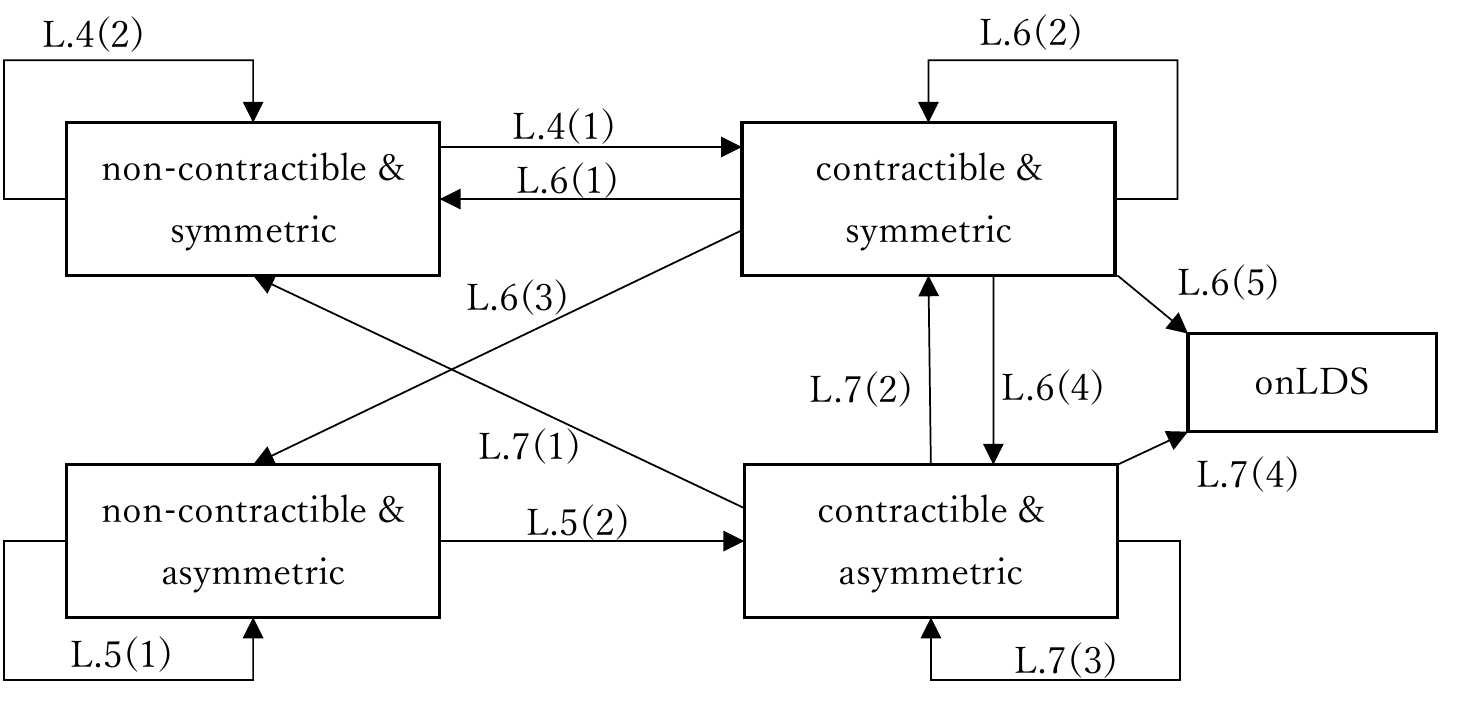}
  \caption{Transition Graph for ElectOneLDS}
  \label{fig:ElectOneLDS}
\end{figure}

\begin{table}[ht]
  \caption{In ElectOneLDS, change(dec. or inc.) of  functions' values from $f(t)$ to $f(t+1)$.}
  \begin{center}
    \begin{tabular}{|c|c|c|c|} \hline
      \multirow{2}{*}{$\conf(t)$} & \multirow{2}{*}{enabled-robots on ($\rightarrow$destination)} & \multirow{2}{*}{$\conf(t+1)$} & change of function \\
       & & & (dec.,inc.) \\ \hline \hline
      s\&nc & points except vertices($\rightarrow$the center of ${\cal CH}(\conf(t))$) & s\&nc or s\&c & ($f_2$,none) \\ \hline

      \multirow{2}{*}{a\&nc} & \multirow{2}{*}{points inside ${\cal CH}(\conf(t))$($\rightarrow$the nearest vertex)} & \multirow{2}{*}{a\&nc or a\&c} & ($f_3$\&$f_5$,$f_4$) \\
       & & & ($f_5$,none) \\ \hline

      \multirow{4}{*}{s\&c} & \multirow{4}{*}{vertices ($\rightarrow$the center of ${\cal CH}(\conf(t))$)} & \multirow{2}{*}{s\&nc, s\&c,} & %
      ($f_1$,none) \\
       & & & ($f_2$,none) \\
       & & a\&nc, a\&c & ($f_1$\&$f_2$,$f_{3-5}$)  \\
       & &or onLDS & ($f_1$\&$f_2$,none) \\ \hline

      \multirow{4}{*}{a\&c} & \multirow{4}{*}{\shortstack{ non-consecutive minimum edge or\\rightmost consecutive minimum edge\\($\rightarrow$the right endpoint of the edge)}} & s\&nc, s\&c, & ($f_1$\&$f_{3-5}$,$f_2$) \\
       & & \multirow{2}{*}{a\&c,} & %
       ($f_1$\&$f_4$,none) \\
              & & & ($f_4$,none) \\
       & & or onLDS & ($f_1$\&$f_{3-5}$,none) \\ \hline
    \end{tabular}
    {\footnotesize Abbreviated notations: "s":symmetric, "as":aymmetric, "c":contractible, "nc":non-contractible, $f_i$ omits the superscript.}
  \end{center}
\label{table:electonelds}  
\end{table}

\begin{lemma}\label{lemma:SymNonC}
  If $\conf(t)$ %
  is symmetric and non-contractible, %
  it holds that $f(t)>f(t+1)$ and
  \begin{enumerate}
      \item[(1)] $\conf(t+1)$ is symmetric and contractible, or
  \item[(2)] $\conf(t+1)$ is symmetric and non-contractible.
    \end{enumerate}
\end{lemma}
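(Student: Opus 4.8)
The plan is to exploit the fact that, when $\conf(t)$ is symmetric and non-contractible, the only component of the potential that can be nonzero besides the area is $f_2^{Cdist}$: by definition $f_3^{\#in}=f_4^{Edist}=f_5^{Vdist}=0$ on every symmetric configuration. So it suffices to show that $f_1^{area}$ stays the same while $f_2^{Cdist}$ strictly drops, and then to read off the case distinction (1)/(2) from where the moved robots end up. First I would identify the enabled robots: by Algorithm~\ref{algo:ElectOneLDS} (line~3), in a symmetric non-contractible configuration a robot is enabled exactly when it sits neither on a vertex nor at the center of ${\cal CH}(\conf(t))$, and in that case its destination is the center $p_c$.

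The core step is to prove that the convex hull is unchanged, i.e. ${\cal CH}(\conf(t+1))={\cal CH}(\conf(t))$. I would argue this as follows. Robots located at the vertices are not enabled (their destination equals their position), hence they do not move, so every vertex of ${\cal CH}(\conf(t))$ is still occupied at $t+1$. Any robot that does move heads toward $p_c$, and by the non-rigid model it advances strictly toward $p_c$ without ever passing it; since $p_c$ lies in the interior, such a robot stays inside ${\cal CH}(\conf(t))$ and can never become a new extreme point. In particular a robot that started on an edge but not a vertex leaves that edge toward the interior, which does not affect the hull because the edge is pinned by its two vertex-robots. Hence no vertex is created or destroyed, ${\cal CH}$ and its center $p_c$ are fixed, $f_1^{area}(t+1)=f_1^{area}(t)$, and $\conf(t+1)$ is again symmetric so that $f_3^{\#in}=f_4^{Edist}=f_5^{Vdist}=0$ persists.

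With $p_c$ fixed, the decrease of $f_2^{Cdist}$ is immediate: a robot that is not activated keeps its distance to $p_c$, while every activated enabled robot strictly decreases $dis(p_c,p_i)$ (it moves at least $\delta$ toward $p_c$, or reaches it, and never overshoots). Since in the round at least one enabled robot is activated, the sum $\sum_i dis(p_c,p_i)$ strictly decreases, giving $f_2^{Cdist}(t)>f_2^{Cdist}(t+1)$; combined with $f_1^{area}$ being equal and the lexicographic order, this yields $f(t)>f(t+1)$. Because the hull stays non-degenerate, $\conf(t+1)$ is not onLDS, so $f_2^{Cdist}(t+1)$ is still the genuine center-distance sum rather than the reset value $0$. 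Finally I would read off the dichotomy: $\conf(t+1)$ is symmetric, and it is contractible (case~(1)) precisely when every moved robot has reached $p_c$ so that all robots now lie on the vertices or the center; otherwise at least one robot is still off both, and $\conf(t+1)$ is symmetric non-contractible (case~(2)).

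I expect the main obstacle to be the hull-invariance step, and specifically making airtight the two claims that (a) robots starting on an edge but not a vertex cannot alter the hull when they move inward, and (b) the non-rigid movement guarantee forbids overshooting $p_c$, so no moving robot ever reaches or crosses to the far side and reshapes ${\cal CH}$. A secondary care point is the convention that the round actually activates some enabled robot, which is what makes the drop in $f_2^{Cdist}$ strict rather than merely non-increasing.
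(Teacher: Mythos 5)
Your proposal is correct and follows essentially the same route as the paper's own (much terser) proof: the hull vertices are pinned by non-enabled robots so $f_1^{area}$ is unchanged, every activated enabled robot moves strictly toward the fixed center so $f_2^{Cdist}$ drops, and the configuration at $t+1$ is contractible exactly when all robots end up on vertices or the center. Your added care about hull invariance, non-overshooting, and the convention that some enabled robot is activated in the round only makes explicit what the paper leaves implicit.
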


\begin{proof}

At time $t$, robots  at points except vertices and the center of ${\cal CH}(\conf(t))$ become enabled. If all enabled-robots move to the center, the configuration becomes symmetric and contractible.
Thus, although $f{^{area}_1}(t)$ is unchanged, and $f{^{Cdist}_2}(t)$ decreases.
Otherwise, the configuration remains symmetric and non-contractible. Then (2) is also holds.
\qed
\end{proof}

\begin{lemma}\label{lemma:ASymNonC}
  If $\conf(t)$ is asymmetric and non-contractible,
  it holds that $f(t)>f(t+1)$ and%
  \begin{enumerate}
\item[(1)] $\conf(t+1)$ is asymmetric and non-contractible, or
  \item[(2)] $\conf(t+1)$ is asymmetric and contractible.
    \end{enumerate}
\end{lemma}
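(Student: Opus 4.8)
The plan is to mirror the structure of the proof of Lemma~\ref{lemma:SymNonC}, tracking each of the five components of $f$ across the single SSYNC round and exploiting the lexicographic order so that a strict decrease in an early component absorbs any increase in a later one. First I would identify the enabled robots: in the asymmetric non-contractible branch (lines~4--5 of Algorithm~\ref{algo:ElectOneLDS}) exactly the non-vertex robots receive a destination, namely their respective nearest vertex of ${\cal CH}(\conf(t))$. Since a robot sitting on a vertex is never enabled, no vertex of the hull is vacated, so ${\cal CH}(\conf(t+1)) = {\cal CH}(\conf(t))$; consequently $f_1^{area}$ is unchanged, all edge lengths are preserved so $\conf(t+1)$ is again asymmetric, and $f_2^{Cdist}=0$ at both times. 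This already pins down the two outcomes in the statement: $\conf(t+1)$ is asymmetric, and it is contractible exactly when no strictly interior robot survives and non-contractible otherwise.

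Next I would establish the lexicographic decrease by a case split on $f_3^{\#in}$, which is the decisive component here. The key monotonicity facts are: (i) $f_3^{\#in}$ can never increase, because an enabled robot only moves toward a hull vertex and no robot ever moves inward; and (ii) each enabled robot strictly decreases its own distance to its nearest vertex, and that nearest vertex does not change along a straight move toward it, so $f_5^{Vdist}$ strictly decreases whenever at least one robot is activated. If some activated interior robot actually reaches its target vertex, then $f_3^{\#in}$ drops by at least one; since $f_1^{area},f_2^{Cdist}$ are fixed and $f_3^{\#in}$ precedes $f_4^{Edist},f_5^{Vdist}$, we obtain $f(t)>f(t+1)$ regardless of the increase of $f_4^{Edist}$ caused by that robot joining the boundary (this is the row of Table~\ref{table:electonelds} in which $f_3$ and $f_5$ decrease while $f_4$ increases). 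If instead no activated interior robot reaches a vertex, then $f_3^{\#in}$ is unchanged, and I must show $f_4^{Edist}$ does not increase while $f_5^{Vdist}$ does decrease, giving the row in which only $f_5$ decreases.

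The hard part is precisely this second sub-case: showing that, when $f_3^{\#in}$ stays fixed, $f_4^{Edist}$ cannot strictly increase. An interior robot that moves but stops short (non-rigid movement) remains strictly interior, so it contributes nothing to $f_4^{Edist}$ and only pushes $f_5^{Vdist}$ down; the danger comes entirely from robots on the boundary, since a robot advancing along an edge toward its counter-clockwise endpoint raises its traversal distance from $v_0$ and hence $f_4^{Edist}$. I would resolve this by arguing that in the asymmetric non-contractible configurations produced by the algorithm every non-vertex robot is strictly interior: an interior robot moving toward a vertex stays interior until it reaches that vertex and so never lands on an edge, which removes the offending boundary moves and makes every $f_4^{Edist}$ increase coincide with a robot reaching a vertex, i.e.\ with an $f_3^{\#in}$ decrease. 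Should genuine edge robots have to be admitted, the fallback is to lean harder on the fixed boundary and the lexicographic order, reconciling any arc-length increase against a simultaneous decrease in an earlier component by careful bookkeeping of traversal distances under non-rigid motion. I expect this reconciliation --- ruling out a pure $f_4^{Edist}$ increase while $f_3^{\#in}$ is frozen --- to be the main obstacle, which is consistent with $f_4$ being the only quantity that ever appears on the ``increase'' side of Table~\ref{table:electonelds}.
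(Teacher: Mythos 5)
Your core argument is the same as the paper's: since no hull vertex is vacated, ${\cal CH}(\conf(t+1))={\cal CH}(\conf(t))$, so $f_1^{area}$ is fixed and the configuration stays asymmetric (contractible exactly when no strictly interior robot remains); then a two-case split on whether some enabled robot reaches its nearest vertex gives either a pure $f_5^{Vdist}$ decrease with $f_3^{\#in},f_4^{Edist}$ frozen, or an $f_3^{\#in}$ decrease that lexicographically absorbs any $f_4^{Edist}$ increase. Where you part company with the paper is precisely the issue you flag as the ``main obstacle'': robots lying on edges of the hull. The paper never confronts it, because its proof opens by declaring that the enabled robots are those \emph{neither on vertices nor on edges} of ${\cal CH}(\conf(t))$; every mover then starts strictly interior, stays strictly interior until it reaches its vertex, and can only decrease $f_5^{Vdist}$ (and possibly $f_3^{\#in}$), so your hard sub-case never arises. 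Note that this reading conflicts with the literal pseudocode of Algorithm~\ref{algo:ElectOneLDS} (line~5 enables every non-vertex robot), and your proposed repair --- an invariant that reachable asymmetric non-contractible configurations contain no edge robots --- is not established anywhere and cannot hold in general, since the lemma must apply to arbitrary initial configurations of this type; worse, a robot on an edge whose nearest vertex is not an endpoint of that edge would leave the boundary and could \emph{increase} $f_3^{\#in}$, so the ``careful bookkeeping'' fallback would not save the argument either. The resolution is simply to adopt the convention the paper's proof uses (only strictly interior robots are enabled in this branch); with that convention your proof coincides with the paper's, and without it both your proof and the literal algorithm have a genuine problem.
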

\begin{proof}

At time $t$, a robot $r_i$ that locates at a point except vertices and edges of ${\cal CH}(\conf(t))$ becomes  enabled and moves to the nearest vertex $nv_i$.

(1) Unless all robots inside ${\cal CH}(\conf(t))$ reach $nv_i$ at $t+1$, $f_1^{area}(t)$, $f_3^{\#in}(t)$ and $f_4^{Edist}(t)$ are unchanged because ${\cal CH}(\conf(t))$ and the number of points on ${\cal CH}(\conf(t))$ do not change. In this case, since there is at least one enabled-robot that moves to $nv_i$,  $f{^{Vdist}_5}(t)$ decreases, and  $f(t)>f(t+1)$. Otherwise, since the number of points on ${\cal CH}(\conf(t))$ increases, $f_4^{Edist}$ increases. However, the number of points inside  ${\cal CH}(\conf(t))$ decreases, and so $f_3^{\#in}(t)$ and $f{^{Vdist}_5}(t)$ decrease. Therefore, comparing in lexicographic order, we have $f(t)>f(t+1)$.

(2)In this case, all robots inside ${\cal CH}(\conf(t))$ at $t$ reach to $nv_i$. Since the number of points on ${\cal CH}(\conf(t))$ increases, $f_4^{Edist}(t)$ increases. However, the number of points inside  ${\cal CH}(\conf(t))$ decreases, and $f_3^{\#in}(t)$ and $f{^{Vdist}_5}(t)$ decrease. Thus, we have $f(t)>f(t+1)$.
\qed
\end{proof}

\begin{lemma}\label{lemma:SymC}
  If $\conf(t)$ is symmetric and contractible,
  it holds that $f(t)>f(t+1)$ and%
  \begin{enumerate}
\item[(1)] $\conf(t+1)$ is symmetric and non-contractible, 
  \item[(2)] $\conf(t+1)$ is symmetric and contractible,
    \item[(3)] $\conf(t+1)$ is asymmetric and non-contractible,
      \item[(4)] $\conf(t+1)$ is asymmetric and contractible, or
        \item[(5)] $\conf(t+1)$ is onLDS.
  \end{enumerate}
\end{lemma}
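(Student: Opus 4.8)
The plan is to prove the strict decrease $f(t) > f(t+1)$ by exploiting the lexicographic order on the five components of $f$, so that it suffices to locate the first component that changes and show it strictly decreases. First I would pin down the behaviour of the algorithm in this case: since $\conf(t)$ is symmetric and contractible, contractibility condition (i) forces every robot to sit either on a vertex or at the center of ${\cal CH}(\conf(t))$, and lines~6--7 of Algorithm~\ref{algo:ElectOneLDS} make exactly the non-center robots (the vertex robots) enabled, each with destination the center. By the activation guarantee at least one enabled robot is activated and, by non-rigidity, moves at least $\delta$ strictly toward the center.

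Next I would split on whether the area $f_1^{area}$ changes. If some vertex of ${\cal CH}(\conf(t))$ is completely vacated (all robots there were activated and moved inward), then that extreme point disappears and the hull strictly shrinks, so $f_1^{area}(t) > f_1^{area}(t+1)$; since $f_1^{area}$ is the leading component, $f(t) > f(t+1)$ follows immediately regardless of the behaviour of the remaining components, which covers the rows of Table~\ref{table:electonelds} in which $f_1$ decreases. In this branch the surviving occupied points may form a smaller regular polygon (symmetric), a polygon with unequal edges (asymmetric), contractible or not, or collapse onto a line or a single point, yielding outcomes (1)--(5).

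The delicate branch is the one in which $f_1^{area}$ is \emph{unchanged}. I would argue this occurs exactly when every vertex retains at least one robot, which is possible only when some vertex carries more than one robot and not all of them are activated. Here the convex hull is the same regular polygon, so $\conf(t+1)$ is again symmetric; it is contractible if every moved robot reached the center (outcome (2)) and non-contractible if some moved robot stopped strictly between a vertex and the center (outcome (1)). Since an activated robot that moved is now strictly closer to the center (its distance drops from the positive vertex--center distance to a smaller value, or to $0$), the sum $f_2^{Cdist}$ strictly decreases while $f_1^{area}$ is fixed, giving $f(t) > f(t+1)$ and matching the $(f_2,\mathrm{none})$ row.

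The main obstacle I anticipate is precisely this area-unchanged branch: it requires careful handling of multiplicities at a vertex and the observation that moving toward the center strictly reduces the distance to it, together with confirming that the hull really is preserved so that $\conf(t+1)$ stays symmetric and hence $f_2^{Cdist}$ is governed by its nonzero formula rather than being reset to $0$. The remaining work is routine: checking that the enumerated successor types in the area-decreasing branch are exhaustive, and in particular identifying when the occupied points become collinear (or coincide at the center), which is exactly the onLDS outcome (5).
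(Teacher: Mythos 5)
Your proposal is correct and follows essentially the same argument as the paper's proof: the decrease is obtained either from $f_1^{area}$ when the convex hull shrinks (some vertex is fully vacated) or from $f_2^{Cdist}$ when the hull is preserved and at least one activated robot moves strictly toward the center; the paper merely organizes these same two observations by the five possible successor types rather than by your hull-changed/hull-unchanged dichotomy.
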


\begin{proof}

At time $t$, robots that locate at vertices of ${\cal CH}(\conf(t))$ become enabled and move to the center $p_c$ of ${\cal CH}(\conf(t))$.

(1)If ${\cal CH}(\conf(t))$ does not change, there is one or more enabled-robot that move to $p_c$, and $f{^{Cdist}_2}(t)$ decreases. Otherwise, there are robots at a vertex of ${\cal CH}(\conf(t))$ at $t$ that reach at $p_c$. Then ${\cal CH}(\conf(t))$ shrinks, and $f_1^{area}(t)$ decreases.

(2)If ${\cal CH}(\conf(t))$ does not change, there is a robot that reaches $p_c$, and $f{^{Cdist}_2}(t)$ decreases. Otherwise, robots at each vertex moves by the same distance, or there are robots at a vertex of ${\cal CH}(\conf(t))$ that reach at $p_c$. Then ${\cal CH}(\conf(t))$ shrinks, and $f_1^{area}(t)$ decreases.

(3) There are robots at a vertex of ${\cal CH}(\conf(t))$ that move inside ${\cal CH}(\conf(t))$. Then ${\cal CH}(\conf(t))$ shrinks, and $f_1^{area}(t)$ decreases.

(4)There are robots at a vertex of ${\cal CH}(\conf(t))$ that move inside ${\cal CH}(\conf(t))$. Then ${\cal CH}$ shrinks, and $f_1^{area}(t)$ decreases.

(5)When all robots not on a vertex at $t$ arrived at the center $p_c$ of ${\cal CH}(\conf(t))$, or ${\cal CH}(\conf(t))$ has the regular polygon that has a diagonal through its center, and all robots not on a diagonal at $t+1$ arrived at $p_c$ of ${\cal CH}(\conf(t))$, onLDS is created at $t+1$. Since onLDS is symmetric, and $f_1^{area}(t+1)$ and $f{^{Cdist}_2}(t+1)$ is $0$, $f(t+1)=<0, 0, 0, 0, 0>$.
\qed
\end{proof}

\begin{lemma}\label{lemma:ASymC}
  If $\conf(t)$ is asymmetric and contractible,
  it holds that $f(t)>f(t+1)$ and%
  \begin{enumerate}
\item[(1)] $\conf(t+1)$ is symmetric and non-contractible, 
  \item[(2)] $\conf(t+1)$ is symmetric and contractible,
    \item[(3)] $\conf(t+1)$ is asymmetric and contractible,
      \item[(4)] $\conf(t+1)$ is onLDS.
  \end{enumerate}
\end{lemma}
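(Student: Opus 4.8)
The plan is to mirror the proofs of Lemmas~\ref{lemma:SymNonC}--\ref{lemma:SymC}: read off from Algorithm~\ref{algo:ElectOneLDS} what the enabled robots do, record which of $f_1,\ldots,f_5$ can change, and then verify $f(t)>f(t+1)$ coordinate by coordinate in lexicographic order, following the last block of Table~\ref{table:electonelds}. In the asymmetric and contractible case all robots lie on vertices or edges of ${\cal CH}(\conf(t))$ by condition~(ii), so $f_3^{\#in}(t)=0$ and $f_2^{Cdist}(t)=0$; the enabled robots are exactly those on the chosen minimum edge (or the rightmost of a run of consecutive minimum edges) other than its right vertex, and each of them moves along that edge toward the right vertex. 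Two structural facts hold throughout the step: no robot ever leaves the boundary, so no robot enters the interior and $f_3^{\#in}$ stays $0$; and a robot moving along a boundary edge can only keep or shrink the hull, so $f_1^{area}$ never increases. I would split the analysis according to whether ${\cal CH}(\conf(t))$ changes.

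First I would dispose of every transition in which the hull changes. This happens precisely when the left endpoint of the contracted edge (and any robots sitting on it) leave their vertex, so the hull loses or displaces a vertex and strictly shrinks; hence $f_1^{area}(t)>f_1^{area}(t+1)$, and since $f_1$ is the leading coordinate, $f(t)>f(t+1)$ regardless of the lower coordinates. This single observation covers case (3) when the edge actually contracts and, crucially, both symmetric outcomes (1) and (2): a configuration can pass from asymmetric to symmetric only if the edge lengths of the hull change, i.e.\ only if the hull changes, so $f_1^{area}$ strictly decreases there and dominates the unavoidable rise of $f_2^{Cdist}$ from $0$ to a positive value that accompanies the switch to a symmetric configuration.

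The remaining possibility for case (3) is that the hull is unchanged, which forces the activated robots to be interior points of the minimum edge that merely slide along it toward the right vertex. Here $f_1^{area}$, $f_2^{Cdist}=0$ and $f_3^{\#in}=0$ are all unchanged, so the decrease must come from $f_4^{Edist}$. I would argue that moving a boundary robot toward the right vertex of its edge strictly decreases its counter-clockwise boundary distance from $v_0$ (and, if it reaches the vertex, deletes its term from $\sum_i dist(v_0,v'_i)$ altogether); since at least one enabled robot advances by at least $\delta$, we get $f_4^{Edist}(t)>f_4^{Edist}(t+1)$ and therefore $f(t)>f(t+1)$. Case (4) is then immediate: when the contractions finish producing onLDS the hull degenerates to a segment of area $0$, so $f_1^{area}(t)>0=f_1^{area}(t+1)$ and in fact $f(t+1)=\langle 0,0,0,0,0\rangle$ exactly as in the onLDS branch of Lemma~\ref{lemma:SymC}.

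I expect the only real work to be the $f_4^{Edist}$ step of the hull-unchanged subcase, because it is the one place where a drop in $f_1$ does not rescue us and I must pin down the orientation: concretely, that the chirality-agreed ``right vertex'' of an edge is its counter-clockwise-first endpoint, so that moving toward it reduces the counter-clockwise arc length from $v_0$ rather than increasing it. The other delicate point is purely bookkeeping in the lexicographic order---checking that in the symmetric outcomes the rise of $f_2^{Cdist}$ is always preceded by a strict drop of $f_1^{area}$---but this is handled uniformly by the ``the hull must change in order to become symmetric'' observation above.
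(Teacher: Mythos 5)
Your proof is correct and follows essentially the same route as the paper's: enabled robots slide along the (rightmost consecutive or non-consecutive) minimum edge toward its right vertex, a hull change forces a strict drop of $f_1^{area}$ (which covers both symmetric outcomes, since becoming symmetric requires the hull to change, and dominates the rise of $f_2^{Cdist}$ lexicographically), an unchanged hull forces a strict drop of $f_4^{Edist}$, and onLDS yields $f(t+1)=\langle 0,0,0,0,0\rangle$. The only difference is organizational --- you split on whether ${\cal CH}(\conf(t))$ changes while the paper enumerates the four outcome types and the sub-cases of its Figure~\ref{fig:as&c} --- but the underlying observations are identical.
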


\begin{figure}[ht]
  \centering
  \includegraphics[width=0.9 \textwidth,clip]{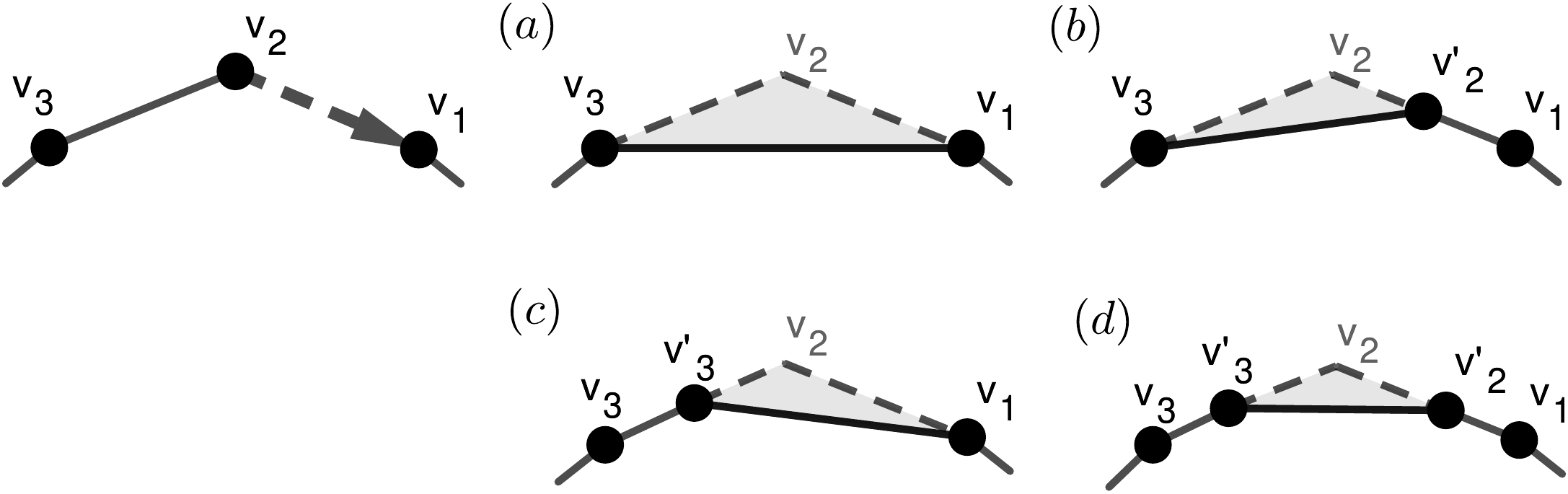}
  \caption{In cases of shrunken ${\cal CH}(\conf(t+1))$ that is asymmetric and contractible}
  \label{fig:as&c}
\end{figure}

\begin{proof}

At time $t$, robots on the rightmost edge of consecutive minimum edges or a non-consecutive minimum edge of ${\cal CH}(\conf(t))$ except the right endpoint become enabled and move to the right endpoint.
If $\conf(t+1)$ is symmetric,  since enabled-robots move along the minimum edge, the edge remains the minimum edge until the edge is contracted. Symmetric configuration is obtained by contracting the edges that is the rightmost edge of consecutive minimum edges or a non-consecutive minimum edge. Then, since ${\cal CH}(\conf(t+1))$ is obtained by removing some vertices from ${\cal CH}(\conf(t))$, ${\cal CH}(\conf(t))$ shrinks, and $f_1^{area}(t)$ decreases. This proves (1) and (2).

If $\conf(t+1)$ is asymmetric, enabled-robots move along the minimum edge. Figure~\ref{fig:as&c} shows the cases where the contracted edge is $v_1v_2$, and ${\cal CH}(\conf(t+1))$ and ${\cal CH}(\conf(t))$ are different due to all robots on $v_2$ move toward $v_1$. In (a), there are not robots on the edge $v_2v_3$, and robots on $v_2$ reach $v_1$. In(b), there are not robots on the edge $v_2v_3$, and robots on $v_2$ move toward $v_1$ and stopped on the edge. In (c), there is a robot on the edge $v_2v_3$, and robots on $v_2$ reach $v_1$. In (d), there is a robot on the edge $v_2v_3$, and robots on $v_2$ move toward $v_1$ and stopped on the edge. In these cases, ${\cal CH}(\conf(t))$ shrinks, and $f_1^{area}(t)$ and $f_4^{Edist}(t)$ decrease.Even if ${\cal CH}(\conf(t))$ does not change, at least one enabled-robot moves along the right vertex, and $f_4^{Edist}(t)$ decreases. Therefore, (3) holds.

When $\conf(t+1)$ is onLDS, $\conf(t)$ is either the triangle that has one or two minimum edges or a rectangle except square. If $\conf(t)$ is the triangle that has one or two minimum edges, and there are no robots on the left-hand edge of contracted edge at $t+1$, onLDS is made by contracting one edge. Note that in the case of the triangle that has two minimum edges, only the rightmost edge of the two is contracted. Since onLDS is symmetric, and $f_1^{area}(t)$ and $f{^{Cdist}_2}(t)$ are $0$, we have $f(t+1)=<0, 0, 0, 0, 0>$. If $\conf(t)$ is a rectangle except square, onLDS is made by contracting the two minimum edge. Since onLDS is symmetric, and $f_1^{area}(t+1)$ and $f{^{Cdist}_2}(t+1)$ are 0, $f(t+1)=<0, 0, 0, 0, 0>$. Therefore, (4) holds.
\qed
\end{proof}

Table~\ref{table:electonelds} summarizes Lemmas~\ref{lemma:SymNonC}-\ref{lemma:ASymC}. Elements of the potential function on $\R$ decrease depending on $\delta$ even if the area of ${\cal CH}$, because robots under non-rigid movement can move at least a distance $\delta$. Therefore, the next theorem holds.

\begin{theorem} \label{theorem:SIMonLDS}
  The potential function $f$ for ElectOneLDS is monotonically decreasing.
\end{theorem}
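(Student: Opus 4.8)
The plan is to obtain the theorem as a direct assembly of the four case-lemmas already established, Lemmas~\ref{lemma:SymNonC}--\ref{lemma:ASymC}, together with an exhaustiveness argument. Recall that the notion of \emph{monotonically decreasing} fixed earlier in this section is simply $f(t) > f(t+1)$ for every $t$, taken in the lexicographic order on the $5$-dimensional range; so the entire content of the theorem is to guarantee a strict drop of $f$ on each step of any SSYNC execution of ElectOneLDS whose configuration is not yet onLDS.

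First I would observe that the two dichotomies used to split the analysis---\emph{symmetric} versus \emph{asymmetric}, and \emph{contractible} versus \emph{non-contractible}---are each a property and its negation, so their conjunctions partition every non-onLDS configuration into exactly one of the four classes s\&nc, a\&nc, s\&c, a\&c. For each class I would invoke the matching lemma: Lemma~\ref{lemma:SymNonC} for s\&nc, Lemma~\ref{lemma:ASymNonC} for a\&nc, Lemma~\ref{lemma:SymC} for s\&c, and Lemma~\ref{lemma:ASymC} for a\&c. Each of these already asserts $f(t) > f(t+1)$ and enumerates the admissible successor classes, while Table~\ref{table:electonelds} records which coordinate of $f$ carries the decrease. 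Crucially, the lemmas are phrased to cover the adversary activating only a \emph{subset} of the enabled robots and to cover non-rigid (partial) moves---exactly the freedom available under unfair SSYNC---so no additional work is needed to adapt them to the unfair scheduler.

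Next I would discharge the one point on which the lemmas implicitly rely, namely that a non-onLDS configuration always contains at least one enabled robot, so that each step is a genuine decrease rather than a no-op. This follows from the definitions of contractibility: in each non-contractible class the witnessing robot (one off the vertices/center when symmetric, one strictly inside when asymmetric) is enabled, and in each contractible class a robot on a vertex (symmetric) or on the minimum, respectively rightmost-consecutive-minimum, edge (asymmetric) is enabled. I would also record the two terminal transitions, Lemma~\ref{lemma:SymC}(5) and Lemma~\ref{lemma:ASymC}(4), which reach onLDS; there $f=\langle 0,0,0,0,0\rangle$ attains the minimum of the range and no robot is enabled, so the descent stops precisely at the goal.

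The assembly itself is routine; the delicate geometry is already sealed inside the lemmas (most of all the asymmetric-contractible case of Lemma~\ref{lemma:ASymC} with its subcases in Figure~\ref{fig:as&c}), so the only real care needed here is (i) the exhaustiveness and mutual exclusivity of the four classes and (ii) the remark that each decrease is bounded below: the integer coordinate $f_3^{\#in}$ drops by at least one, and whenever a real-valued coordinate $f_1^{area}$, $f_2^{Cdist}$, $f_4^{Edist}$, or $f_5^{Vdist}$ is the one that falls, it does so by an amount bounded below in terms of the minimum move distance $\delta>0$, since under non-rigid movement an activated robot advances at least $\delta$ or else reaches its destination. This last observation, though not literally needed for the statement $f(t)>f(t+1)$, is what renders the strict descent well-founded and therefore usable---via the potential-function criterion of this section---to conclude that ElectOneLDS terminates in onLDS under unfair SSYNC.
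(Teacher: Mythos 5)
Your proposal is correct and follows essentially the same route as the paper: the theorem is obtained by assembling the four case-lemmas (Lemmas~\ref{lemma:SymNonC}--\ref{lemma:ASymC}), one for each class in the symmetric/asymmetric $\times$ contractible/non-contractible partition, each of which already asserts $f(t)>f(t+1)$, together with the remark that the real-valued coordinates drop by an amount bounded below in terms of $\delta$ under non-rigid movement. Your additional care about exhaustiveness of the four classes and the existence of an enabled robot in every non-onLDS configuration only makes explicit what the paper leaves implicit.
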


\color{black}

\subsection{$\LU$-Gather works in unfair SSYNC}

$\LU$-Gather~\cite{TWK} is the Gathering algorithm where it starts with any onLDS configuration and it uses $2$ $\LU$-colors ($A$ and $B$) and works in SSYNC. $\LU$-Gather is shown in Algorithm~\ref{algo:FLG-withcc} and the transition between color-\\configurations is depicted in Figure~\ref{fig:FLG}. 
The outline of Algorithm~\ref{algo:FLG-withcc} is as follows; 

The initial configuration of $\LU$-Gather is onLDS  and the algorithm makes  $AA$ via $AA^+A$ and robots at endpoints move to the midpoint after changing its colors to $B$. Then the color-configuration becomes $AB_mA$ or $AB^+A$, $AB^*B$ or $BB^*A$, or $BB^*B$ or $BB_mB$.
\begin{enumerate}
    \item[(1)] For the case that the number of point with $A$ is one ($AB^*B$ or $BB^*A$), the points with $A$ is the Gathering point and so robots with $A$ stay and robots with $B$ move to the point with $A$ and Gathering is attained. 
\item[(2)] For the case that the number of points with $A$ is zero ($BB^*B$), robots at endpoints change their colors to $A$, the color configuration is $AB^*A$ ($AA$, $AB_mA$ or $AB^+A$). The first case is the same as the initial color configuration, but the distance between endpoints is decreased by at least $2\delta$, and the second case is treated in (3).
\item[(3)] For the case that the number of points with $A$ is two ($AB_mA$ or $AB^+A$), after making $AB_mA$ from $AB^+A$, robots at endpoints move to the midpoint after changing their colors to $B$. Then Gathering is attained, or  the color configuration becomes one same as the first transition from the initial configuration but the distance between endpoints is decreased by at least $2\delta$. If the distance between endpoints is at most $2\delta$, since robots at endpoints can reach the midpoint this repetition is finished in finite times and Gathering is attained.
\end{enumerate}
\begin{figure}[ht]
  \centering
  \includegraphics[width=0.9 \textwidth,clip]{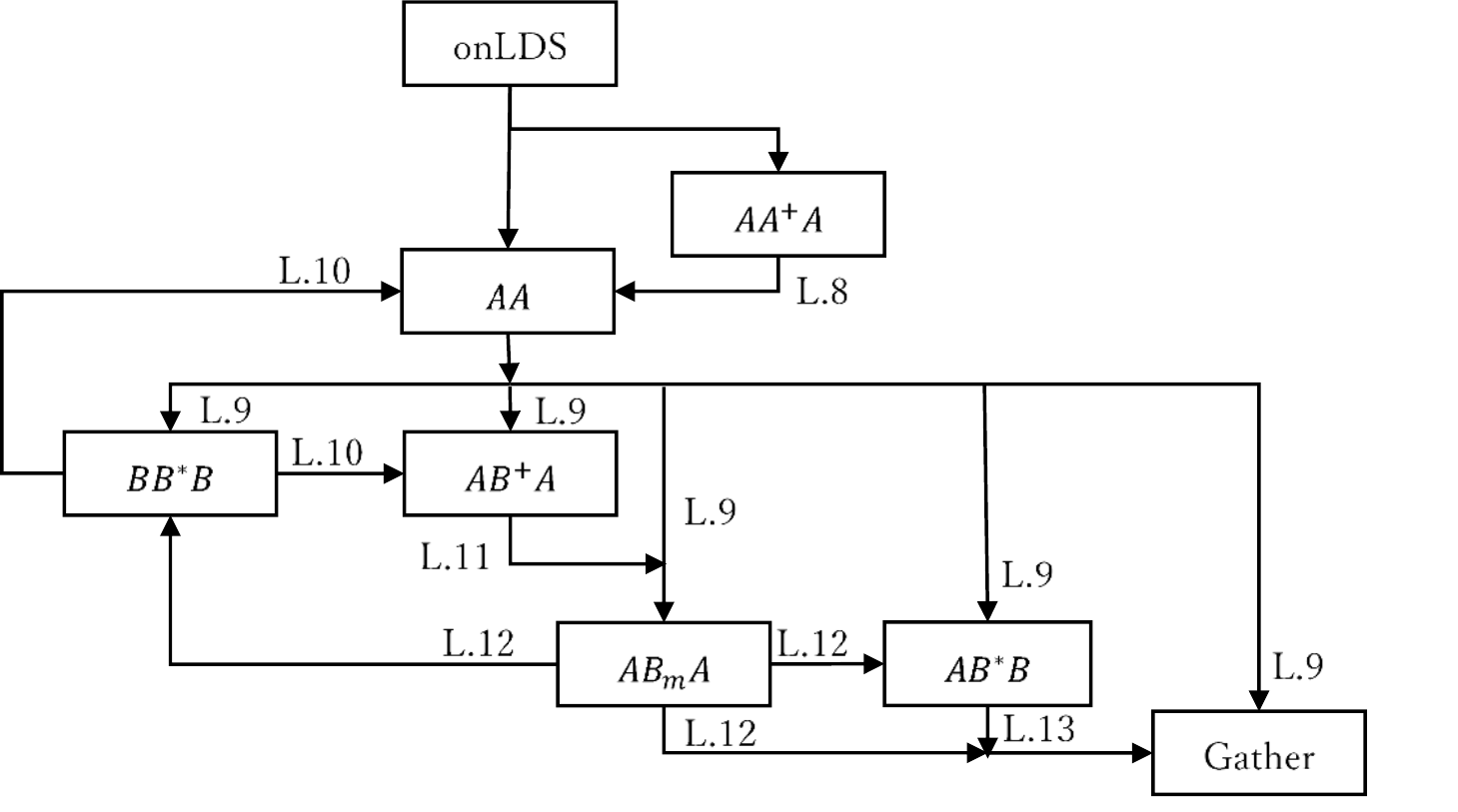}
  \caption{Transition Graph for $\LU$-Gather}%
  \label{fig:FLG}
\end{figure}
\Newcodeline
\begin{algorithm}[ht]
\caption{$\LU$-Gather($r_i$)}
\label{algo:FLG-withcc}
{\footnotesize
\begin{tabbing}
111 \= 11 \= 11 \= 11 \= 11 \= 11 \= 11 \= \kill
{\em Assumptions}: $\LU$, $2$ colors($A$ and $B$), non-rigid, SSYNC, initially $\ell_i=A$, \crm
      \>\>\>\> let $cc({\cal SS}_i)$ denote  color-configuration of snapshot ${\cal SS}_i$ of $r_i$, \crm
      \>\>\>\> initial configuration is onLDS.\crm

\Cl \> {\bf case} $cc({\cal SS}_i)$  {\bf of } \crm

\Cl \> $\in \forall A$: \crm
\Cl \> \> {\bf if} $cc({\cal SS}_i)=A$ {\bf then} $des_i \leftarrow p_i$ // gathered!\crm
\Cl \> \> {\bf else if} $cc({\cal SS}_i)=AA$ {\bf then}  $\ell_i \leftarrow B$; $des_i \leftarrow (p_n + p_f)/2$\crm
\Cl \> \> {\bf else} //$cc({\cal SS}_i)=AA^+A$\crm
\Cl \> \> \> {\bf if} $(p_i = p_n)$ {\bf then} $des_i \leftarrow p_i$ //If $r_i$ is at either endpoint, it does not move.\crm 
\Cl \> \> \> {\bf else} $des_i \leftarrow p_n$//To make $AA$ from $AA^+A$\crm%
\Cl \> $\in \forall B$: \crm
\Cl \>\> {\bf if} $cc({\cal SS}_i)=BB^*B$ {\bf and} $(p_i = p_n)$ {\bf then} $\ell_i \leftarrow A$//To make $AB^*A$ from $BB^*B$\crm
\Cl \> \>{\bf else}  do nothing\crm
\Cl \> $\in \forall A, B$: \crm
\Cl \> \> {\bf if} $\ell_i = A$ {\bf then}\crm
\Cl \> \> \>{\bf if} $cc({\cal SS}_i)=AB^*B$ (or $BB^*A$)  {\bf then} $des_i \leftarrow p_i$ // stay\crm
\Cl \> \>  \>{\bf else if} $cc({\cal SS}_i)=AB_mA$ {\bf then}\crm
\Cl \> \> \>\> $\ell_i \leftarrow B$ \crm
\Cl \> \> \> \>$des_i \leftarrow  (p_n + p_f)/2$\crm
\Cl \> \> {\bf else} // $\ell_i = B$\crm 
\Cl \> \> \>{\bf if} $cc({\cal SS}_i)=AB^*B$ (or $BB^*A$) {\bf then} $des_i \leftarrow$ point with $A$ \crm
\Cl \> \> \>{\bf else} $des_i \leftarrow (p_n + p_f)/2$ //$cc({\cal SS}_i)=AB^+A$ \crm 

\Cl \> {\bf endcase} 
\end{tabbing}
}
\end{algorithm}

We define a potential function  $g: \N \rightarrow (\R\cup\{\infty\})^3 \times \N \times \R$ as follows;
The range of $g$ is a 5-dimensional vector and $g$ is denoted by $<g_1^{Adist},g_2^{Edist},g_3^{Mdist},g_4^{\#B},g_5^{\mathit{NEdist}}>$.

\begin{enumerate}
    \item 
If $\#_A(\conf(t)) =1$ then the function vaule $g{^{Adist}_1}(t)$ is the sum of distances between the position with color $A$ (denoted as $p_A$) and all robot's position, otherwise, it is $\infty$.
That is,
$$
  g{^{Adist}_1}(t)= \left\{
  \begin{array}{ll}
    \sum_{i=0}^{n-1} dis(p_A,p_i) & (\#_A(\conf(t))=1)     \\
    \infty                & (\#_A(\conf(t))> 1).
  \end{array}
  \right.
$$
\item
In the case that $\#_A(\conf(t)) =0$ or $2$,  the function value $ g{^{Edist}_2}(t)$ is the distance between two endpoints.
If  $\#_A(\conf(t)) =1$, then it is $0$, otherwise it is $\infty$.%
$$
  g{^{Edist}_2}(t)= \left\{
  \begin{array}{lll}
    0         & (\#_A(\conf(t)) =1)                               \\
    dis(\conf(t)) & (\#_A(\conf(t)) =0,2) \\
    \infty    & (\#_A(\conf(t)) \geq 3).
  \end{array}
  \right.
$$
\item In the case that $\#_A(\conf(t)) =0$ or $2$,  the function value 
$g{^{Mdist}_3}(t)$ is the sum of distances between every robot's position and the midpoint of LDS (denoted as $p_m$). 
If  $\#_A(\conf(t)) =1$, then it is $0$, otherwise it is $\infty$. That is,
$$
  g{^{Mdist}_3}(t)= \left\{
  \begin{array}{lll}
    0                     & (\#_A(\conf(t)) =1)                                \\
    \sum_{i=0}^{n-1}dis(p_m,p_i) & (\#_A(\conf(t)) =0,2) \\
    \infty                & (\#_A(\conf(t)) \geq 3).
  \end{array}
  \right.
$$
\item In the case that $\#_A(\conf(t)) =0$ or $2$,  the function value 
$g{^{\#B}_4}(t)$ is the number of robots with color $B$. If  $\#_A(\conf(t)) =1$, then it is $0$, otherwise it is $\infty$. Let $R_B=\{r|r$ has color $B \}$.\\
$$
  g{^{\#B}_4}(t)= \left\{
  \begin{array}{lll}
    0      & (\#_A(\conf(t)) =1)                                \\
    |R_B|  & (\#_A(\conf(t)) =0,2) \\
    \infty & (\#_A(\conf(t)) \geq 3).
  \end{array}
  \right.
$$
\item If $\#_A(\conf(t))  \leq 3$, the function value
$g{^{\mathit{NEdist}}_5}(t)$ is the sum of distances between every robot's position $p_i$ and the nearest endpoint to  $p_i$ (denoted as $np_i$), otherwise it is $0$. %
$$
  g{^{\mathit{NEdist}}_5}(t)= \left\{
  \begin{array}{ll}
    0                     & (\#_A(\conf(t))  > 3)     \\
    \sum_{i=0}^{n-1}dis(np_i,p_i) & (\#_A(\conf(t))  \leq 3).
  \end{array}
  \right.
$$
\end{enumerate}

Note that following the definition of $g$,\\
if $|{\cal P}_A({\conf(t)})|=1$, then
$$g(\conf(t))=<\sum_{i=1}^n|p_A-p_i|, 0, 0, 0, 0>\conf(t),$$
if $|{\cal P}_A({\conf(t)})|=0$, or $|{\cal P}_A({\conf(t)})|=2$, then
$$g(\conf(t))=<\infty,|p_r-p_l|, \sum_{i=1}^n|p_i-p_m|, |R_B|, 0>\conf(t)$$,
if $|{\cal P}_A({\conf(t)})|\geq 3$, then
$$g(\conf(t))=<\infty, \infty, \infty, \infty, \sum_{i=1}^n|p_n-p_i|>\conf(t).$$

\begin{table}[ht]
  \caption{In $\LU$-Gather, change(dec. or inc.) of  functions' values from $g(t)$ to $g(t+1)$.}
  \begin{center}
    \begin{tabular}{|c|c|c|c|} \hline
      \multirow{2}{*}{$cc(\conf(t))$} & \multirow{2}{*}{enabled-robots on($\rightarrow$destination\&color)} & \multirow{2}{*}{$cc(\conf(t+1))$} & change of function\\
       & & & (dec.,inc.)\\
      \hline \hline
      \multirow{2}{*}{$AA^+A$} & \multirow{2}{*}{}points except endpoints & $AA^+A$ & ($g_5$,none)\\
       &($\rightarrow$the nearest $p_e$\&unchanged) & or $AA$ & ($g_{2-5}$,none) \\ \hline

      \multirow{4}{*}{$AA$} & \multirow{4}{*}{ endpoints($\rightarrow$the midpoint\&$B$)} & $AB^+A$, $AB_mA$, & ($g_3$,$g_4$) \\
       & & $AB^*B$, & ($g_{1-4}$,none) \\
              & & $BB^*B$, & ($g_2$,$g_4$) \\
       & & or Gather & ($g_2$\&$g_3$,$g_4$) \\\hline

      \multirow{2}{*}{$BB^*B$}  & \multirow{2}{*}{ endpoints($\rightarrow$stay\&$A$)}     & $AB^*A$& ($g_4$,none) \\
       & & or $AB^*B$ & ($g_{1-4}$,none) \\ \hline
\multirow{2}{*}{$AB^+A$}  & points except endpoints & $AB^+A$& \multirow{2}{*}{($g_3$,none)} \\
       &($\rightarrow$the midpoint\&unchanged) & or $AB_mB$ & \\ \hline

      \multirow{4}{*}{$AB_mA$} & \multirow{4}{*}{ endpoints($\rightarrow$the midpoint\&$B$)} & $AB^+A$, & ($g_3$,$g_4$)\\
       & & $AB^+B$, & ($g_{1-4}$,none)\\
       & & $BB^*B$, & ($g_2$,$g_4$) \\
       & &or Gather & ($g_2$\&$g_3$,$g_4$) \\ \hline
\multirow{2}{*}{$AB^*B$}  & points except endpoint with $A$  & $AB^*B$& \multirow{2}{*}{($g_1$,none)} \\
       &($\rightarrow$the endpoint with $A$\&unchanged) & or Gather & \\ \hline
    \end{tabular}
    {\footnotesize Function $g_i$ omits the superscript.}
  \end{center}\label{table:LU-Gather}
\end{table}

\begin{lemma}\label{lemma:AAA}
  If $cc(\conf(t))=AA^+A$, it holds that  $g(t)>g(t+1)$.
\end{lemma}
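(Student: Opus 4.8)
The plan is to read the one-step dynamics of $\LU$-Gather on an $AA^+A$ configuration off Algorithm~\ref{algo:FLG-withcc} and then compare $g(t)$ with $g(t+1)$ in lexicographic order. Since $cc(\conf(t))=AA^+A$ means every robot carries colour $A$, both endpoints are occupied, and at least one interior point is occupied, we have $\#_A(\conf(t))\ge 3$, so $g(t)=\langle\infty,\infty,\infty,\infty,g_5^{\mathit{NEdist}}(t)\rangle$ and the only finitely‑valued coordinate is $g_5$, the total distance from the robots to their nearest endpoints. Because all colours equal $A$, the $cc=AA^+A$ branch of the $\forall A$ case (lines~5--7) keeps every endpoint robot at rest ($des_i\leftarrow p_i$) and sends every interior robot, with unchanged colour, toward the endpoint $p_n$ nearest to it; thus the enabled robots are exactly the interior ones, no colour ever changes, and the two endpoints remain occupied. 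Hence $cc(\conf(t+1))$ is either $AA^+A$ (some interior point survives) or $AA$ (every interior robot has reached an endpoint), matching the two subrows for $AA^+A$ in Table~\ref{table:LU-Gather}.

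First I would dispose of the collapse case $cc(\conf(t+1))=AA$. Then $\#_A(\conf(t+1))=2$, so $g_2^{Edist}(t+1)=dis(\conf(t+1))$ is finite while $g_1^{Adist}(t+1)=\infty=g_1^{Adist}(t)$. The first coordinate ties and the second drops from $\infty$ to a finite value, so $g(t)>g(t+1)$ regardless of the remaining coordinates.

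The substantive case is $cc(\conf(t+1))=AA^+A$, where $\#_A(\conf(t+1))\ge 3$ again, so $g_1,\dots,g_4$ are pinned at $\infty$ in both vectors and the whole comparison reduces to $g_5(t)>g_5(t+1)$. I would argue this by fixing the geometry: the endpoints do not move, so the segment and its two endpoints are the same at $t$ and $t+1$, each robot's nearest endpoint $np_i$ is well defined, and it does not flip --- an activated interior robot moves away from the midpoint along the half it already occupies toward $p_n$ (a midpoint robot, equidistant from both endpoints, heads to the prescribed $p_n$ and only afterward acquires a strictly nearer endpoint), so it never crosses the midpoint. Thus $g_5$ is a sum over a fixed assignment of robots to fixed endpoints: every non‑activated robot keeps its summand, and every activated interior robot strictly shrinks its summand --- by at least $\delta$ under non‑rigid movement, or to $0$ if it was within $\delta$. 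Since at least one interior robot (the only enabled robots here) is activated in the round, at least one summand strictly decreases and none increases, giving $g_5(t)>g_5(t+1)$ and hence $g(t)>g(t+1)$.

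The step I expect to be the main obstacle is the careful accounting of $g_5^{\mathit{NEdist}}$ when more than one interior point is present, i.e. $\#_A(\conf(t))>3$. One must be sure that $g_5$ actually records progress toward the endpoints throughout the entire $AA^+A$ phase, so that it is strictly positive and strictly decreasing rather than clamped to $0$, and that emptying one interior point into an endpoint while another survives neither disturbs the higher coordinates (it does not: they stay $\infty$ as long as $\#_A\ge 3$) nor artificially inflates $g_5$. Pinning down the invariance of $np_i$ across the move and the midpoint tie‑break are the delicate points; once they are settled, the lexicographic comparison in both cases is routine.
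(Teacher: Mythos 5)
Your proposal is correct and follows essentially the same route as the paper's proof: a case split on whether $\conf(t+1)$ is still $AA^{+}A$ (where only $g_5^{\mathit{NEdist}}$ matters and it strictly decreases because at least one enabled interior robot moves toward its nearest endpoint) or has collapsed to $AA$ (where $g_2^{Edist}$ drops from $\infty$ to a finite value). You merely spell out details the paper leaves implicit, such as the invariance of the nearest-endpoint assignment and the lexicographic tie at the $\infty$-valued coordinates.
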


\begin{proof}
At time $t$, robots locating inside onLDS become enabled and move to the nearest endpoint $p_n$. In the case of $\#_A(\conf(t)) \geq 3$, there is at least one enabled-robot that moves to $p_n$, and $g{^{\mathit{NEdist}}_5}(t)$ decreases. In the case of $\#_A(\conf(t))= 2$, $g{^{Edist}_2}(t)$ decreases from $\infty$ to the distance of the segment. Therefore, this lemma holds.
\qed
\end{proof}

\begin{lemma}\label{lemma:AA}
  If $cc(\conf(t))=AA$, it holds that $g(t)>g(t+1)$.
\end{lemma}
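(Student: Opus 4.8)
The plan is to read off $g(t)$ directly from the hypothesis and then organize everything around the value of $\#_A(\conf(t+1))$. At $cc(\conf(t))=AA$ every robot sits on one of the two endpoints of the segment and carries color $A$, so $\#_A(\conf(t))=2$ and, by the definition of $g$,
$$g(t)=\langle \infty,\ dis(\conf(t)),\ \textstyle\sum_{i}dis(p_m,p_i),\ 0,\ 0\rangle .$$
Every robot is here $\LU$-Gather-enabled, and by line~4 of Algorithm~\ref{algo:FLG-withcc} each activated robot recolors itself $B$ and heads for the midpoint $p_m$. Since that destination differs from its current endpoint, an activated robot always advances a positive distance toward $p_m$ (at least $\delta$ by non-rigidity), which is the single geometric fact I will lean on.

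The first real step is the observation that an endpoint still carries color $A$ at time $t+1$ exactly when at least one robot there was \emph{not} activated, and that no new $A$-point can arise inside the segment (all activated robots turn $B$). Hence $\#_A(\conf(t+1))\in\{0,1,2\}$, and these three values control which coordinate of $g$ is the first to change in the lexicographic order. I would treat them in turn.

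For $\#_A(\conf(t+1))=2$ a robot survives at each endpoint, so both endpoints are fixed and $p_m$ is unchanged; thus $g_1=\infty$ and $g_2=dis(\conf(t))$ are unchanged while at least one activated robot has strictly shortened its distance to $p_m$, so $g_3$ strictly decreases and $g(t)>g(t+1)$. For $\#_A(\conf(t+1))=1$ exactly one endpoint keeps an $A$-robot (giving $AB^*B$ or $BB^*A$): here $g_1$ drops from $\infty$ to the finite value $\sum_i dis(p_A,p_i)$, which alone forces $g(t)>g(t+1)$. For $\#_A(\conf(t+1))=0$ both endpoints were fully activated, so every endpoint robot moved inward by at least $\delta$ (or all collapse to $p_m$, i.e.\ Gather); the extreme robot positions then lie strictly inside $[p_l,p_r]$, whence $dis(\conf(t+1))<dis(\conf(t))$ so $g_2$ strictly decreases while $g_1=\infty$ is unchanged. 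In every case the first changed coordinate decreases, in agreement with Table~\ref{table:LU-Gather}.

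The delicate points I would write out carefully are (i) that in the $\#_A=2$ case the reference point $p_m$ is genuinely unchanged, so the decrease of $g_3$ is not confounded by a shifting midpoint, and (ii) that in the $\#_A=0$ case the segment truly shrinks: I must rule out an activated robot overshooting or ``crossing'' $p_m$, which cannot happen under non-rigid motion aimed at $p_m$, so the new leftmost and rightmost positions are bounded away from $p_l$ and $p_r$ by at least $\delta$ (and coincide at $p_m$ only in the Gather case). The increases of $g_4$ in the first and third cases are harmless, since they sit in a coordinate strictly later than the one that decreases; that is really the only place where the lexicographic ordering does the work for us.
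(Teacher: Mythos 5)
Your proof is correct and follows essentially the same route as the paper: a case split on $\#_A(\conf(t+1))\in\{0,1,2\}$, showing that $g_3$ decreases (endpoints and $p_m$ fixed) when two $A$-points remain, $g_1$ drops from $\infty$ when one remains, and $g_2$ strictly shrinks when none remain, with later-coordinate increases absorbed by the lexicographic order. The only cosmetic difference is that you fold the Gather subcase into $\#_A(\conf(t+1))=0$ rather than treating it separately as the paper does, which is equally valid.
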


\begin{proof}

At time $t$, robots at the endpoints become enabled and move to the midpoint $p_m$ of the two endpoints. In the case that $\#_A(\conf(t+1))= 2$, the endpoints do not change, and at least one enabled robot changes its color to $B$ and moves to $p_m$. Thus, $g{^{\#B}_4}(t)$ increases, but $g{^{Mdist}_3}(t)$ decreases. 
In the case that $\#_A(\conf(t+1))= 1$, $g{^{Adist}_1}(t)$ decreases from $\infty$ to $\sum_{i=0}^{n-1}dis(p_a,p_i)$.
In the case that $\#_A(\conf(t+1))=0$ and not Gather, all robots change their colors to $B$ and move to $p_m$. Then $g{^{\#B}_4}(t)$ increase, and $g{^{Mdist}_3}(t)$ may also increase, but $g{^{Edist}_2}(t)$ decreases. In the case that $\#_A(\conf(t+1))=0$ and Gather, since $g{^{Edist}_2}(t+1)$, $g{^{Mdist}_3}(t+1)$, and $g{^{\#B}_4}(t+1)$ are 0, $g(t+1)=<\infty, 0,0,0,0>$. Therefore, $g(t)>g(t+1)$.
\qed
\end{proof}

\begin{lemma}\label{lemma:BBB}
  If $cc(\conf(t))=BB^*B$, it holds that $g(t)>g(t+1)$.
  
\end{lemma}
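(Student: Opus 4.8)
The plan is to read the five coordinates of $g$ off the hypothesis and then track how a single (unfair) SSYNC step can move them. Since $cc(\conf(t))=BB^*B$ means every robot is colored $B$, we have $\#_A(\conf(t))=0$, which puts us in the ``$\#_A\in\{0,2\}$'' branch of the definition of $g$: thus $g_1^{Adist}(t)=\infty$, $g_2^{Edist}(t)=dis(\conf(t))$, $g_3^{Mdist}(t)=\sum_i dis(p_m,p_i)$, and $g_4^{\#B}(t)=|R_B|=n$. The decisive structural observation is that in configuration $BB^*B$ the only active $\LU$-Gather rule (line~8 of Algorithm~\ref{algo:FLG-withcc}) is taken by the robots at an endpoint and consists solely of the recoloring $B\to A$ with destination equal to the current position, while all interior robots (line~9) do nothing. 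Hence \emph{no robot moves during this step}, so every position-dependent coordinate of $g$ is frozen and only the coloring is free to change.

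Next I would use the unfair-SSYNC progress guarantee: the endpoint robots are precisely the $\aunfair$-enabled robots at $\conf(t)$, and a genuine transition must activate at least one of them, so at least one endpoint robot recolors to $A$. Because there are exactly two endpoints and no interior robot changes, this forces $\#_A(\conf(t+1))\in\{1,2\}$, and the whole argument splits on this integer rather than on the color-configuration string.

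If $\#_A(\conf(t+1))=1$, the decrease is immediate at the top coordinate: $g_1^{Adist}$ drops from $\infty$ to the finite value $\sum_i dis(p_A,p_i)$, so $g(t)>g(t+1)$ lexicographically whatever the lower coordinates do. If instead $\#_A(\conf(t+1))=2$, then $g_1^{Adist}(t+1)=\infty$ is unchanged and I would descend the order: since no robot moved, the endpoint separation and all distances to the midpoint are identical at $t$ and $t+1$, giving $g_2^{Edist}(t+1)=g_2^{Edist}(t)$ and $g_3^{Mdist}(t+1)=g_3^{Mdist}(t)$; but the activated endpoint robots turned from $B$ to $A$, so $g_4^{\#B}=|R_B|$ strictly decreased from $n$. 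Either way $g(t)>g(t+1)$.

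The step I expect to be the main obstacle is the bookkeeping created by \emph{partial} activation: under the unfair scheduler the adversary may wake only a proper subset of the robots stacked at one endpoint, producing endpoints of mixed color that the clean symbols $AB^*A$ and $AB^*B$ of Table~\ref{table:LU-Gather} do not literally describe. I would sidestep this entirely by arguing through the integer $\#_A(\conf(t+1))$, which is insensitive to such mixing and which alone selects the applicable branch of $g$; the two values $1$ and $2$ then exhaust the possibilities. A secondary point worth stating explicitly is that ``at least one enabled robot is activated'' is exactly what rules out a vacuous step in which $g$ is merely non-increasing, so that the inequality comes out strict.
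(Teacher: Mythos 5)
Your proof is correct and follows essentially the same route as the paper's: observe that in $BB^*B$ only endpoint robots are enabled and they merely recolor $B\to A$ without moving, then split on $\#_A(\conf(t+1))\in\{1,2\}$, getting a drop in $g_1^{Adist}$ from $\infty$ in the first case and a drop in $g_4^{\#B}$ with $g_2,g_3$ frozen in the second. Your explicit handling of partial activation at an endpoint is a welcome extra level of care, but it does not change the argument.
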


\begin{proof}
At time $t$, robots at the endpoints become enabled and change their colors to $A$. 
If $\#_A(\conf(t+1))= 2$, clearly the number of robots with $A$ increases, and only $g{^{\#B}_4}(t)$ decreases. 
Otherwise $\#_A(\conf(t+1))= 1$, $g{^{Adist}_1}(t)$ decreases from $\infty$ to $\sum_{i=0}^{n-1}dis(p_a,p_i)$. Therefore, $g(t)>g(t)$.
\qed
\end{proof}

\begin{lemma}\label{lemma:ABpA}
  If $cc(\conf(t))=AB^+A$, it holds that that $g(t)>g(t+1)$.
\end{lemma}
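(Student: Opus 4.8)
The plan is to reuse, almost verbatim, the mechanism of Lemma~\ref{lemma:AAA}: in $AB^+A$ the enabled robots are exactly the interior robots that have not yet consolidated, and they move monotonically toward a fixed target---here the midpoint $p_m=(p_n+p_f)/2$---while the outer skeleton of the configuration (the two $A$-endpoints) is left untouched. First I would read off from Algorithm~\ref{algo:FLG-withcc} the behaviour of each robot when $cc({\cal SS}_i)=AB^+A$. The two endpoint robots carry color $A$ and enter the branch of line~12; since $AB^+A$ is neither $AB^*B$ (line~13) nor $AB_mA$ (line~14), they set no destination and keep color $A$, hence they are not enabled. Every interior robot carries color $B$, enters the branch of line~17, and---being in $AB^+A$---executes line~19, setting $des_i\leftarrow(p_n+p_f)/2=p_m$ and keeping color $B$. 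Thus the enabled robots are precisely the interior $B$-robots not already at $p_m$, and at least one exists because $AB^+A$ (as distinguished from $AB_mA$) contains an interior $B$-point off the midpoint.

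Next I would verify the structural invariants that pin down which regime of the piecewise definition of $g$ we are in. Since the endpoints neither move nor recolor and no interior $B$-robot changes color, after any adversarial activation of a subset of the enabled robots we still have $\#_A(\conf(t+1))=2$. Moreover an interior $B$-robot heading for $p_m$ moves toward the centre and away from its nearer endpoint, so it stays strictly inside the segment and can never land on an endpoint; this is the key invariant that keeps $\#_A$ equal to $2$. Consequently $g_1^{Adist}(t+1)=\infty$ unchanged, $g_2^{Edist}(t+1)=dis(\conf(t+1))=dis(\conf(t))$ unchanged because the endpoints are fixed, and $g_4^{\#B}(t+1)=|R_B|$ unchanged because no recoloring occurs. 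Hence the first, second, and fourth coordinates of $g$ agree at $t$ and $t+1$, and by the lexicographic order the comparison collapses onto $g_3^{Mdist}$.

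Finally I would establish the strict decrease of $g_3^{Mdist}(t)=\sum_i dis(p_m,p_i)$. Each activated $B$-robot advances toward its destination $p_m$ by at least $\delta>0$ (or reaches $p_m$ when it is within $\delta$) by non-rigid movement, and since $p_m$ is its target it cannot overshoot; therefore its distance to $p_m$ strictly decreases, while every non-activated robot keeps its distance. As at least one enabled $B$-robot is activated, the sum strictly decreases, giving $g_3^{Mdist}(t)>g_3^{Mdist}(t+1)$ and thus $g(t)>g(t+1)$. The only delicate point---and the main thing I expect to have to argue carefully---is the invariance of $\#_A=2$: one must be sure that moving interior $B$-robots toward the midpoint can neither create nor destroy an $A$-endpoint (it cannot, since the endpoints are inert and the movers stay strictly interior). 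Once that invariant is secured, the decrease of $g_3^{Mdist}$ is an immediate consequence of monotone motion toward a fixed point.
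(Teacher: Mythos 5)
Your proposal is correct and follows essentially the same route as the paper's own (much terser) proof: observe that only interior $B$-robots are enabled and head for $p_m$, that the endpoints and colors are untouched so $\#_A$ stays at $2$ and $g_1$, $g_2$, $g_4$ are unchanged, and that $g_3^{Mdist}$ strictly decreases because at least one activated robot moves toward $p_m$. The extra care you take with the invariance of $\#_A=2$ and the non-rigid $\delta$-progress is a faithful elaboration of what the paper leaves implicit.
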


\begin{proof}
Robots with $B$ at points except $p_m$ and the endpoints become enabled and move to $p_m$. Since robots at the endpoints are unchanged, it is enough to consider the case where $\#_A(\conf(t))=2$. Since there is at least one enabled-robot that moves to $p_m$, only $g{^{Mdist}_3}(t)$ decreases.
\qed
\end{proof}

\begin{lemma}\label{lemma:ABA}
  If $cc(\conf(t))=AB_mA$, it holds that that $g(t)>g(t+1)$.
\end{lemma}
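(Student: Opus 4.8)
The plan is to mirror the proof of Lemma~\ref{lemma:AA}, since the situations are structurally the same: in both $AA$ and $AB_mA$ the only enabled robots are those sitting at the two endpoints, and by the $\ell_i=A$ branch of Algorithm~\ref{algo:FLG-withcc} each of them recolors itself to $B$ and sets its destination to the midpoint $p_m$. The one new feature of $AB_mA$ is the presence of robots already colored $B$ at $p_m$; these are \emph{not} enabled, because for them the algorithm computes destination $p_m$ (their current position) and leaves their color $B$. Hence they stay put and do not affect the analysis. Under the unfair scheduler a non-empty subset $R'$ of the endpoint robots is activated, and I would organize the argument as a case analysis on $\#_A(\conf(t+1))\in\{0,1,2\}$, since this is exactly the quantity that determines which coordinate of $g$ is dominant.

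First, if $\#_A(\conf(t+1))=1$, then all $A$-robots of one endpoint were activated and recolored to $B$ while the other endpoint still carries an $A$; thus $g{^{Adist}_1}$ drops from $\infty$ to the finite value $\sum_i dis(p_A,p_i)$, which by itself yields $g(t)>g(t+1)$ irrespective of the remaining coordinates. Second, if $\#_A(\conf(t+1))=0$, both endpoints were fully recolored, so $g{^{Adist}_1}$ remains $\infty$ and I turn to $g{^{Edist}_2}$: each extreme robot moved strictly inward (by the non-rigid guarantee it travels at least $\delta$, or reaches $p_m$), so $dis(\conf(t+1))$ is strictly smaller than $dis(\conf(t))$ (and equals $0$ in the Gather subcase), giving the strict decrease even though $g{^{\#B}_4}$ may rise. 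Third, if $\#_A(\conf(t+1))=2$ (so $cc(\conf(t+1))=AB^+A$), at least one $A$-robot survives at each endpoint, whence both $g{^{Adist}_1}=\infty$ and $g{^{Edist}_2}$ are unchanged; here the decrease must come from $g{^{Mdist}_3}$, and since $R'\neq\emptyset$ and every activated robot moves from an endpoint strictly toward $p_m$ without passing it, its distance to $p_m$ strictly decreases while no robot's distance increases, so $g{^{Mdist}_3}$ strictly decreases.

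In every branch the only coordinate that can increase is $g{^{\#B}_4}$ (more robots turn $B$), but it is always preceded in the lexicographic order by a strictly decreasing coordinate ($g_1$, $g_2$, or $g_3$), so $g(t)>g(t+1)$ in all cases, matching the $AB_mA$ row of Table~\ref{table:LU-Gather}. The step needing the most care is the third case: I must verify that when $\#_A$ stays $2$ the two endpoints are genuinely unmoved, so that $g_2$ really does not change and the decrease is forced onto $g_3$, and that a robot travelling from an endpoint toward $p_m$ has strictly decreasing distance to $p_m$ and cannot overshoot it. This is precisely where the non-rigid assumption ``move at least $\delta$ and stop at or before the destination'' is invoked.
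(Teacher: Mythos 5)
Your proposal is correct and follows essentially the same route as the paper's proof: a case analysis on $\#_A(\conf(t+1))\in\{2,1,0\}$, with $g^{Mdist}_3$ decreasing in the first case, $g^{Adist}_1$ dropping from $\infty$ in the second, and $g^{Edist}_2$ decreasing (or the Gather vector $<\infty,0,0,0,0>$) in the third. Your added observations—that the midpoint $B$-robots are not enabled and that endpoint robots cannot overshoot $p_m$—only make explicit details the paper leaves implicit.
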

\begin{proof}
Robots with $A$ at the endpoints become enabled, change their colors to $B$, and move to $p_m$. In the case of $\#_A(\conf(t))=2$, robots at the endpoints are unchanged, and there is at least one enabled-robot that changes its color $B$ and moves. Since $g{^{\#B}_4}(t)$ increases, but $g{^{Mdist}_3}(t)$ decreases.

In the case of $\#_A(\conf(t))=1$, $g{^{Adist}_1}(t)$ decreases from $\infty$ to $\sum_{i=0}^{n-1}dis(p_a,p_i)$. 
In the case of $\#_A(\conf(t))=0$ and not Gather, all robots at the endpoints change their colors $B$ and move to $p_m$. Then although $g{^{\#B}_4}(t)$ increases and $g{^{Mdist}_3}(t)$ may also increases,  $g{^{Edist}_2}(t)$ decreases. In the case that $\#_A(\conf(t+1))=0$ and Gather, since $g{^{Edist}_2}(t+1)$, $g{^{Mdist}_3}(t+1)$, and $g{^{\#B}_4}(t+1)$ are 0, $g(t+1)=<\infty, 0,0,0,0>$. Therefore, $g(t)>g(t+1)$.
\qed
\end{proof}

\begin{lemma}\label{lemma:ABB}
  If $cc(\conf(t))=AB^*B$, it holds that that $g(t)>g(t+1)$.
\end{lemma}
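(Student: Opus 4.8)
The plan is to observe that the color-configuration $AB^*B$ lies entirely in the regime $\#_A(\conf(t))=1$, where the potential vector collapses to its first coordinate, and then to show that this coordinate $g_1^{Adist}$ strictly decreases.

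First I would record that $cc(\conf(t))=AB^*B$ means the endpoint carrying $A$ is unique, while the opposite endpoint and every interior point (if any) carry $B$; hence there is exactly one point with color $A$, namely $p_A$, so $\#_A(\conf(t))=1$. By the definition of $g$, this forces $g(\conf(t))=\langle \sum_{i=0}^{n-1} dis(p_A,p_i),\,0,\,0,\,0,\,0\rangle$, so only $g_1^{Adist}$ can be nonzero and the lexicographic comparison reduces to comparing first coordinates.

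Next I would read the prescribed moves off Algorithm~\ref{algo:FLG-withcc}: in the branch $cc=AB^*B$, every $A$-robot stays at $p_A$ and keeps its color, while every activated $B$-robot moves toward the unique point with $A$, i.e.\ toward $p_A$, and keeps its color. Since no robot recolors, $\#_A(\conf(t+1))=1$ is preserved, so $g(\conf(t+1))$ again has the shape $\langle \sum_i dis(p_A,p_i),0,0,0,0\rangle$ and it suffices to show $g_1^{Adist}(t+1)<g_1^{Adist}(t)$. This is a direct distance estimate: an $AB^*B$ configuration necessarily contains at least one $B$-robot away from $p_A$, which is enabled, so any nontrivial round activates such a robot and moves it strictly closer to $p_A$ (by at least $\delta$ under non-rigid movement, or reaching $p_A$ exactly), while every other term $dis(p_A,p_i)$ is non-increasing, since $A$-robots do not move and activated $B$-robots only approach $p_A$. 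Hence $\sum_i dis(p_A,p_i)$ strictly drops and $g(t)>g(t+1)$.

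The one point I would treat with care is the boundary case in which all $B$-robots reach $p_A$ in a single round, producing a gathered single-point configuration. I must confirm that $\#_A(\conf(t+1))=1$ still holds there: the endpoint $p_A$ retains its $A$-robot, which never moves, so the newly co-located former $B$-robots do not change the number of points bearing color $A$. This keeps the potential in the $\#_A=1$ regime with $g_1^{Adist}(t+1)=0<g_1^{Adist}(t)$, and the lemma follows. This bookkeeping is the only subtlety; everything else is the distance estimate above.
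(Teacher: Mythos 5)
Your proof is correct and follows essentially the same route as the paper's: observe that $\#_A(\conf(t))=1$ so the comparison reduces to $g_1^{Adist}$, note that $A$-robots stay while activated $B$-robots move toward $p_A$, and conclude that at least one enabled robot strictly decreases the sum of distances. Your version is simply more explicit about the non-rigid $\delta$ bound and the gathered boundary case, both of which the paper leaves implicit.
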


\begin{proof}
At time $t$, robots with color $B$ become enabled and move to the point  with color $A$ (denoted as $p_A$). Then it is enough to consider the case where $\#_A(\conf(t))=1$. Since there is at least one enabled-robot that moves to $p_A$, $g{^{Adist}_1}(t)$ decreases.
\qed
\end{proof}

Table~\ref{table:LU-Gather} summarizes Lemmas~\ref{lemma:AAA}-\ref{lemma:ABB}, which follow the next theorem.

\begin{theorem} \label{theorem:SIMgather}
  The potential function $g$ for $\LU$-Gather is monotonically decreasing.
\end{theorem}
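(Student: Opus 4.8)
The plan is to derive the theorem as an immediate corollary of the per-configuration Lemmas~\ref{lemma:AAA}--\ref{lemma:ABB}, once I argue that those lemmas exhaust every case that can actually arise. Since $g$ takes values in $(\R\cup\{\infty\})^3 \times \N \times \R$ ordered lexicographically, this range is totally ordered, and ``monotonically decreasing'' means exactly $g(t) > g(t+1)$ for every round $t$. Because $\LU$-Gather is being analysed in SSYNC, it therefore suffices to establish this single-round strict inequality for every reachable configuration $\conf(t)$ that is not the terminal gathered one.

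First I would pin down the set of reachable color-configurations. The initial configuration is onLDS, which (being a segment with robots at its two extreme points) has the form $A$, $AA$, or $AA^+A$, and every branch of Algorithm~\ref{algo:FLG-withcc} keeps all robots collinear on the current segment. Tracing each arrow of the transition graph in Figure~\ref{fig:FLG} back to the line of Algorithm~\ref{algo:FLG-withcc} that produces it, I would check that the only color-configurations ever created are the terminal $A$ together with $AA$, $AA^+A$, $AB^+A$, $AB_mA$, $AB^*B$ (and its mirror $BB^*A$), and $BB^*B$. Here the star notation absorbs sub-cases: $AB^*B$ subsumes $ABB$, $AB^+B$, $AB_mB$, and similarly for $BB^*B$, so these need no separate treatment. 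In particular I would verify that the configuration $AB^*A$ reached from $BB^*B$ by recolouring both endpoints to $A$ is again one of $AA$, $AB_mA$, $AB^+A$, introducing no new type.

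With exhaustiveness in hand, each non-terminal reachable type is matched to exactly one lemma: $AA^+A$ to Lemma~\ref{lemma:AAA}, $AA$ to Lemma~\ref{lemma:AA}, $BB^*B$ to Lemma~\ref{lemma:BBB}, $AB^+A$ to Lemma~\ref{lemma:ABpA}, $AB_mA$ to Lemma~\ref{lemma:ABA}, and $AB^*B$ together with its mirror $BB^*A$ to Lemma~\ref{lemma:ABB}; the terminal type $A$ needs no transition. Each cited lemma already gives $g(t) > g(t+1)$ for its configuration, and Table~\ref{table:LU-Gather} records, in each case, the coordinate that witnesses the strict decrease. The one point worth stressing when assembling the cases is the role of the $\infty$-valued coordinates: a transition that changes $\#_A$ collapses one of the leading coordinates $g_1^{Adist}$ or $g_2^{Edist}$ from $\infty$ to a finite value, which dominates the lexicographic comparison no matter how the lower coordinates move; whereas a transition that leaves $\#_A$ fixed keeps the leading coordinates unchanged and strictly decreases a later finite coordinate. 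In either case the comparison is strict, so combining all cases yields $g(t) > g(t+1)$ for every $t$, which is the theorem.

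I expect the closure/exhaustiveness step, rather than any arithmetic, to be the main obstacle, since the arithmetic is already discharged inside the lemmas. The delicate points are the transitions out of $AA$ and $AB_mA$ under \emph{non-rigid} movement: a partial move of the endpoint robots can leave them strictly inside the segment, so I must confirm that these produce only $AB^+A$, $AB^+B$ (a sub-case of $AB^*B$), or $BB^*B$ and never an unlisted mixed pattern, and that when $\#_A$ becomes $0$ the recolouring step still returns the computation to the listed set. I would also make explicit the mirror symmetry $BB^*A \equiv AB^*B$ so that a single lemma legitimately covers both orientations of the segment. Handling these carefully closes the argument.
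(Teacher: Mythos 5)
Your proposal is correct and follows essentially the same route as the paper: Theorem~\ref{theorem:SIMgather} is obtained there directly by combining Lemmas~\ref{lemma:AAA}--\ref{lemma:ABB} as summarized in Table~\ref{table:LU-Gather}, exactly the case assembly you describe. Your extra attention to exhaustiveness of the reachable color-configurations and to the lexicographic role of the $\infty$-valued coordinates is a more explicit writeup of what the paper leaves implicit, not a different argument.
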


We obtain the following theorem by Theorems~\ref{theorem:SIM}-\ref{theorem:SIMgather} 
\begin{theorem} \label{theorem:Gathering6col}
  Gathering can be solved in ASYNC by $\LU$ robots having 6 colors under non-rigid movement, and agreement of chirality.
\end{theorem}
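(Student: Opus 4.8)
The plan is to assemble the three results already in hand---the simulation theorem (Theorem~\ref{theorem:SIM}) and the two monotonicity theorems (Theorems~\ref{theorem:SIMonLDS} and~\ref{theorem:SIMgather})---into a single ASYNC Gathering algorithm. First I would fix the simulated algorithm $\aunfair$ to be the combined SSYNC algorithm obtained by substituting, for line~\ref{ch-alg-here} of Algorithm~\ref{algo:SIM}, the conditional ``\textbf{if not} onLDS \textbf{then} ElectOneLDS($r_i$) \textbf{else} $\LU$-Gather($r_i$)''. This combined algorithm uses only the two colors $A$ and $B$ of $\LU$-Gather, since ElectOneLDS is an $\OB$-phase consuming no color; hence as a $\LU$-algorithm $\aunfair$ uses $k=2$ colors.

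The heart of the argument is to show that $\aunfair$ solves Gathering in \emph{unfair} SSYNC, which I would do following its two-phase structure. For the first phase, Theorem~\ref{theorem:SIMonLDS} gives that the potential $f$ strictly decreases at every round in which an enabled robot acts; since $f$ ranges in a lexicographically ordered set whose real coordinates decrease by amounts bounded below in terms of the minimum move distance $\delta$ (and whose integer coordinate $f_3^{\#in}$ is a nonnegative integer), the order is well founded, so only finitely many strict decreases are possible. Because the unfair scheduler guarantees that while some robot is enabled at least one enabled robot is eventually activated (or ceases to be enabled), the configuration must reach a point where no robot is ElectOneLDS-enabled; by the correctness of ElectOneLDS in~\cite{TWK} this point is exactly onLDS. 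An identical argument applied to Theorem~\ref{theorem:SIMgather} shows that, once onLDS is attained, $\LU$-Gather drives $g$ strictly downward at every effective round until no robot is enabled, i.e.\ until all robots coincide. Chaining the two phases yields Gathering in unfair SSYNC with $2$ colors.

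Finally I would invoke Theorem~\ref{theorem:SIM}: since $\aunfair$ works in unfair SSYNC with $k=2$ colors, SIM-for-Unfair simulates it in ASYNC using $3k=6$ colors. Correctness of the simulation (Lemmas~\ref{lemma:StoM}--\ref{lemma:EtoS}) guarantees that each simulated SSYNC round corresponds to a color-cycle $\forall S \to \forall M \to \forall E$ in which a nonempty set $R'_e$ of enabled robots acts on a common snapshot---precisely the behaviour an unfair SSYNC round permits---so the ASYNC execution faithfully reproduces an unfair SSYNC execution of $\aunfair$ and therefore completes Gathering. The assumptions of chirality (needed only by ElectOneLDS) and non-rigid movement carry through unchanged.

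The step I expect to be the main obstacle is the passage from ``the potential strictly decreases per effective round'' to ``the algorithm terminates in unfair SSYNC,'' together with the clean handoff between the two phases. The delicate points are (i) verifying that the lexicographic value set is genuinely well founded---in particular that the continuous coordinates cannot decrease infinitely often without forcing a decrease in a coarser, discretely bounded coordinate such as the area $f_1^{area}$ or the counts $f_3^{\#in}, g_4^{\#B}$, which is where the $\delta$-bounded-progress property of non-rigid movement is essential---and (ii) ensuring that the switch from ElectOneLDS to $\LU$-Gather occurs exactly at onLDS and that no robot is simultaneously enabled under both phase rules, so that the two potential arguments can be concatenated without a gap.
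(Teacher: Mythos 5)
Your proposal follows essentially the same route as the paper: the paper also forms the combined two-color unfair-SSYNC algorithm by splicing ``\textbf{if not} onLDS \textbf{then} ElectOneLDS \textbf{else} $\LU$-Gather'' into SIM-for-Unfair, argues via the monotonically decreasing potentials $f$ and $g$ (Theorems~\ref{theorem:SIMonLDS} and~\ref{theorem:SIMgather}) that both phases tolerate the unfair scheduler, and then applies Theorem~\ref{theorem:SIM} with $k=2$ to obtain $6$ colors in ASYNC. Your added attention to well-foundedness of the lexicographic order via the $\delta$-bounded progress, and to the clean handoff at onLDS, is a more explicit treatment of points the paper leaves implicit, but it is not a different argument.
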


In the next section, we can reduce the number of colors to three by construct a Gathering algorithm starting from OnLDS working in ASYNC with three colors.

\section{Gathering Algorithm in ASYNC with 3 colors}
\label{sec:GatheringAlgorithm}

In this section, we give a Gathering algorithm called 3-color-Gather-in-ASYNC working in ASYNC with $3$ colors. The pseudocode is shown in Algorithm~\ref{algo:3colorGather}.
This algorithm consists of two algorithms, where one is to make onLDS and uses the simulation of ElectOneLDS (SIM-for-Unfair[ElectOneLDS]), and the other is a Gathering algorithm from onLDS and does not use the simulation and is newly developed (called $\LU$-Gather-in-ASYNC). As we will show in Corollary~\ref{corollary:destWhenSwitch}, in SIM-for-Unfair[ElectOneLDS], once a configuration becomes onLDS, it remains onLDS forever. Therefore, the algorithm works in ASYNC with $3$ colors and therefore  3-color-Gather-in-ASYNC attains Gathering in ASYNC with $3$ colors.

\Newcodeline
\begin{algorithm}[h]
  \caption{3-color-Gather-in-ASYNC($r_i$)}
  \label{algo:3colorGather}
  {\footnotesize
    \begin{tabbing}
      111 \= 11 \= 11 \= 11 \= 11 \= 11 \= 11
      \= \kill
      {\em Assumptions}: non-rigid, ASYNC,  \crm
      Subroutine: SIM-for-Unfair($r_i$), ElectOneLDS($r_i$), $\LU$-Gather-in-ASYNC($r_i$); \crm
      \Cl \> {\bf if not} onLDS {\bf then} SIM-for-Unfair($r_i$)[ElectOneLDS] \crm
      \Cl \> {\bf else} $\LU$-Gather-in-ASYNC($r_i$) 
    \end{tabbing}
  }
\end{algorithm}

\subsection{Configurations becoming onLDS}

In 3-color-Gather-in-ASYNC, it is switched to $\LU$-Gather-in-ASYNC from the simulation when the configuration becomes onLDS.
We consider configurations which become onLDS when  ElectOneLDS is simulated by SIM-for-Unfair.

Since we are concerned with ASYNC, $\conf(t)$ contains moving robots\footnote{Robots having performed $\comp$ and not finishing $\move$ yet.} and/or robots having performed $\look$ but not performing $\comp$. The former robots are said to be in \emph{pending move} at $\conf(t)$ and the latter robots are said to be in \emph{pending color} at $\conf(t)$~\cite{DHTW}.
Then the following notations are introduced in color-configurations.
In factor $f$ of a color-configuration for $\conf(t)$, if some robots have the possibility to be in pending move or pending color at a position represented by $f$, the factor is denoted by $f[pm]$ and $f[pc\rightarrow \alpha]$, respectively. %
If there is possibility of robots being in pending move and in pending color, the factor is denoted by $f[pm,pc]$, where $pc\rightarrow \alpha$ shows that the color is changed to $\alpha$ when performing \emph{Compute}.
When robots in pending move with color $\alpha$ move to the destination $d$ in the factor $\alpha[pm]$, we say that \emph{$\alpha[pm]$ has destination $d$}.

\begin{lemma}\label{lemma:ColorOnLDS}
  If the configuration becomes onLDS at t when SIM-for-Unfair simulates ElectOneLDS. It holds that
  \begin{enumerate}
      \item[(1)] $cc(\conf(t))=SS^*S$,
  \item[(2)]$cc(\conf(t))=(S|S[pc\rightarrow M]|M|M[pm])(S|S[pc\rightarrow M]|M|M[pm])^*(S|S[pc\rightarrow M]|M|M[pm])$ with at least one $M$, 
  \item[(3)] $cc(\conf(t))=(M|M[pm,pc\rightarrow E]|E)(M|M[pm,pc\rightarrow E]|E)^*(M|M[pm,pc\rightarrow E]|E)$ with at least one $M$,
  \item[(4)] $cc(\conf(t))=(S|S[pc\rightarrow M]|M)$ with at least one $M$, or
  \item[(5)] $cc(\conf(t))=(M|M[pc\rightarrow E]|E)$ with at least one $M$.
    \end{enumerate}
    In (2) and (3), all $M[pm]$ has a destination of a point on the straight line through onLDS in $\conf(t)$.
\end{lemma}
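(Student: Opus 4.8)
The plan is to characterize the colors present at the first instant $t$ at which all robots lie on one line. I would split by how onLDS is reached. If the initial configuration is already onLDS, no robot has yet been activated, so every robot keeps the initial color $S$ and $cc(\conf(t))=SS^*S$, which is case~(1). In every other case onLDS is attained only because one or more moves complete at $t$, since positions change only through a $\move$ operation. By Algorithm~\ref{algo:SIM}, a robot computes a nontrivial ElectOneLDS destination and begins moving only when it leaves $S$ for $M$ (lines~4--6); consequently every robot that is moving, or that has just finished the move realizing onLDS, carries color $M$. This already gives the ``at least one $M$'' clause of cases~(2)--(5); moreover, since onLDS then requires a just-completed move, no $M$-free phase (all $E$, the $\forall S,E$ transition, or a non-initial $\forall S$) can be the instant onLDS first appears.

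The core of the argument is a phase analysis bounding which colors can coexist at $t$. Using Lemmas~\ref{lemma:StoM}--\ref{lemma:EtoS}, the simulation traverses the color-cycle $\forall S\rightarrow\forall M\rightarrow\forall E\rightarrow\forall S$, and I would first prove the invariant that at any time the colors actually present occupy a contiguous arc of this cycle; the only arcs containing $M$ are the $S/M$-region (colors among $\{S,M\}$) and the $M/E$-region (colors among $\{M,E\}$). The subtle point is ASYNC: a pending-color robot acts on a stale snapshot, so I must argue that a robot commits to $E$ only after observing a snapshot free of $S$ (the $\forall M$/$\forall M,E$ case, line~10) and commits to $M$ only after observing a snapshot free of $E$ (the $\forall S$/$\forall S,M$ cases, lines~2--9). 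Together with Lemma~\ref{lemma:StoM}, which says that all moves of one color-cycle rest on the common $\forall S$ snapshot, this forbids $S$ and $E$ from coexisting while a robot is moving, pinning the system to the $S/M$- or $M/E$-region at $t$. The admissible states at each occupied point are then exactly the factors listed: $S$, $S[pc\rightarrow M]$, $M$, $M[pm]$ in the former region, and $M$, $M[pm,pc\rightarrow E]$, $E$ in the latter.

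It remains to split by the shape of onLDS. If the attained segment is nondegenerate, the configuration takes the three-group form $fg^*h$, yielding case~(2) in the $S/M$-region and case~(3) in the $M/E$-region; if all robots coincide at a single point (a degenerate segment, produced when ElectOneLDS contracts a symmetric-contractible configuration to its center), the color-configuration is a single factor, yielding cases~(4) and~(5) respectively. For the nondegenerate cases I then verify the destination clause: every $M[pm]$ has a destination on the line through onLDS. Because onLDS holds at $t$, a moving robot lies on this line at $t$, and its destination is the ElectOneLDS target (a vertex, the center, or the right endpoint of a minimum edge) computed from this color-cycle's common snapshot; since this is precisely the cycle whose completed moves yield onLDS, those targets are the final positions that all robots occupy on the line, so each destination lies on the line through onLDS.

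The step I expect to be the main obstacle is the ASYNC phase analysis: carefully tracking pending-color robots, which fire on stale snapshots, to establish the contiguous-arc invariant and exclude the simultaneous presence of $S$ and $E$ alongside a moving robot. Once that invariant is secured, the enumeration of factors together with the shape and destination claims follows directly from the color-configuration definitions and a straightforward reading of ElectOneLDS's destinations.
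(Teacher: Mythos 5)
Your enumeration of the color-configurations follows essentially the same route as the paper: the initial-onLDS case gives (1); otherwise a completed move forces at least one $M$; and the ASYNC phase analysis of the color-cycle (only colors from $\{S,M\}$ or from $\{M,E\}$ can be present while a robot is moving) pins the configuration to cases (2)/(4) or (3)/(5), with the single-point shape giving (4)/(5). The paper phrases this as ``some robot has not yet performed $\comp$ in this cycle'' versus ``the configuration has already passed through $\forall M$'', but your contiguous-arc invariant is the same argument, and your handling of pending-color robots matches the paper's.

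The gap is in the destination clause. You justify it by saying the pending destinations ``are the final positions that all robots occupy on the line,'' but this comes close to assuming the conclusion and is not valid for an arbitrary ElectOneLDS destination: a robot in pending move at time $t$ need not have its destination occupied by any robot at $t$, and if the cycle producing onLDS could start from a non-contractible configuration, the computed destinations would be the several distinct vertices of a non-degenerate hull (the ``nearest vertex'' rule), which lie on no common line. What is missing is the paper's two-step argument: (i) in the non-contractible cases only non-vertex robots move, so the convex hull cannot degenerate and onLDS cannot first appear during such a cycle; hence the last $\forall S$ configuration $\conf(t_B)$ before $t$ must be asymmetric and contractible (a triangle with one or two minimum edges, or a rectangle) or symmetric and contractible; and (ii) a per-case verification: in the asymmetric-contractible cases collinearity is reached only when the enabled movers have actually arrived at the rightmost vertex, so no pending moves remain at $t$, while in the symmetric-contractible case the only possible pending destination is the center $p_c$, which must already be occupied at $t$ (robots from at least one vertex off the surviving diagonal have reached it, otherwise the positions are not collinear) and is therefore on the line. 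Without (i) and (ii) the destination claim --- on which Corollary~\ref{corollary:destWhenSwitch} and the hand-off to $\LU$-Gather-in-ASYNC rest --- is not established.
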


\begin{proof}
If the initial configuration is already onLDS, trivially (1) holds.
  We first prove about $cc(\conf(t))$. If $\conf(t')$ is not onLDS for a time $t'<t$, at least one enabled-robot has to reach its destination to become onLDS. Since it has color $M$, at least one $M$ exists in $\conf(t)$. If there is a robot that has never performed \emph{Compute} until t, $\conf(t)$ is in the course of transition from configuration in $\forall S$ to configuration in $\forall M$. It is possible that there is a robot with $S$ looks a configuration in $\forall S$ before $t$, or a robot with $M$ is moving toward its destination of a point on onLDS. Hence, $cc(\conf(t))$ is $(S|S[pc\rightarrow M]|M|M[pm])(S|S[pc\rightarrow M]|M|M[pm])^*(S|S[pc\rightarrow M]|M[pm])$ with at least one $M$ or $(S|S[pc\rightarrow M]|M)$ with at least one $M$. 
  Otherwise, the configuration becomes a configuration in $\forall M$ at say, $t_{\forall M}$ before $t$. Then robots except $\aunfair$-enabled-robots changed their colors to $M$ at $t_{M}(t_{M}<t_{\forall M}<t)$, where $\conf(t_M)$ has color $M$. If a robot with $M$ performs \emph{Look} between $t_{\forall M}$ to $t$, it tries to change its color to $E$. Since $\conf(t)$ is in the course of transition from configuration in $\forall M$ to configuration in $\forall E$, and then $cc(\conf(t))=(M|M[pm,pc\rightarrow E]|E)(M|M[pm,pc\rightarrow E]|E)^*(M|M[pm,pc\rightarrow E]|E)$ with at least one $M$ or $(M|M[pc\rightarrow E]|E)$ with at least one $M$.

  Next, we show that destinations of robots in pending move with $M$ are in onLDS. Let $t$ be a time at which a configuration is onLDS for the first time, and let the time  $t_B=max\{t_S|\conf(t_S)\in \forall S, t_S<t\}$. At $t_B$, configurations are only the following cases. 
  \begin{enumerate}
      \item[(a)] $\conf(t_B)$ is asymmetric and contractible, and is a triangle that has one or two minimum edge, 
  \item[(b)]  $\conf(t_B)$ is asymmetric and contractible, and is a rectangle, or
  \item[(c)] $\conf(t_B)$ is symmetric and contractible.
\end{enumerate}

  For (a), if ${\cal CH}(\conf(t_B))$ has exactly one minimum edge, robots on the edge except the rightmost vertex become enabled at $t_B$. When all of them arrived at the rightmost vertex, onLDS is made. Hence, robots with $M$ must be stopped at $t$. If ${\cal CH}(\conf(t_B))$ has the two minimum edges, robots on a right-hand edge of the two and not on the rightmost vertex become enabled at $t_B$. When they arrived at the rightmost vertex, onLDS is made. Hence, robots with $M$ must be also stopped.
  
   For (b), robots on the minimum edges become enabled at $t_B$. When they all arrived at the rightmost vertex, onLDS is made. Hence, robots with $M$ must be stopped.
   
  For (c), there are two cases where LDS is obtained at $t$. One is a line segment whose endpoints on a diagonal through the center of ${\cal CH}(\conf(t_B))$ (denoted as $p_c$), and the other is one connecting $p_c$ and a point on a diagonal through $p_c$. Let $x$ and $y$ be the diagonal for the former case and let $z$ be the endpoint except $p_c$ for the latter case. In the former case all robots on vertices other than $x$ and $y$ have reached at $p_c$ and are stopped at $t$. Robots at $x$ or $y$ are in pending color or pending move and their destination is $p_c$. Thus, their destinations are in onLDS.
  The latter case can be shown similarly. 
\qed
\end{proof}

\begin{corollary}\label{corollary:destWhenSwitch}
In SIM-for-Unfair[ElectOneLDS], if the configuration becomes \\onLDS from non-onLDS at time $t$, destination of any moving robot at $t$ is a point on the straight line through onLDS in $\conf(t)$.
\end{corollary}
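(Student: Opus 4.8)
The plan is to derive Corollary~\ref{corollary:destWhenSwitch} almost immediately from Lemma~\ref{lemma:ColorOnLDS}, once I pin down exactly which robots can be moving at $t$. The bridge I need is the observation that in SIM-for-Unfair a robot is in pending move only while its light is $M$. Inspecting Algorithm~\ref{algo:SIM}, the sole branch that assigns a destination different from the current position is the $\forall S$ branch executed by an $\aunfair$-enabled robot (lines 4--6), and in that very $\comp$ the robot simultaneously sets $\ell_i \leftarrow M$; every other branch (the $S\to M$ catch-up on line~9, the $M\to E$ step on line~11, and the $E\to S$ step on line~13) only recolors and leaves $des_i$ at the current location, producing no movement. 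Since under ASYNC a robot cannot begin a new $\look$ before finishing its $\move$, a robot genuinely in pending move has not yet performed a subsequent $\comp$ and therefore keeps color $M$ until that move ends. Hence every moving robot at $\conf(t)$ appears as an $M[pm]$ factor in $cc(\conf(t))$.

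With this in hand I would invoke Lemma~\ref{lemma:ColorOnLDS}. Because the hypothesis is that $\conf(t)$ becomes onLDS \emph{from non-onLDS}, case (1) (the already-onLDS initial configuration, whose form $SS^*S$ contains no $M$ and hence no moving robot) is excluded, leaving cases (2)--(5). In cases (4) and (5) the color-configuration is a single factor carrying no $[pm]$ annotation, so there is no robot in pending move and the statement holds vacuously. In cases (2) and (3) the only pending-move robots are precisely those in the $M[pm]$ and $M[pm,pc\to E]$ factors, and the final sentence of Lemma~\ref{lemma:ColorOnLDS} already asserts that each such $M[pm]$ has its destination on the straight line through onLDS in $\conf(t)$. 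Combining the two, the destination of every moving robot at $t$ lies on that line, which is exactly the claim.

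The only genuinely delicate point is the first paragraph's identification of ``moving robot'' with ``$M[pm]$ factor'': I must argue that no robot can be in pending move while colored $S$ or $E$. This follows from the fact that the recoloring branches set the destination to the current position, together with the single-cycle discipline of ASYNC (a robot in the middle of a $\move$ has not yet performed the next $\comp$, so it cannot have switched away from $M$ since it last set a destination). All the quantitative, geometric content --- that the $M[pm]$ destinations are the center $p_c$ or the appropriate rightmost vertex, and that these points lie on the emerging segment --- is supplied by the case analysis on $\conf(t_B)$ inside Lemma~\ref{lemma:ColorOnLDS}, so the corollary itself requires no new geometric argument and is essentially a bookkeeping consequence of that lemma.
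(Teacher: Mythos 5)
Your proposal is correct and follows essentially the same route as the paper: the paper states the corollary as an immediate consequence of Lemma~\ref{lemma:ColorOnLDS}, whose final claim is precisely that in cases (2) and (3) every $M[pm]$ factor has its destination on the line through onLDS. Your only addition is to make explicit the (correct, and implicitly assumed) bookkeeping step that in SIM-for-Unfair a robot can be in pending move only while colored $M$, since the $\forall S$ branch is the sole one assigning a non-trivial destination and it sets $\ell_i \leftarrow M$ in the same $\CP$.
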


We will show that $\LU$-Gather-in-ASYNC can work from the configurations shown in Lemma~\ref{lemma:ColorOnLDS}.

\begin{figure}[h]
  \centering
  \includegraphics[width=0.9 \textwidth,clip]{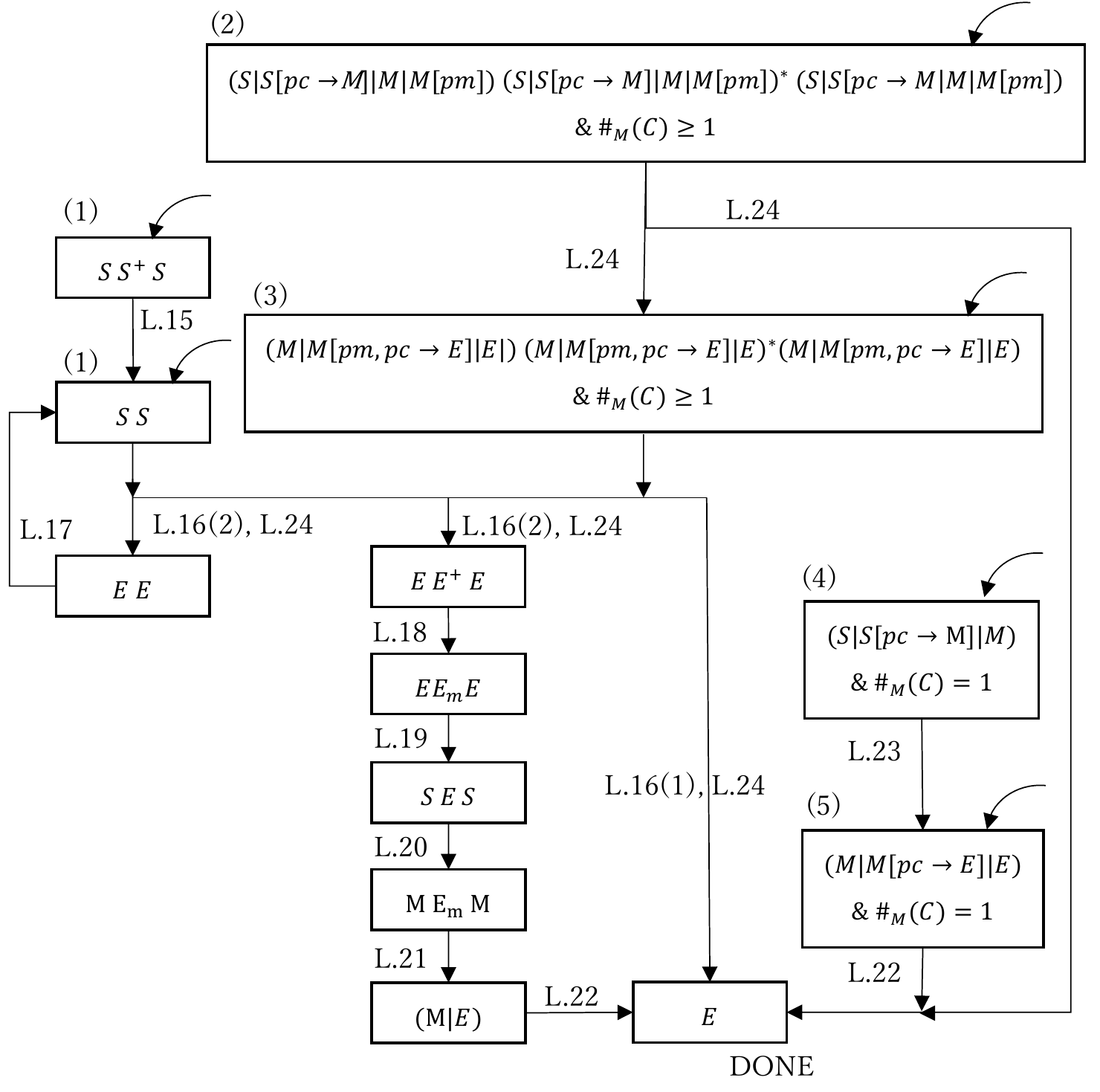}
  \caption{Transition Graph for $\LU$-Gather-in-ASYNC from $\conf$ in Lemma~\ref{lemma:ColorOnLDS}.}
  \label{fig:FLGA}
\end{figure}

\subsection{Correctness of $\LU$-Gather-in-ASYNC}

\Newcodeline
\begin{algorithm}
  \caption{$\LU$-Gather-in-ASYNC($r_i$)}
  \label{algo:FLGA}
  {\footnotesize
    {\footnotesize 
    \begin{tabbing}
      111 \= 11 \= 11 \= 11 \= 11 \= 11 \= 11 \= \kill
      {\em Assumptions}: non-rigid, $\LU$, $3$ colors($S$,$M$ and $E$). \crm
      {\em Input}: configuration onLDS and configuration satisfying Lemma~\ref{lemma:ColorOnLDS}.\crm %
      \Cl \> (Let $p_n$ be the nearest endpoint to $p_i$, and let $p_f$ be the furthest endpoint to $p_i$)\crm
      \Cl \> {\bf case} $cc({\cal SS}_i)$ {\bf of } \crm
      \Cl \> $\in \forall S$:\crm
      \Cl \> \> {\bf if} $cc({\cal SS}_i)=SS$ {\bf then}// $\rightarrow SMS$, $SM^+S$, $MM^*M$, $SMM$, or $SM^*M$ \crm
      \Cl \> \> \> $l_i \leftarrow M$\crm
      \Cl \> \> \> $des_i \leftarrow (p_n+p_f)/2$\crm
      \Cl \> \> {\bf else} //$cc({\cal SS}_i)=SS^+S$ \crm
      \Cl \> \> \> {\bf if} $p_i\neq p_n$ {\bf then} $des_i\leftarrow p_n$ //$\rightarrow SS$ \crm
      \Cl \> $\in \forall S,M$: \crm
      \Cl \> \> {\bf if} $cc({\cal SS}_i)=M^+(S|M)M^* $, $M^*(S|M)M^+$, or $(S|M)$ {\bf then}\crm
      \Cl \> \> \> {\bf if} $l_i=S$ {\bf then} $l_i \leftarrow E$ //$M^+(S|M)M^*\rightarrow M^+EM^*$ or $(S|M)\rightarrow (M|E)$ \crm
      \Cl \> \> {\bf else if} $cc({\cal SS}_i)=(S|M)M^*(S|M)$ {\bf and} $\#_S({\cal SS}_i)=2$ {\bf and} $l_i=S$ {\bf then} \crm
      \Cl \> \> \> $l_i \leftarrow M$ //$\rightarrow SM^*M$ or $MM^*M$ \crm
      \Cl \> \> \> $des_i \leftarrow (p_n+p_f)/2$ \crm
      \Cl \> \> {\bf else if} $\#_S({\cal SS}_i)\geq 2$ {\bf and} $l_i=S$ {\bf then} $l_i\leftarrow M$ //The number of S decreases. \crm
      \Cl \> $\in \forall S,E$: \crm
      \Cl \> \> {\bf if} $cc({\cal SS}_i)=(S|E)(S|E)$ {\bf and} $l_i=E$ {\bf then} $l_i \leftarrow S$ //$\rightarrow SS $ \crm
      \Cl \> \> {\bf else} //$(S|E)E(S|E)$ \crm
      \Cl \> \> \> {\bf if} $cc({\cal SS}_i)=(S|E)E(S|E)$ {\bf and} $\#_E({\cal SS}_i)>1$ {\bf and} $p_i=p_n$ {\bf and} $l_i=E$ {\bf then} \crm
      \Cl \> \> \> \> $l_i \leftarrow S$ //$\rightarrow SES$ \crm
      \Cl \> \> \> {\bf else if} $cc({\cal SS}_i)=SES$ {\bf and} $l_i=S$ {\bf then} //$\rightarrow (S|M)E(S|M)$ \crm
      \Cl \> \> \> \> $l_i \leftarrow M$\crm
      \Cl \> $\in \forall M$: \crm
      \Cl \> \> $l_i \leftarrow E$ //$MM^*M \rightarrow (M|E)(M|E)^*(M|E)$ \crm
      \Cl \> $\in \forall M,E$: \crm
      \Cl \> \> {\bf if} $cc({\cal SS}_i)=M^+(E|M)M^*$ or $M^*(E|M)M^+$ {\bf then} \crm
      \Cl \> \> \>(Let $p_E$ be a point with E)\crm
      \Cl \> \> \> {\bf if} $p_i\neq p_E$ {\bf then} $des_i \leftarrow p_E$ //Possibly the color-configuration becomes $(M|E)$. \crm
      \Cl \> \> {\bf else}// $\#_E({\cal SS}_i)\geq 2$ or $cc({\cal SS}_i)=(M|E)$\crm
      \Cl \> \> \> {\bf if} $l_i=M$ {\bf then} $l_i\leftarrow E$ //$(M|E)(M|E)^*(M|E)\rightarrow EE^*E$ or $(M|E)\rightarrow E$ \crm
      \Cl \> $\in \forall E$: \crm
      \Cl \> \> {\bf if} $cc({\cal SS}_i)=E$ {\bf then} do nothing //Gather\crm
      \Cl \> \> {\bf else if} $cc({\cal SS}_i)=EE$ {\bf then} $l_i\leftarrow S$ //$\rightarrow (S|E)(S|E)$ \crm
      \Cl \> \> {\bf else if} $cc({\cal SS}_i)=EEE$ {\bf then}\crm
      \Cl \> \> \> {\bf if} $p_i = p_n$ {\bf then}\crm
      \Cl \> \> \> \> $l_i\leftarrow S$ //$\rightarrow (S|E)E(S|E)$ \crm
      \Cl \> \> {\bf else} //$cc({\cal SS}_i)=EE^+E$\crm
      \Cl \> \> \> {\bf if} $p_i\neq p_n$ {\bf then} $des_i\leftarrow (p_n+p_f)/2$ //$EE^+E\rightarrow EEE$\crm
      \Cl \> $\in \forall S,M,E$: \crm
      \Cl \> \> {\bf if} $cc({\cal SS}_i)=M^+(S|M|E)M^*$, $M^*(S|M|E)M^+$, or $(S|M|E)$ {\bf and} $l_i=S$ {\bf then}\crm
      \Cl \> \> \> $l_i=E$ //$M^+(S|M|E)M^*\rightarrow M^+(E|M)M^*$ or $(S|M|E)\rightarrow (M|E)$\crm
      \Cl \> \> {\bf else if} $cc({\cal SS}_i)=(S|M)E(S|M)$ {\bf and} $p_i=p_n$ {\bf and} $l_i=S$ {\bf then} \crm
      \Cl \> \> \> $l_i=M$ //$\rightarrow MEM$ \crm
      \Cl \> {\bf endcase}
    \end{tabbing}
    }
  }
\end{algorithm}

$\LU$-Gather-in-ASYNC(Algorithm~\ref{algo:FLGA}) is an extension of Algorithm~\ref{algo:FLG-withcc} so that it can work in ASYNC, and uses color-cycles similar to that of Algorithm~\ref{algo:SIM}. 
This algorithm use $3$ colors $S,M,$ and $E$ and its color-cycle repeats $\forall S(SS)\rightarrow \forall M\rightarrow \forall E(EE)\rightarrow \forall S(SS)$. Notatins in parentheses indicate that the configuration is limited to two points.
\newpage

In the algorithm, robots gather at the midpoint of some onLDS, or   \emph{Gathering point}, where configuration $\conf$ has a Gathering point $p_G$ if and only if $\conf$ is in $\forall M,E$, $\#_E(\conf)=1$ and $p_G$ has $E$.
Thus, the aim of this algorithm to create Gathering point during color-cycles.

In $\forall S(SS)\rightarrow \forall M$, robots on the two points with $S$ change their colors to $M$ and move to the midpoint. 
Note that robots at the endpoints move to the midpoint only if the endpoints have $S$ and the color of the robots is $S$. 
Gathering point is created during transitions in  color-cycles for the following cases; 
\begin{enumerate}
    \item[(1)] During $\forall S(SS)\rightarrow \forall M$, robots with $S$ look  configuration $\conf$ such that $\#_S({\conf})=1$.
\item[(2)] configuration $\conf$ during $\forall M\rightarrow \forall E$.  %
\item[(3)] After configuration becomes $\forall E$ configuration $\conf$ such that$\#_E(\conf)\geq 3$.
\end{enumerate}

We show how to create a Gathering point from each of (1),(2), and (3).%

\noindent
For (1), a robot $r_i$ with $S$ changes its color to $E$ if $r_i$ looks  the configuration with $\#_S({\conf})=1$. From the configuration it will make a configuration such that $\#_E(\conf)=1$ and $\forall M,E$, and then a Gathering point is created.

\noindent
For (2), %
Gathering point is lost if there are more than one point with $E$. On the other hands, 
Gathering point is confirmed if there is only one point with $E$. Let $t_M$ be a time at which the configuration becomes $\forall M$,  let $t_E(>t_M)$ be the first time at which some robot changes its color to $E$, and let $p_E$ be the location having robots with $E$. If there are not activated robots at points except $p_E$ between $t_M+1$ and $t_E$, robots activated after $t_E$ observe a configuration with $\#_E(\conf)=1$ and $\forall M,E$. Therefore, there are no robots that change their colors to $E$, and the Gathering point $p_E$ is confirmed.

\noindent
In (3), robots with $E$ do not move, and the both endpoints are fixed and robots at points except the endpoints move to the midpoint. Thus the color-configuration becomes $EEE$. The transition of the configuration becomes  $EEE\rightarrow SES\rightarrow MEM$, where the last configuration satisfies $\#_E(\conf)=1$ and in $\forall M,E$. This case also determines a Gathering point.

If the configuration becomes (1),  (2) or (3), Gathering point is made. If the configuration does not become (1), (2) and (3), it becomes $EE$, will change $SS$ and again begins the next color-cycle $\forall S(SS)\rightarrow \forall M\rightarrow \forall E(EE)\rightarrow \forall S(SS)$.
If color-cycles are repeated, the distance of the endpoints is reduced by at least $2\delta$ in one cycle. Therefore the distance will become less than $2\delta$ when the color configuration becomes $\forall S(SS)$. Then, if the configuration is in $\forall M$, robots with $M$ reach the midpoint, and Gathering is achieved.
Transitions between color configurations in Algorithm~\ref{algo:FLGA} are shown in Figure~\ref{fig:FLGA}. In this figure, boxes with numbers and  $\curvearrowleft$ are starting configurations and the number corresponds to that in Lemma~\ref{lemma:ColorOnLDS}. Arrow labelled with $L.n(i)$ means it is proved in Lemma~$n(i)$. "DONE" means Gathering is attained.

The following lemmas show transitions between color-configurations in Algorithm~\ref{algo:FLGA}. 
\begin{lemma}\label{lemma:SSS}
If $cc(\conf(t))=SS^+S$, there is a time $t'>t$ such that $cc(\conf(t'))=SS$.
\end{lemma}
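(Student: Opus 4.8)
The plan is to read off the behaviour of Algorithm~\ref{algo:FLGA} in the branch $\in\forall S$ with $cc({\cal SS}_i)\ne SS$, and to show that it drives the configuration to $SS$ by draining every interior robot onto the two endpoints. In the case $cc(\conf)=SS^+S$, every robot $r_i$ sitting strictly inside the segment computes $des_i\leftarrow p_n$ (its nearest endpoint) and keeps its colour $S$, whereas every robot located at an endpoint satisfies $p_i=p_n$ and therefore does nothing. First I would record the two structural consequences of this: (i) no robot ever changes colour, so the configuration stays in $\forall S$; and (ii) the two endpoints, once occupied, stay occupied, because endpoint robots never move. Hence throughout the process the configuration is an onLDS, all-$S$ configuration, which is exactly $SS^+S$ while some robot is interior and $SS$ once none is.

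Next I would set up the potential $\Phi(t)=\sum_{i=0}^{n-1}\min\{dis(p_i(t),p),\,dis(p_i(t),q)\}$, the sum over all robots of the distance to the nearer of the two endpoints $p,q$; note $\Phi(t)=0$ exactly when $cc(\conf(t))=SS$. The crucial monotonicity fact is that no $S$-coloured robot ever enters the interior, since the only move available in the all-$S$ regime is an interior-to-endpoint move, so the set of interior robots can only shrink. Because the endpoints are fixed, each destination $p_n$ used by an interior robot is a fixed point, and by non-rigidity each activation of an interior robot brings it at least $\delta$ closer to that endpoint, or lands it exactly on the endpoint when the remaining distance is at most $\delta$. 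Thus a fixed interior robot needs only finitely many activations to arrive (at most $\lceil \ell/\delta\rceil$, where $\ell$ is the length of the segment), and since ASYNC is fair every interior robot is activated infinitely often and therefore reaches an endpoint in finite time. As there are finitely many robots, there is a first time $t'$ at which the last interior robot arrives; at $t'$ every robot is at an endpoint, both endpoints are occupied, and all colours are $S$, so $cc(\conf(t'))=SS$, with $t'>t$ because at least one interior robot must still move.

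I expect the main obstacle to be discharging the ASYNC bookkeeping rather than the distance argument. Concretely I would argue that the configuration cannot silently change colour through a \emph{pending} operation: a robot in pending colour or pending move computed its action from an earlier all-$S$ snapshot, and the only colour-changing action in the all-$S$ branch is the $SS$ rule that recolours an endpoint robot to $M$ and sends it to the midpoint; but triggering that rule requires observing $SS$, which would place all robots at endpoints at the observation time, and no $S$-coloured robot can subsequently re-enter the interior, contradicting that $\conf$ is all-$S$ with interior robots at $t$. Hence at $t$ there are no pending $M$-recolourings, and none can be triggered before $t'$ since every snapshot taken before $t'$ sees $SS^+S$; every pending action is itself an interior-to-endpoint move with colour unchanged, consistent with the analysis above. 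The remaining care is that snapshots in ASYNC may catch robots mid-move; since every robot stays on the segment and keeps colour $S$, every such snapshot is still an onLDS all-$S$ configuration, correctly classified as $SS^+S$ (an interior robot is present) or $SS$, so the branch taken is exactly the one used above. (I would treat a robot momentarily at the exact midpoint as selecting one endpoint in its own local frame; after its first move of at least $\delta$ its nearest endpoint is determined unambiguously, so this does not affect convergence.)
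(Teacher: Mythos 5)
Your proof is correct and follows essentially the same route as the paper's: interior robots move to their nearest endpoint while endpoint robots stay put and no colours change, so the configuration drains to $SS$. The paper's own proof is only a terse three-sentence version of this; your additions (the potential $\Phi$, the $\delta$-per-activation termination bound under fairness, and the check that no pending $M$-recolouring can exist at an $SS^+S$ configuration) just supply detail the paper leaves implicit.
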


\begin{proof}
  If robots on points except the endpoints of onLDS become active, they move to the nearest endpoint (line~8). While $\#_S(\conf)\geq 3$, robots on the endpoints do nothing (line~9) until the color-configuration becomes $cc(\conf)=SS$. Therefore, there is a time t' at which $cc(\conf(t'))=SS$.
  \qed
\end{proof}

\begin{figure}[ht]
  \centering
     \includegraphics[width=0.9 \textwidth,clip]{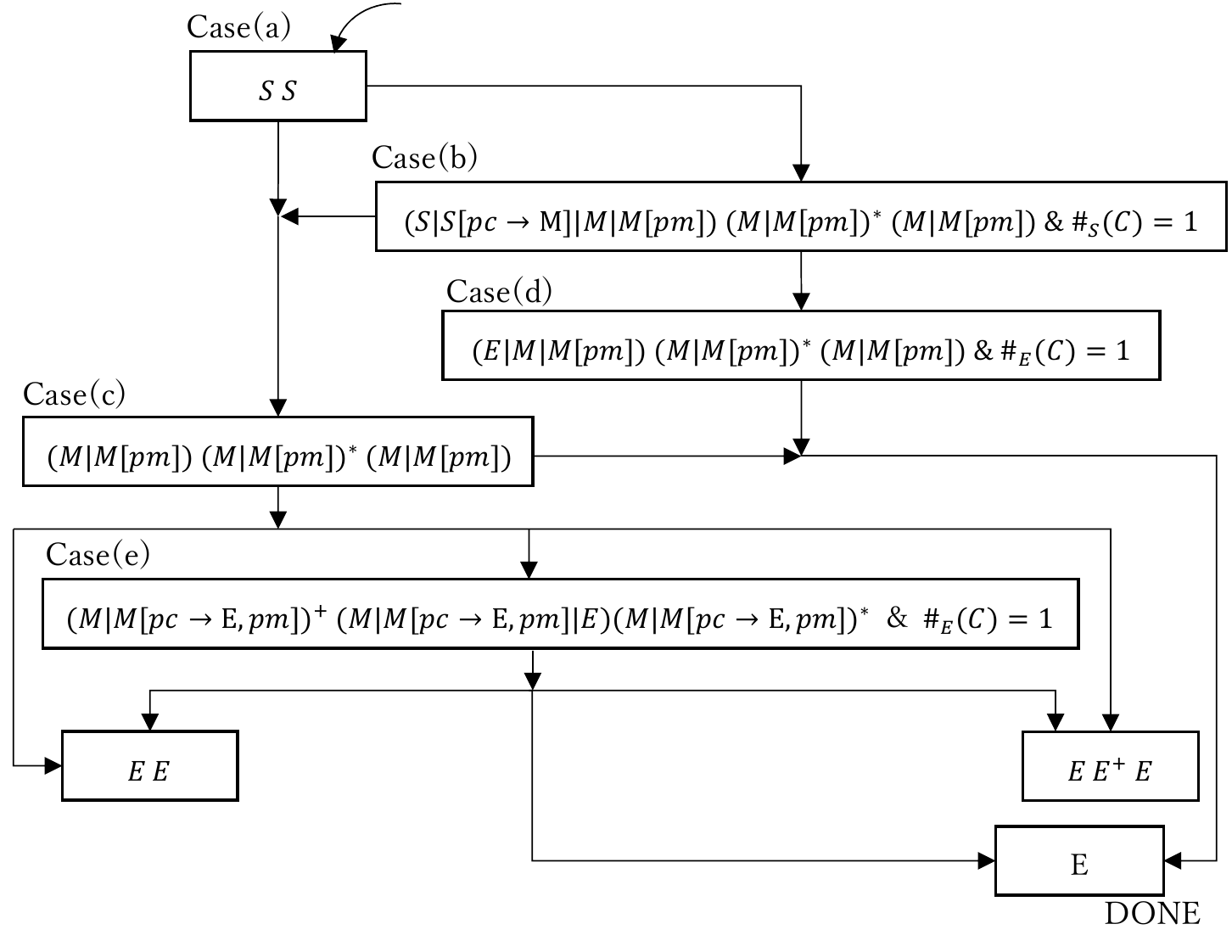}
  \caption{Transition Graph for $\LU$-Gather-in-ASYNC from $cc(\conf)=SS$.}
  \label{fig:FLGALOOP}
\end{figure}

\begin{lemma}\label{lemma:SS}
  If $cc(\conf(t))=SS$, there is a time $t'>t$ such that
  \begin{enumerate}
    \item[(1)]$cc(\conf(t'))=E$, or
    \item[(2)]$cc(\conf(t'))=EE^*E$ and $dis(\conf(t))\leq dis(\conf(t'))$-2$\delta$.
  \end{enumerate}
\end{lemma}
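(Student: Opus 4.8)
The plan is to follow a single \emph{color-cycle} $\forall S\to\forall M\to\forall E$ starting from the two-point configuration $SS$, tracking the endpoint distance $d=dis(\conf(t))$ while the robots contract toward the midpoint $p_m$. Writing the two occupied endpoints as $p_n$ and $p_f$, I would use fairness of ASYNC exactly as in Lemmas~\ref{lemma:StoM}--\ref{lemma:MtoE}: first drive the configuration into $\forall M$, then into $\forall E$, and finally read off whether the resulting all-$E$ configuration is a single gathered point (outcome~(1)) or a strictly shorter segment (outcome~(2)).

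First I would analyze the phase $SS\to\forall M$. When a robot at an endpoint is activated while it still observes $cc(\conf)=SS$, it turns $M$ and heads for $p_m$ (lines~5--6); once the first such robot has moved, a robot still colored $S$ at an endpoint observes the two-endpoint configuration $SM^*S$, which matches $(S|M)M^*(S|M)$ with $\#_S=2$, and therefore also turns $M$ and heads for $p_m$ (lines~12--14). Under non-rigid movement each such robot advances at least $\delta$ toward $p_m$, or reaches $p_m$ when the remaining distance is at most $\delta$. Hence every robot originally at $p_n$ comes to rest at distance at least $\delta$ from $p_n$ and no farther than $p_m$, and symmetrically for $p_f$; so when $\forall M$ is first attained at some time $t_M$ the two extreme occupied points are at mutual distance at most $d-2\delta$ (or all robots coincide at $p_m$). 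Fairness guarantees $\forall M$ is reached in finite time.

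Next, the phase $\forall M\to\forall E$ is handled exactly as in Lemma~\ref{lemma:MtoE}: from $t_M$ on, each activated robot only recolors $M\to E$ and does not move (line~24), so by fairness a time $t'$ with $cc(\conf(t'))\in\forall E$ exists and the positions at $t'$ are those at $t_M$. Consequently the all-$E$ configuration at $t'$ is either the single point $E$ (outcome~(1)) or a segment $EE^*E$ whose endpoints are the extreme points fixed at $t_M$, so its length is at most $d-2\delta$ (outcome~(2)); since $cc(\conf(t))=SS$ forces $d>0$, this is the claimed strict reduction of the endpoint distance.

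The main obstacle is the genuinely asynchronous bookkeeping of the intermediate color-configurations, i.e.\ of robots in \emph{pending move} and \emph{pending color}. I must verify that throughout $SS\to\forall M$ the configuration really stays in the forms that trigger lines~5--6 and~12--14, so that the fallback recoloring of line~15 (which would leave a robot colored $M$ at its original endpoint and break the $2\delta$ bound) never fires for an $SS$-origin cycle; and I must separately treat the branch in which one endpoint empties before the other, where a lone robot colored $S$ observes $\#_S=1$ (a form covered by line~10) and recolors to $E$ (line~11), creating a Gathering point. For that branch the span bound is irrelevant, and I would argue, using the Gathering-point creation/confirmation discussion preceding this lemma, that all remaining $M$ robots subsequently converge onto the unique $E$ point, yielding $cc(\conf(t'))=E$, i.e.\ outcome~(1). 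Checking that these pending-state configurations are exhaustively classified, and that no other guard of Algorithm~\ref{algo:FLGA} intervenes to disturb the contraction, is the delicate part of the proof.
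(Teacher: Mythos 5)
Your handling of the first phase ($SS\to\forall M$, plus the branch where one endpoint empties first and a lone $S$-robot sees $\#_S=1$ and recolors to $E$) matches the paper's Cases (a)--(b). The genuine gap is in your second phase. You claim that $\forall M\to\forall E$ ``is handled exactly as in Lemma~\ref{lemma:MtoE}: each activated robot only recolors $M\to E$ and does not move, \ldots and the positions at $t'$ are those at $t_M$.'' That is the behaviour of the \emph{simulation} algorithm, not of $\LU$-Gather-in-ASYNC. In Algorithm~\ref{algo:FLGA}, the moment the first robot recolors to $E$ the configuration enters $\forall M,E$, and there the guard distinguishes $\#_E=1$ from $\#_E\geq 2$: with exactly one $E$-point, every $M$-robot not at $p_E$ sets $des_i\leftarrow p_E$ and \emph{moves} rather than recoloring --- this is precisely the Gathering-point mechanism. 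So it is false that positions are frozen at $t_M$, and it is false that the system always reaches $\forall E$ on the positions held at $t_M$. The dichotomy between outcomes (1) and (2) is not ``read off'' from the geometry at $t_M$; it is decided by whether robots at one or at several distinct points observe $\forall M$ (before any $E$ appears, possibly as pending colors) and recolor. The paper's proof spends its Cases (c), (d), (e) on exactly this split: one $E$-point $\Rightarrow$ convergence to $cc=E$ via Lemma~\ref{lemma:onePointME}; two or more $E$-points $\Rightarrow$ everyone recolors and $cc=EE^*E$ with the $2\delta$ contraction. Your proof as written never establishes outcome (1) for the case where the Gathering point is created during $\forall M\to\forall E$, and your derivation of outcome (2) rests on the unsupported claim that no robot moves after $t_M$.

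A secondary inaccuracy: you assert the $2\delta$ contraction already at $t_M$, ``when $\forall M$ is first attained.'' At that instant some robots may have set their color to $M$ but not yet begun (or finished) their move, so the extreme occupied points at $t_M$ can still be the original endpoints. The $\delta$-per-endpoint reduction is only guaranteed once all moves toward $p_m$ have completed, which is where the paper places the bound (at the time $t'$ when $EE^*E$ is reached). This is repairable, but it should be stated at the right moment of the execution.
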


\begin{proof}
Transitions from $cc(\conf(t))=SS$ are depicted in Fig.~\ref{fig:FLGALOOP}.

  {\bf Case(a)} Transition from $cc(\conf(t))=SS$.\\
  We show that
  if $cc(\conf(t))=SS$, there is a time $t_{\forall M}$($>t$) such that $cc(\conf(t_{\forall M}))=(M|M[pm])(M|M[pm])^*(M|M[pm])$ and the destination of robots in $M[pm]$ is the midpoint of onLDS in $\conf(t)$ ({\bf Case~(c)}) or a time $t_{S}$($>t$) such that $cc(\conf(t_S))=(S|S[pc\rightarrow M]|M|M[pm])(M|M[pm])^*(M|M[pm])$, $\#_S(\conf(t_S))=1$ and the destination of robots in $M[pm]$ or $S[pc \rightarrow M]$ is the midpoint of onLDS in $\conf(t)$ ({\bf Case~(b)}).
  
  Let $p_m$ be the the midpoint of onLDS in $\conf(t)$. As long as the both endpoints have $S$, robots with $S$ change their colors to $M$ and will move to $p_m$ as the destination (lines~5-6), and robots with $M$ do nothing when observing a configuration in $\forall S,M$. Thus, if there are no $S$ at the both endpoints at the same time, setting time $t_{\forall M}$ be the last robot(s) with $S$ perform $\comp$,  $cc(\conf(t_{\forall M}))= (M|M[pm])(M|M[pm])^*(M|M[pm])$, because robots with $M$ do nothing when observing a configuration in $\forall S,M$ (line~18). Otherwise, there exists a time when one of the endpoints still has $S$ or $S[pc\rightarrow M]$ but the other endpoint does not have $S$. Setting time $t_S$ be such time, $cc(\conf(t_S))= (S|S[pc\rightarrow M]|M|M[pm])(M|M[pm])^*(M|M[pm])$ and $\#_S(\conf(t_S))=1$ .

  {\bf Case~(b)} Transition from\\ $cc(\conf(t_S))=(S|S[pc\rightarrow M]|M|M[pm])(M|M[pm])^*(M|M[pm])$ ($t_S>t$) and $\#_S(\conf(t_S))=1$\\
  We show that in this case %
  there is a time $t_{\forall M}$($>t_S$) such that $cc(\conf(t_{\forall M}))=(M|M[pm])(M|M[pm])^*(M|M[pm])$({\bf Case~(c)}) or a time $t_{GP}$($>t_S$) such that $cc(\conf({GP}))=(E|M|M[pm])(M|M[pm])^*(M|M[pm])$ and $\#_E(\conf(t_{GP}))=1$\\ ({\bf Case~(d)}). 
  
  In the case that $cc(\conf(t_S))=(S[pc\rightarrow M]|M|M[pm])(M|M[pm])^*(M|M[pm])$, robots in $S[pc\rightarrow M]$ change their colors to $M$ and move to $p_m$. Furthermore, if robots with $M$ finish moving and become active again, they do nothing as long as $S$ exists. Therefore, there is a time $t_{\forall M}$ when $cc(\conf(t_{\forall M}))= (M|M[pm])(M|M[pm])^*(M|M[pm])$, where robots in $M[pm]$ will move to $p_m$. Otherwise, there is a robot with $S$ that observes only one point with $S$, and the robot change its color to $E$ even if the color configuration is $(S|M|E)M^*M$. 
Furthermore, if robots with $M$ finish move and become active again, they do nothing as long as $S$ exists. Therefore, there is a time $t_{GP}$ when $cc(\conf(t_{GP}))= (E|M|M[pm])(M|M[pm])^*(M|M[pm])$ and $\#_E(\conf(t_{GP}))=1$.

  {\bf Case~(c)} Transition from $cc(\conf(t_{\forall M}))=(M|M[pm])(M|M[pm])^*(M|M[pm])$ ($t_{\forall M}>t_S>t$))\\
  We show that if $cc(\conf(t_{\forall M}))=(M|M[pm])(M|M[pm])^*(M|M[pm])$, there is a time $t'$($> t_{\forall M}$) such that $cc(\conf(t'))=E$, a time $t_E$($>t_{\forall M}$) such that $cc(\conf(t_E))=(M|M[pm,pc\rightarrow E])^+(E|M|M[pm,pc\rightarrow E])(M|M[pm,pc\rightarrow E])^*$ and\\ $\#_E(\conf(t_{E}))=1$({\bf Case~(e)}), or a time $t'$($>t_{\forall M}$) such that $cc(\conf(t')=EE^*E$.
  
  If  all robots moved to $p_m$ or are moving toward the same destination $p_m$ since all robots reach $p_m$, the color-configuration becomes $M$ or $(M|E)$. Then, robots with $M$ change their color to $E$ until the color-configuration becomes $E$ by Lemma~\ref{lemma:onePointME}. Thus, there is a time $t'$ such that $cc(\conf(t'))=E$.
  
  If there are robots at only one point that observe $cc({\cal SS})=MM^*M$ and perform \emph{Compute}, the robots change their colors to $E$. Robots on points except the point may observe $cc({\cal SS})=MM^*M$ but do not change their colors. Setting time $t_E$ be such time, $cc(\conf(t_E)=(M|M[pm,pc\rightarrow E])^+(E|M|M[pm,pc\rightarrow E])(M|M[pm,pc\rightarrow E])^*$. Otherwise, there are robots at one or more different points that observe $cc({\cal SS})=MM^*M$ and change their colors to $E$. Then, the configuration satisfies $\#_E(\conf)\geq 2$. Since all robots with $M$ become active by the fairness of ASYNC and change their colors to $E$ until the configuration in  $\forall E$, there is a time $t'$ when the color-configuration becomes $EE^*E$. In this case, all robots have toward the midpoint $p_m$ of LDS in $\conf(t)$, and changed their colors to $E$. Therefore, $dis(\conf(t'))$ is at least $2\delta$ shorter than $dis(\conf(t))$.\\

  {\bf Case~(d)} Transition from $cc(\conf(t_{GP}))=(E|M|M[pm])(M|M[pm])^*(M|M[pm])$\\($t_{GP}>t_S>t$) and $\#_E(\conf(t_{GP}))=1$.
  
  If $cc(\conf(t_{GP}))=(E|M|M[pm])(M|M[pm])^*(M|M[pm])$ and $\#_E(\conf(t_{GP}))=1$, there is a time $t'$($>t_{GP}$) such that $cc(\conf(t'))=E$. Let $p_{GP}$ be a position of the robot with $E$ at $t_{GP}$. Robots not on $p_{GP}$ have colors $M$ and be moving toward $p_m$ or have finished moving. They move toward $p_{GP}$ when observing this configuration. It is possible that robots with $M[pm]$ on $p_{GP}$ move toward $p_m$ and leave $p_{GP}$. However, when they are not on $p_{GP}$, they perform the same action as previously described. Robots with $E$ on $p_{GP}$ do nothing. Thus, the number of positions where a robot has color E do not change, the number of robots with $M[pm]$ decreases. Also, since the number of robots with E increases when $(M|E)$, following from Lemma~\ref{lemma:onePointME}, there is a time $t'$ such that $cc(\conf(t'))=E$.

  {\bf Case~(e)} Transition from $cc(\conf(t_E))=(M|M[pm,pc\rightarrow E])^+\\(E|M|M[pm,pc\rightarrow E])(M|M[pm,pc\rightarrow E])^*$($t_E>t_{\forall M}>t$) and $\#_E(\conf(t_{E}))=1$.\\
  
  We show that for these configurations %
  there is a time $t'$($>t_E$) such that $cc(\conf(t'))=EE^*E$, or a time $t'$($>t_E$) such that $cc(\conf(t'))=E$. 
  
  Let $p_E$ be a position of the robot with E at $t_E$. If a robot not on $p_E$ observes between $t_{\forall M}$ and $t_E$, it changes its color to $E$. When it changes its color, the configuration becomes $\#_E(\conf)\geq 2$. Since all robots with $M$ become active by the fairness of ASYNC and change their colors to $E$ until the configuration in $\forall E$, there is a time $t'$ at which the color-configuration becomes $EE^*E$. In this case, all robots have toward the midpoint $p_m$ of LDS in $\conf(t)$, and changed their colors to $E$. Therefore, $dis(\conf(t'))$ is at least $2\delta$ shorter than $dis(\conf(t))$. Otherwise, robots that observes between $t_{\forall M}$ and $t_E$ are at only $p_E$. Because of the same argument of {\bf Case~(d)}, %
  there is a time $t'$ such that $cc(\conf(t'))=E$.
  \qed
\end{proof}

\begin{lemma}\label{lemma:EE}
  If $cc(\conf(t))=EE$, there is a time $t'$($>t$) such that $cc(\conf(t'))=SS$.
\end{lemma}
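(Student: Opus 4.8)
The plan is to exploit that the $EE\to SS$ step of the colour-cycle involves no motion: I will fix the two endpoints occupied at time $t$ and prove that, until $cc(\conf)=SS$ is reached, every robot action is an in-place recolouring $E\to S$, so that by fairness all robots eventually turn $S$ at the two unchanged points.

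First I would record the invariant that, as long as at least one robot still carries colour $E$, all robots sit on the two endpoints $p_l,p_r$ occupied at $t$ and all colours lie in $\{S,E\}$; hence $cc(\conf)$ is one of $EE$, the two-point mixture $(S|E)(S|E)$, or $SS$. The point is that these configurations drive only harmless actions in Algorithm~\ref{algo:FLGA}: a robot seeing $EE$ takes the $\forall E$ branch and sets $\ell_i\leftarrow S$ with no destination (line~33); a robot seeing the two-point mixture $(S|E)(S|E)$ takes the $\forall S,E$ branch, where an $E$-robot sets $\ell_i\leftarrow S$ (line~17) and an $S$-robot does nothing; and the three-point patterns $(S|E)E(S|E)$ and $SES$ are never matched, since only two points are occupied. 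In every case the robot stays in place, no colour $M$ is introduced, and no third point is created, so the occupied positions are frozen and the number of $E$-robots is non-increasing.

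Next I would close the argument by fairness. A robot executes the $\forall S$ branch of Algorithm~\ref{algo:FLGA} (which recolours $S\to M$ and issues a move) only after observing $cc=SS$, and such an observation requires every robot, including the observer, to be $S$; therefore it cannot occur while any robot is still $E$, and the configuration cannot leave the two-point $\{S,E\}$ regime before $SS$ appears. By fairness of ASYNC every robot performs a $\comp$ after $t$; each such $\comp$ by an $E$-robot turns it $S$ (whether it observed $EE$ or the mixture), while $S$-robots remain $S$. Hence after all robots have acted at least once the number of $E$-robots is zero, giving a time $t'>t$ at which all robots are $S$ on the two frozen points, i.e. $cc(\conf(t'))=SS$.

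The delicate part is ruling out that an ASYNC \emph{pending} operation already present at $t$ violates the invariant as soon as it is applied. Since $cc(\conf(t))=EE$ forces every robot to be on an endpoint with colour $E$, no pending \emph{move} can be in flight (a moving robot would occupy an interior point) and no robot can be mid-transit toward the midpoint via line~38. It remains to check the stale snapshots a pending-colour robot may hold: here I would use that the robot's own colour $E$ appears in its own snapshot, which is incompatible with the all-$M$ or $\{S,M\}$ patterns and forces the robot into the unique non-$M$ slot of patterns such as $M^+(E|M)M^*$, so the movement branches (lines~28 and~38) and the $S\to M$ branches (which need $\ell_i\in\{S,M\}$) are never triggered; every pending $\comp$ thus reduces to $\ell_i\leftarrow S$ or to no change. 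Carrying out this finite case analysis over the snapshot patterns of Algorithm~\ref{algo:FLGA}, rather than the fairness core, is where the real care is needed.
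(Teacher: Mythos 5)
Your proof is correct and follows essentially the same route as the paper's: the $EE\to SS$ step consists only of in-place recolourings $E\to S$ (via the $\forall E$ and $\forall S,E$ branches), so fairness of ASYNC yields a time $t'$ with $cc(\conf(t'))=SS$. The paper's own proof is just this two-line fairness argument; your additional case analysis of the intermediate $(S|E)(S|E)$ configuration and of pending operations is extra diligence the paper omits, not a different approach.
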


\begin{proof}
Robots change their colors to $S$ until the color-configuration becomes $SS$. Since all robots become active after $t$ by the fairness of ASYNC, there is a time $t'$ at which the color-configuration becomes $SS$.
\qed
\end{proof}

\begin{lemma}\label{lemma:EEpE}
  If $cc(\conf(t))=EE^+E$, there is a time $t'(>$t) such that $cc(\conf(t'))=EE_mE$.
\end{lemma}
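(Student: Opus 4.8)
The plan is to read $EE^+E$ as a $\forall E$ configuration and let Algorithm~\ref{algo:FLGA} push every interior robot onto the common, fixed midpoint while the two endpoints stay put, so that the configuration eventually reads $EE_mE$. First I would pin down the governing rule. Since $cc(\conf(t))=EE^+E$ every robot is colored $E$, so the $\forall E$ case applies, and the snapshot is none of $E$, $EE$, or the single--midpoint pattern $EE_mE$; hence the active branch is the $EE^+E$ rule, under which a robot at its nearest endpoint ($p_i=p_n$) does nothing, while every interior robot sets $des_i\leftarrow(p_n+p_f)/2$, i.e.\ aims at the midpoint $p_m$ of the current segment. Two features are crucial: no robot changes its color in this branch, and $p_m$ is the \emph{same} fixed point for all robots because the two endpoints are common to all snapshots.

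Next I would establish the invariant maintained from time $t$ onward: both endpoints remain occupied by stationary $E$-robots, every interior robot is colored $E$ with destination $p_m$, and at least one interior point stays occupied. Endpoint robots never move (their rule is vacuous) and never recolor, so the endpoints, and therefore $p_m$, are fixed; interior robots move only toward $p_m$, and once at $p_m$ they recompute $des_i=p_m$ and remain; and since the branch performs no color assignment, the configuration stays $\forall E$. Consequently $cc(\conf)$ keeps the form $EE^+E$ -- in particular it is never $E$, $EE$, or $EE_mE$ -- as long as some interior robot has not yet reached $p_m$.

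Convergence then follows from fairness and non-rigidity. Each interior robot is activated infinitely often and, whenever it moves toward $p_m$, advances at least $\delta$ (reaching $p_m$ once it is within $\delta$), so it arrives at $p_m$ after finitely many activations and stays there afterward. As there are finitely many robots, there is a first time $t'>t$ at which all interior robots simultaneously occupy $p_m$; at that instant the occupied points are exactly the two endpoints and $p_m$, all colored $E$, so $cc(\conf(t'))=EE_mE$, as required.

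The main obstacle is the ASYNC bookkeeping behind the invariant, namely ruling out a stray pending move (a destination computed before $t$) or a premature reading of the snapshot as the three--point pattern $EE_mE$. For the former, the only destinations computable in the phases feeding into and inside $EE^+E$ are the fixed midpoint $p_m$, so any robot already in motion is heading to $p_m$ and the monotone decrease of its distance to $p_m$ is preserved. For the latter, $EE_mE$ requires every interior robot to sit exactly at $p_m$, which first occurs precisely at the time $t'$ produced above; hence no rule other than the $EE^+E$ branch fires before $t'$, and the invariant is never broken.
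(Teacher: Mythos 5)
Your proof is correct and follows essentially the same route as the paper's (much terser) argument: endpoints are fixed and silent, interior robots all target the common midpoint, no colors change, and non-rigidity plus fairness force every interior robot to reach $p_m$ in finitely many activations, yielding $EE_mE$. Your additional bookkeeping about pending moves and the impossibility of a premature $EE_mE$ reading only elaborates what the paper leaves implicit.
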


\begin{proof}
If the color-configuration is $EE^+E$, robots at endpoints do nothing, and robots at points except the endpoints  move to the midpoint $p_m$ of two endpoints in $\conf(t)$. Therefore, moving robot's destination is fixed, and the color-configuration becomes $EE_mE$.
\qed
\end{proof}

\begin{lemma}\label{lemma:EEE}
  If $cc(\conf(t))=EE_mE$, there is a time $t'(>$t) such that $cc(\conf(t'))=SE_mS$.
\end{lemma}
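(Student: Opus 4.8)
The plan is to track the configuration starting from $cc(\conf(t))=EE_mE$ and exploit the fact that no robot ever moves in the cases it can match, so the three occupied positions (the two endpoints and the midpoint $p_m$) stay fixed and the whole evolution is just a sequence of colour changes on those three points. First I would note that, starting from $EE_mE=EEE$, an activated robot can only match the $\forall E$ rule with $cc({\cal SS}_i)=EEE$ or the $\forall S,E$ rule with $cc({\cal SS}_i)=(S|E)E(S|E)$ (the two-point case $cc=(S|E)(S|E)$ is impossible because the midpoint is always occupied). Neither of these rules in Algorithm~\ref{algo:FLGA} sets a destination, so nothing moves, no new occupied point is created, and the configuration keeps the shape $(S|E)E(S|E)$ with $p_m$ coloured $E$ throughout.

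Next I would show that the midpoint stays $E$ while the endpoints drift monotonically to $S$. A robot at $p_m$ has $p_i\neq p_n$ and colour $E$, so in the $cc=EEE$ case it does nothing, and in the $(S|E)E(S|E)$ case it matches neither the $\#_E>1,\,p_i=p_n,\,l_i=E$ clause (it fails $p_i=p_n$) nor the $cc=SES,\,l_i=S$ clause (it has colour $E$); hence midpoint robots never recolour. An endpoint robot still carrying $E$ always sees $\#_E\ge 2$ (its own point plus the midpoint), so it can never observe $cc=SES$; therefore, whenever activated, it recolours to $S$, via the $p_i=p_n$ branch of $cc=EEE$ or the $\#_E>1,\,p_i=p_n,\,l_i=E$ branch of $(S|E)E(S|E)$. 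An endpoint robot already carrying $S$ takes no action while $\#_E\ge 2$, because the only rule that moves an $S$ robot off $S$ requires it to observe $cc=SES$, i.e.\ $\#_E=1$. Thus $\#_E(\conf)$ is non-increasing, and by the fairness of ASYNC every endpoint robot still holding $E$ is eventually activated and turns $S$, so $\#_E(\conf)$ reaches $1$.

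Finally I would let $t'$ be the first time with $\#_E(\conf(t'))=1$. Since $p_m$ is always $E$, at $t'$ the only $E$-point is $p_m$, so both endpoints are free of $E$; and no endpoint robot can be coloured $M$ at $t'$, because the transition $S\to M$ is triggered only by observing $cc=SES$ ($\#_E=1$), which cannot have occurred strictly before $t'$. Hence every endpoint robot is coloured $S$ at $t'$, giving $\conf(t')=SE_mS$. I expect the last step to be the main obstacle, since it is the only place where ASYNC subtleties bite: the clean way through is to observe that any snapshot with $\#_E=1$ already certifies the observed configuration was $SE_mS$, so either $SE_mS$ is realised exactly when the last endpoint $E$ turns $S$, or it was realised at an even earlier instant when some robot read it; pending colours do not interfere because a robot in a pending-colour state still carries its old colour, so the \emph{actual} configuration at $t'$ is $SE_mS$.
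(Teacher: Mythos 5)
Your proof is correct and follows essentially the same route as the paper's: no rule applicable here sets a destination, midpoint robots never recolour, endpoint robots with $E$ recolour to $S$ when activated, and endpoint robots with $S$ stay inert until $\#_E=1$, so fairness yields $SE_mS$. You simply carry out the case analysis of Algorithm~\ref{algo:FLGA} and the pending-colour bookkeeping in more detail than the paper's two-sentence sketch.
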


\begin{proof}

If the color-configuration is $EE_mE$, robots at the endpoints change their colors to $S$, and robots at the midpoint do nothing. The color of robots with $E$ at the midpoint do not change until robots at the endpoints change their colors to $S$. Thus, the color-configuration becomes $SE_mS$.
\qed
\end{proof}

\begin{lemma}\label{lemma:SES}
  If $cc(\conf(t))=SE_mS$, there is a time $t'(>$t) such that $cc(\conf(t'))=ME_mM$.
\end{lemma}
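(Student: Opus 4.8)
The plan is to follow the colors of the two endpoint occupations while showing that the midpoint occupation is never disturbed, and then to let fairness drive every endpoint robot from $S$ to $M$. First I would record the easy structural facts at $\conf(t)=SE_mS$: the endpoints carry $S$, the midpoint carries $E$ with $\#_E=1$, and no robot is assigned a destination. Since (as I will verify) no robot ever acquires a destination before $ME_mM$ is reached, the three occupied points stay geometrically fixed, the midpoint remains the unique $E$-point, and the whole argument reduces to tracking endpoint colors.

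Next I would enumerate the snapshots an endpoint robot can see during this phase. As long as some endpoint still carries $S$ the configuration is either $SES$ (colors $S,E$ only, so in $\forall S,E$) or $(S|M)E(S|M)$ with both an $S$ and an $M$ present (so in $\forall S,M,E$); once no endpoint carries $S$ the configuration is exactly $ME_mM$. In the $\forall S,E$ branch the only guard that an $S$-robot satisfies is the one for $cc=SES$, which recolors it $M$ and assigns no destination, while the midpoint $E$-robot fails every guard (the two-point guard needs only two points, the $\#_E>1$ guard fails since $\#_E=1$) and does nothing. In the $\forall S,M,E$ branch the only guard an endpoint $S$-robot (for which $p_i=p_n$) satisfies is the one for $cc=(S|M)E(S|M)$, which again recolors it $M$ without moving; $M$-robots and the $E$-robot fall through because every applicable guard requires $\ell_i=S$. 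Thus no robot ever moves and no robot is ever recolored to anything but $M$, so by the fairness of ASYNC every endpoint robot is eventually activated and recolored, and the configuration at the instant the last one turns $M$ is the required $\conf(t')=ME_mM$.

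The main obstacle is the disambiguation inside the $\forall S,M,E$ branch: I must rule out that an intermediate configuration is read as a ``collapsing'' pattern $M^+(S|M|E)M^*$ or $M^*(S|M|E)M^+$, whose rule would recolor a surviving $S$-robot to $E$ and destroy the already-fixed gathering point at the midpoint. The delicate case is when one endpoint has already become pure $M$ while the other still carries $S$, e.g.\ $ME_mS$. Here I would argue from the run structure of those patterns: each consists of a run of pure-$M$ points flanking a single distinguished point, and the midpoint, being colored $E$, can neither lie inside a pure-$M$ run nor, as the distinguished point, be flanked on the far side by the opposite endpoint's $S$ (which would have to be a pure-$M$ point). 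Hence $ME_mS$ — and every $(S|M)E(S|M)$ with a non-$M$ endpoint — fails to match the collapsing patterns and is matched only by $(S|M)E(S|M)$, so the intended $S\to M$ rule fires.

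Finally I would clear the routine ASYNC bookkeeping: a robot whose $\look$ preceded the final recoloring still computes only ``recolor $S$ to $M$'' (or ``do nothing''), since every reachable snapshot in this phase is one of $SES$, $(S|M)E(S|M)$, or $ME_mM$, and none of these triggers a move before $ME_mM$ is attained; therefore no pending $\comp$ can perturb the geometry or the midpoint. Combining the color-monotonicity (endpoints only ever advance $S\to M$), the invariance of the midpoint $E$-point, and fairness then yields the claimed time $t'$ with $cc(\conf(t'))=ME_mM$.
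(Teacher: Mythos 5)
Your proof is correct and takes essentially the same route as the paper's: endpoint robots recolor $S\to M$, the midpoint $E$-robots satisfy no guard and stay put, and fairness finishes the job. The paper's own proof is only two sentences; your additional check that intermediate configurations such as $ME_mS$ or $(S|M)E(S|M)$ cannot be matched by the collapsing patterns $M^+(S|M|E)M^*$ or $M^*(S|M|E)M^+$ (because the midpoint is a pure-$E$ point) is exactly the detail the paper leaves implicit, and it is argued correctly.
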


\begin{proof}
  If the color-configuration is $SE_mS$, robots at the endpoints change their colors to $M$, and robots at the midpoint do nothing. The color of robots with $E$ at midpoint do not change until robots at the endpoints change their colors to $M$. Thus, the color-configuration becomes $ME_mM$.
  \qed
\end{proof}

\begin{lemma}\label{lemma:MEM}
 If $cc(\conf(t))=ME_mM$, there is a time $t'$($>t$)  such that  $cc(\conf(t'))=E$.
\end{lemma}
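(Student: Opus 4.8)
The plan is to show that the configuration $ME_mM$ already exhibits a \emph{confirmed} Gathering point and that every robot is then driven onto it without that point ever being destroyed. First I would fix notation: in $cc(\conf(t))=ME_mM$ the unique position carrying color $E$ is the midpoint, call it $p_E$, so $\#_E(\conf(t))=1$, while both endpoints carry only $M$. The whole argument rests on the invariant that, for every $\tau\geq t$, color $E$ occurs \emph{only} at $p_E$ (hence $\#_E(\conf(\tau))=1$ and $p_E$ never moves), and every robot not at $p_E$ carries $M$ with recomputed destination $p_E$. Establishing and maintaining this invariant is the heart of the proof.

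Next I would verify the invariant directly from the $\forall M,E$ case of Algorithm~\ref{algo:FLGA}. While robots occupy at least two distinct positions, the color-configuration is $M^+(E|M)M^*$ or $M^*(E|M)M^+$: a single $E$ at $p_E$ with everything else $M$, the $E$-point being either interior (the two $M$-blocks nonempty) or the leftmost/rightmost occupied point after one side has collapsed onto $p_E$. In this case an activated robot takes the first sub-branch (lines in the $\forall M,E$ case matching $M^+(E|M)M^*$): if $p_i\neq p_E$ it only sets $des_i\leftarrow p_E$ and changes no color, and if $p_i=p_E$ it does nothing. In particular the $E$-robot at $p_E$ always satisfies $p_i=p_E$ (and later has $l_i=E\neq M$), so it never moves and never changes color; thus $p_E$ is a fixed target and $E$ is never written at any other location. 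Since robots move monotonically from their endpoint toward the midpoint $p_E$ and cannot overshoot it under non-rigid movement, they stay on their own side, so one of the two regular expressions always matches in the multi-point case.

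The main obstacle is ASYNC: robots are observed in pending move at intermediate positions, and I must rule out a premature $M\to E$ transition that would create a second $E$-point and thereby destroy the Gathering point. The only color change to $E$ occurs in the else-branch, guarded by ``$\#_E({\cal SS}_i)\geq 2$ or $cc({\cal SS}_i)=(M|E)$''. By the invariant, $\#_E\geq 2$ is never observed, because $E$ lives only at $p_E$. A single-point snapshot $(M|E)$ can be observed only if, at that Look time, all robots share one position; since every $M$-robot's destination is the fixed point $p_E$ and the $E$-robot sits at $p_E$, that common position must be $p_E$ itself. Hence every $M\to E$ change happens at $p_E$ and keeps $E$ confined to $p_E$, so the invariant is preserved under every interleaving of Look, Compute, and Move events.

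Finally I would close by convergence. By fairness of ASYNC every robot is activated infinitely often, and under non-rigid movement each move toward the fixed target $p_E$ advances by at least $\delta$ (or reaches $p_E$ when within $\delta$); hence each $M$-robot arrives at $p_E$ after finitely many activations. Once all robots occupy $p_E$ the snapshot is the single point $(M|E)$, and the else-branch turns each remaining $M$ into $E$, which by fairness yields $cc(\conf(t'))=E$ for some $t'>t$. This last step is precisely the single-point convergence handled by Lemma~\ref{lemma:onePointME}, which I would invoke to conclude, giving $cc(\conf(t'))=E$ as required.
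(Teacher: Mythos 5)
Your proof is correct and follows essentially the same route as the paper: robots off $p_E$ are driven to $p_E$ while robots at $p_E$ stay put, yielding the single-point configuration $(M|E)$, which Lemma~\ref{lemma:onePointME} then converts to $E$. The paper's own proof is only three sentences and leaves implicit the invariant you spell out (that no second $E$-point can arise because the $M\to E$ branch is guarded by $\#_E\geq 2$ or a single-point snapshot), so your version is a more careful elaboration of the same argument rather than a different one.
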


\begin{proof}
Let $p_E$ be a position of the robot with $E$ at $t$. Robots not located on $p_E$ move to $p_E$. Since robots on $p_E$ do nothing, there is a time $t'$ at which the color-configuration is $cc(\conf(t'))=(M|E)$.
\qed
\end{proof}

\begin{lemma}\label{lemma:onePointME}
 If $cc(\conf(t))=(M|M[pc\rightarrow E]|E)$ and $\#_M(\conf(t))= 1$, there is a time $t'$($>t$)  such that  $cc(\conf(t'))=E$.
\end{lemma}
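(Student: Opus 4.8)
The plan is to treat this as a color-only process taking place at a single point, and to track as a monovariant the number of robots whose light currently displays $M$. Since the factor $(M|M[pc\rightarrow E]|E)$ in the hypothesis is a single factor (exactly as in the invocations from Cases~(c) and~(d) of Lemma~\ref{lemma:SS}), all robots sit on one common point $p$ at time $t$; moreover the factor lists no $M[pm]$, so no robot is in pending move. First I would verify that the robots never leave $p$. Whenever a robot on $p$ is activated it observes a single-point color-configuration, which is $M$ (the $\forall M$ case), $(M|E)$ (the else branch of the $\forall M,E$ case, because the segment patterns $M^+(E|M)M^*$ and $M^*(E|M)M^+$ specify two distinct endpoints), or $E$ (the $\forall E$ case with $cc=E$). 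None of these branches assigns a non-trivial destination, so $des_i$ defaults to $p$ and the robot stays; hence the configuration remains a single point for all times $\ge t$.

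Next I would show that the count of $M$-displaying robots is non-increasing. A robot whose light is $E$ does nothing in each of the three branches above, so color $E$ is absorbing. Conversely, every rule in Algorithm~\ref{algo:FLGA} that writes $M$ into a light requires the snapshot to contain color $S$ (these are the $\forall S$, $\forall S,M$ and $\forall S,M,E$ cases); from $t$ onward the single-point configuration involves only colors $M$ and $E$, so no such rule ever fires and no fresh $M$ is created. Consequently the number of robots displaying $M$ can only decrease over time, and by the hypothesis $\#_M(\conf(t))=1$ this set is finite and nonempty.

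Then I would argue that every robot displaying $M$ at time $t$ reaches color $E$ within finite time. A robot counted as $M[pc\rightarrow E]$ has already performed its \emph{Look} and need only finish the (finite) remainder of its cycle, whose pending \emph{Compute} sets its light to $E$. A robot whose light is $M$ but is not pending becomes active at some later time by the fairness of ASYNC; at the instant it takes its snapshot the configuration is still the point $p$ showing colors in $\{M,E\}$ with its own light equal to $M$, so it sees $cc=M$ or $cc=(M|E)$ and, by the $\forall M$ case or the else branch of the $\forall M,E$ case (which sets $l_i\leftarrow E$ precisely when $l_i=M$), it turns $E$. This conclusion does not depend on how the scheduler interleaves the other robots: whatever split between $M$ and $E$ is observed, an $M$-robot is always driven to $E$ and never back.

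Combining the three facts finishes the proof: the set of robots displaying $M$ at time $t$ is finite, each of them reaches $E$ at some finite time, and once at $E$ a robot remains $E$ while no new $M$ appears; taking $t'$ larger than all these finitely many transition times gives $cc(\conf(t'))=E$. The step I expect to require the most care is the treatment of the pending-color robots together with the asynchronous interleaving: I must ensure that a snapshot taken possibly during another robot's color change always lands in a branch forcing $M\rightarrow E$, which is exactly why it is convenient that both the $\forall M$ branch and the single-point $\forall M,E$ branch have the same effect on an $M$-robot.
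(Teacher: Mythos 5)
Your proposal is correct and follows essentially the same route as the paper's (much terser) proof: robots with $M$ are driven to $E$ by the $\forall M$ and single-point $\forall M,E$ branches, robots with $E$ do nothing, and fairness of ASYNC guarantees every $M$-robot is eventually activated, so the configuration reaches $E$. Your additional checks (that no branch reintroduces $S$ or $M$, and that no movement occurs at the single point) are careful elaborations of what the paper leaves implicit, not a different argument.
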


\begin{proof}
Robots with $M$ change their colors to $E$ until the color-configuration becomes $E$. Meanwhile, robots with $E$ do nothing. Since all robots become active after $t$ by the fairness of ASYNC, there is a time $t'$ at which the the color-configuration becomes $E$.
\qed
\end{proof}

\begin{lemma}\label{lemma:onePointSM}
 If $cc(\conf(t))=(S|S[pc\rightarrow M or E]|M)$,  $\#_S(\conf(t))= 1$ and $\#_M(\conf(t))= 1$, there is a time $t'$($>t$)  such that  $cc(\conf(t'))=(M|E)$ and $\#_M(\conf(t'))= 1$.
\end{lemma}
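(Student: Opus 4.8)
The plan is to follow the template of Lemma~\ref{lemma:onePointME}: read the per-robot rules of Algorithm~\ref{algo:FLGA} at a single occupied point, show that the single point is preserved, and close by fairness. First I would record the relevant transitions. When $cc({\cal SS}_i)=(S|M)$ the configuration lies in $\forall S,M$, so an activated robot with $l_i=S$ fires line~11, setting $l_i\leftarrow E$ and leaving $des_i$ at its current position, while a robot with $l_i=M$ matches no guard (lines~11, 13 and 15 all require $l_i=S$) and stays idle. As soon as one $E$ appears the point shows $(S|M|E)$, now in $\forall S,M,E$: line~41 again takes $S\to E$ without a destination, and the only remaining guard, line~42, requires the three-point pattern $(S|M)E(S|M)$ and therefore cannot fire at a single point, so $M$- and $E$-robots remain idle. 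Hence throughout the process an $M$-colored robot neither recolors nor moves.

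Next I would dispose of the pending robots. The hypothesis $(S|S[pc\rightarrow M or E]|M)$ records pending \emph{color} only, so at time $t$ no robot is in transit. A pending $S[pc\rightarrow E]$ robot merely completes to $E$ with no move. For a pending $S[pc\rightarrow M]$ robot I must certify that its Compute does not carry it off the point: the only $S\to M$ rules that set a nonzero destination are lines~5--6 and~13--14, both aiming at the midpoint of the observed segment, whereas line~15 recolors $S\to M$ with no destination at all. I would argue that in any execution reaching the stated single-point configuration such a robot's stored destination is the common point itself---either because it was recolored through line~15, or, at the hand-off from the simulation phase, because Corollary~\ref{corollary:destWhenSwitch} forces every still-moving robot's target onto the gathering point---so the pending move is vacuous and the single point survives.

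With these invariants the conclusion is immediate. No rule ever creates a fresh $S$, and the set of $M$-robots is non-decreasing, so the configuration stays a single point and $\#_M\ge 1$ is invariant. By the fairness of ASYNC every robot is eventually activated; each robot still displaying $S$ either finishes its pending Compute to $M$ or $E$, or activates anew and takes $S\to E$ via line~11 (or line~41 once some $E$ is present). After finitely many activations no robot displays $S$, and since at least one $M$ survives we reach a time $t'>t$ with $cc(\conf(t'))=(M|E)$ and $\#_M(\conf(t'))=1$.

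The hard part is the pending-$S[pc\rightarrow M]$ step: rigorously showing that such a robot's stored destination coincides with the common point rather than a distinct midpoint that would split the gathering. This is precisely where one must trace, through the transition graph of Figure~\ref{fig:FLGA} and the case analysis of Lemma~\ref{lemma:ColorOnLDS}, how a single-point $(S|M)$ configuration is reached; the remainder is the same monotonicity-and-fairness routine as in Lemma~\ref{lemma:onePointME}.
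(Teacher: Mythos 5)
Your proposal is correct and follows essentially the same route as the paper's proof: robots showing $S$ that observe the single-point configuration (as $(S|M)$ or, once some pending robot has completed, $(S|M|E)$) recolor to $E$ or complete their pending Compute to $M$ or $E$, while $M$- and $E$-robots match no guard until $\#_S(\conf)=0$, so fairness yields $(M|E)$ with the $M$-robots intact. You are in fact more careful than the paper, whose proof never addresses whether a pending $S[pc\rightarrow M]$ robot's stored destination could carry it off the common point; your appeal to Corollary~\ref{corollary:destWhenSwitch} and to how the single-point configuration is reached fills exactly the gap the paper leaves implicit.
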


\begin{proof}
It is possible that robot with $S$ observes $cc(\conf(t))$ or $cc(\conf)=(S|M|E)$,  $\#_M(\conf(t))= 1$, and $\#_E(\conf(t))= 1$ by the behavior of $S[pc]$. If a robot with $S$ observes $\conf(t)$, the robot changes its color to $E$. If a robot with $S$ observes $cc(\conf)=(S|M|E)$,  $\#_M(\conf(t))= 1$, and $\#_E(\conf(t))= 1$, the robot changes its color to $E$. In both case, robots with $M$ or $E$ do nothing until $\#_S(\conf)$ is $0$. Therefore, there is a time $t'$ such that  $cc(\conf(t'))=(M|E)$ and $\#_M(\conf(t'))= 1$.
\qed
\end{proof}

We show that Algorithm~\ref{algo:FLGA}
can work from the configurations in Lemma~\ref{lemma:ColorOnLDS}. %

\begin{figure}[ht]
  \centering
  \includegraphics[width=0.9 \textwidth,clip]{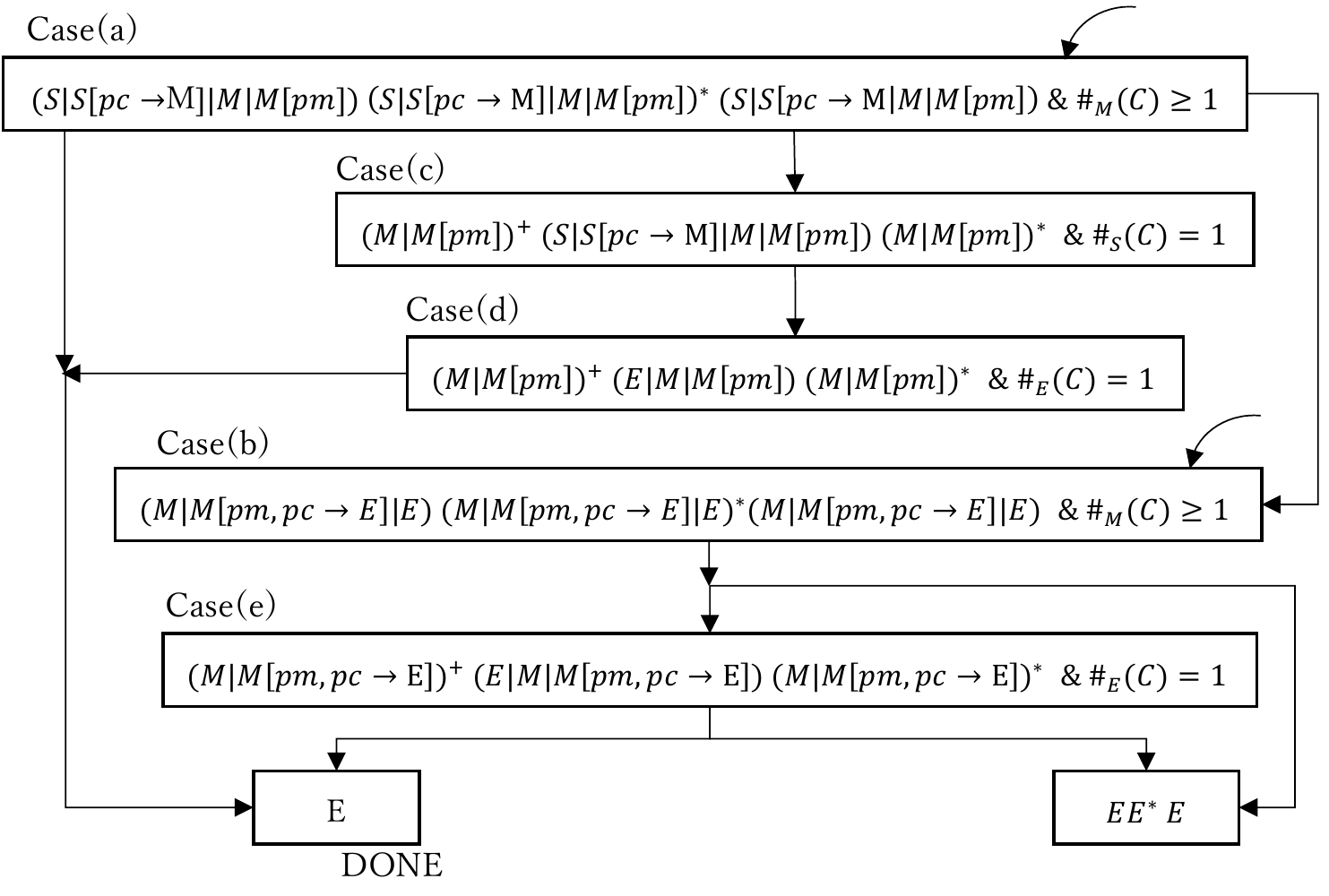}
  \caption{Transition Graph after switching from simulation of ElectOneLDS (Lemma~\ref{lemma:switch})}
  \label{fig:FLGAswitch}
\end{figure}

\begin{lemma}\label{lemma:switch}
  If $cc(\conf(t))$ is one of  (1)-(5) in Lemma~\ref{lemma:ColorOnLDS}, 
  then there is a time $t'$($\geq t$) such that the color-configuration satisfies (a) or (b) in the followings,\\
  \begin{enumerate}
    \item[(a)]$cc(\conf(t'))=EE^*E$,
    \item[(b)]$cc(\conf(t'))=E$.
  \end{enumerate}
\end{lemma}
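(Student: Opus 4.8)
The plan is to show that starting from any of the five configuration types in Lemma~\ref{lemma:ColorOnLDS}, the algorithm drives the system into either a gathered state (case (b), $cc(\conf(t'))=E$) or a uniform-$E$ line configuration (case (a), $cc(\conf(t'))=EE^*E$), from which the main color-cycle loop takes over. The natural strategy is a case analysis on the five starting configurations, reducing each to one of the intermediate configurations already handled by Lemmas~\ref{lemma:SSS}--\ref{lemma:onePointSM}. The key observation is that all five starting configurations live in the ``M-present'' part of a color-cycle: each contains at least one robot with color $M$, and the remaining robots carry colors among $\{S, M, E\}$ with possible pending moves and pending color changes toward a common destination on the onLDS line.

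First I would dispatch the two single-point-type starts. For case (4), $cc(\conf(t))=(S|S[pc\rightarrow M]|M)$ with at least one $M$ (a configuration on a single point, since onLDS endpoints have collapsed), the robots with $S$ eventually observe a single-point configuration and, by the line~38 branch of Algorithm~\ref{algo:FLGA}, change color to $E$, reaching $(M|E)$; then Lemma~\ref{lemma:onePointME} gives $cc(\conf(t'))=E$, which is case (b). For case (5), $cc(\conf(t))=(M|M[pc\rightarrow E]|E)$ on a single point, Lemma~\ref{lemma:onePointME} applies directly once $\#_M=1$ is reached, again yielding case (b). Cases (2) and (3) are the substantive ones: these are genuine line-segment configurations ($f g^* h$ form) with at least one $M$ and all pending moves targeting a point on the onLDS line. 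For case (2), $cc(\conf(t))$ mixing $S,S[pc\rightarrow M],M,M[pm]$, I would argue that since every moving robot heads to a common midpoint and every $S$-robot will change to $M$ (lines~5--6 or the $\forall S,M$ branch), the system funnels into a $\forall M$-type configuration $(M|M[pm])^*$, which is exactly the situation analyzed in Case~(c) of the proof of Lemma~\ref{lemma:SS}; that case already establishes a transition to $EE^*E$ or to $E$. For case (3), $cc(\conf(t))$ mixing $M, M[pm,pc\rightarrow E], E$, this is precisely the configuration handled by Case~(e) of Lemma~\ref{lemma:SS} (when $\#_E=1$) or leads directly to $EE^*E$ by the fairness argument when multiple points change to $E$.

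The cleanest way to organize the argument is therefore to reduce cases (2)--(5) to the intermediate configurations appearing inside the proof of Lemma~\ref{lemma:SS} (Cases~(b)--(e)) together with Lemmas~\ref{lemma:onePointME} and~\ref{lemma:onePointSM}, rather than re-deriving the transitions from scratch. I would invoke Corollary~\ref{corollary:destWhenSwitch} to guarantee that every moving robot's destination lies on the onLDS line, which is what makes the midpoint $p_m$ well-defined and common to all movers and lets the Lemma~\ref{lemma:SS} machinery apply unchanged. The transition structure is summarized in Figure~\ref{fig:FLGAswitch}, whose boxes correspond exactly to these five starting configurations.

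The main obstacle will be case (2), specifically handling the interleaving of robots still in $S$ or $S[pc\rightarrow M]$ with robots already in $M[pm]$ while some moves are still pending. The danger is a robot with $S$ observing a configuration where $\#_S=1$ and changing to $E$ prematurely (line~13 or line~38 behavior), which could split the system between the $\forall M$ route (Case~(c)) and the Gathering-point route (Case~(d)). I expect to resolve this exactly as in Case~(b) of Lemma~\ref{lemma:SS}: either the two endpoints never simultaneously lack $S$ and we reach a clean $\forall M$ configuration, or at some moment only one $S$ remains and that robot converts to $E$, pinning down a unique Gathering point with $\#_E=1$; in the latter situation Lemma~\ref{lemma:onePointME} (via Case~(d)) forces $cc(\conf(t'))=E$. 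Since both branches terminate in case (a) or (b), the lemma follows in all five cases.
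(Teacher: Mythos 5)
Your overall skeleton matches the paper's: dispatch starting configurations (1), (4), (5) via Lemmas~\ref{lemma:SSS}, \ref{lemma:SS}, \ref{lemma:onePointSM} and \ref{lemma:onePointME}, and walk (2) and (3) through the color-cycle until $EE^*E$ or $E$ is reached. The gap is in your reduction of cases (2) and (3) to the internal Cases~(b)--(e) of the proof of Lemma~\ref{lemma:SS}. Those cases are derived from the start $cc(\conf)=SS$, so in them (i) every pending mover has the \emph{midpoint} of the current segment as its destination, (ii) robots with $S$ occur only at the two endpoints, and (iii) at most two points carry $S$. None of these hold for configuration (2) of Lemma~\ref{lemma:ColorOnLDS}: the $M[pm]$ robots there are still finishing a \emph{pending ElectOneLDS move}, whose destination is the rightmost vertex of a contracted edge or the convex-hull center (a point on the line by Corollary~\ref{corollary:destWhenSwitch}, but generally not the midpoint); an $S$ robot may sit at an interior point of the segment; and $\#_S$ can be as large as the number of occupied points. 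Your claim that ``every moving robot heads to a common midpoint'' is therefore false at the switch, and once newly activated $S$ robots start moving to the midpoint while the old movers still head to the ElectOneLDS destination, two distinct destinations coexist --- which breaks the sub-case of Case~(c) of Lemma~\ref{lemma:SS} that argues ``all robots reach the same point $p_m$, so the configuration collapses to $(M|E)$.''

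This is exactly why the paper does not cite Lemma~\ref{lemma:SS} for (2) and (3) but re-derives the transitions as fresh Cases~(a)--(e) inside the proof of Lemma~\ref{lemma:switch}, using branches of Algorithm~\ref{algo:FLGA} that your sketch never invokes: the $\#_S\geq 2$ rule that makes $S$ robots (including interior ones) change to $M$ \emph{without moving}, the $\forall S,M,E$ branch, and a separate treatment of the situation where the single remaining $S$ point is interior rather than an endpoint. What actually drives (2) and (3) to $EE^*E$ or to a unique Gathering point is the destination-independent argument that either exactly one point's robots observe $\#_E=0$ and turn $E$ (yielding a Gathering point, then $E$ via Lemma~\ref{lemma:onePointME}) or at least two do (yielding $EE^*E$ by fairness) --- not the common-midpoint collapse. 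Your handling of (1), (4), (5) and your use of Corollary~\ref{corollary:destWhenSwitch} are fine; for (2) and (3) you need to redo the case analysis for these strictly more general configurations rather than reuse the one from Lemma~\ref{lemma:SS}.
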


\begin{proof}
For (1), we can prove it by   Lemmas~\ref{lemma:SSS} and~\ref{lemma:SS}. For (4) and (5), we can prove them by Lemmas~\ref{lemma:onePointSM} and~\ref{lemma:onePointME}, respectively.

For (2) and (3), we can prove it in the followings. Transitions from configurations (2) and (3) are depicted in Fig.~\ref{fig:FLGAswitch}.

\begin{itemize}
    \item {\bf Case(a)} $cc(\conf(t))=$\\ $(S|S[pc\rightarrow M]|M|M[pm])(S|S[pc\rightarrow M]|M|M[pm])^*(S|S[pc\rightarrow M]|M|M[pm])$ and $\#_M(\conf(t))\geq 1$

We show that if {\bf Case (a)} holds, 
there is a time $t_{\forall M}$($\geq t$) such that $cc(\conf(t_{\forall M}))=(M|M[pm,pc\rightarrow E]|E)(M|M[pm,pc\rightarrow E]|E)^*(M|M[pm,pc\rightarrow E]|E)$ ({\bf Case (b)}), a time $t_S$($\geq t$) such that $cc(\conf(t_S))=(M|M[pm])^+\\(S|S[pc\rightarrow M]|M|M[pm])(M|M[pm])^*$ and $\#_S(\conf(t))=1$({\bf Case (c)}), or there is a time $t'$($>t$) such that $cc(\conf(t'))=E$. Note that $cc(\conf(t))$ may be in {\bf Case (b)} or {\bf Case (c)}. 

In the case of $\#_S(\conf(t))\geq 3$, robots with $S$ change their colors to $M$. Thus, $\#_S(\conf(t))$ is decreasing until $\#_S(\conf)\leq 2$. 

If $\#_S(\conf(t))=2$, and the both endpoints of onLDS have $S$, robots with $S$ change their colors to $M$ and move to the midpoint of onLDS. 
Thus, $\#_S(\conf(t))$ decreases to $0$ or $1$.

If $\#_S(\conf(t))=2$, and robots with $S$ is not on the both endpoints of on, robots with $S$ change their color to $M$. Thus, $\#_S(\conf(t))$ decreases to $0$ or $1$. 

If $\#_S(\conf(t))=1$, there is a time $t_S$($>t$) when $cc(\conf(t_S))=(M|M[pm])^+\\(S|S[pc\rightarrow M]|M[pm])(M|M[pm])^*$, or there is a time $t_{op}$($>t$) \\when  $cc(\conf(t_{op}))=(S|S[pc\rightarrow M]|M)$. In the case of  $cc(\conf(t_{op}))=(S|S[pc\rightarrow M]|M)$,  $\conf(t'')$($t''=max\{t_{sc}|\conf(t_{sc})\in \forall S, t_{sc}<t\}$) is symmetric and contractible. Robots with $M[pm]$ that executed ElectOneLDS at $t''$ have same destination that is the center of ${\cal CH}(\conf(t''))$. Therefore, there is a case of $(S|S[pc\rightarrow M]|M)$. We can obtain from Lemma~\ref{lemma:onePointME} and  Lemma~\ref{lemma:onePointSM} that there is a time $t'$ such that $cc(\conf(t'))=E$.

If $\#_S(\conf(t))=0$, robots with $M$ change their colors to $E$. Thus, there is a time $t_{\forall M}$ when $(M|M[pm,pc\rightarrow E]|E)(M|M[pm,pc\rightarrow E]|E)^*(M|M[pm,pc\rightarrow E]|E)$.

  \item {\bf Case (b)}  $cc(\conf(t))=$\\$(M|M[pm,pc\rightarrow E]|E)(M|M[pm,pc\rightarrow E]|E)^*(M|M[pm,pc\rightarrow E]|E)$ and $\#_M(\conf(t))\geq 1$\\
We show that if {\bf Case (b)} holds,
there is a time $t_{E}$($\geq t$) such that $cc(\conf(t_E))=(M|M[pc\rightarrow E,pm])^+(M|M[pm,pc\rightarrow E]|E)(M|M[pm,pc\rightarrow E])$ and\\ $\#_E(\conf(t_E))=1$)({\bf Case (e)}) or a time $t'$($> t$) such that $cc(\conf(t'))=EE^*E$. Note that $cc(\conf(t))$ may be in {\bf Case (e)} or $cc(\conf(t))=EE^*E$. 

In the case of $\#_E(\conf(t))=0$, robots with $M$ change their colors to $E$. If there are robots at only one point that recognize $\#_E(\conf(t))=0$, there is a time $t_E$ when $(M|M[pm,pc\rightarrow E])^+(M|M[pm,pc]|E)(M|M[pm,pc\rightarrow E])^*$ and $\#_E(\conf(t_E)=1$. 

If there are robots at one or more points in different position that recognize $\#_E(\conf(t))=0$, $\#_E(\conf(t))$ is increasing to $2$ or more. Then, robots with $M$ change their colors to $E$ until the configuration in $\forall E$. The same behavior occurs when $\#_E(\conf(t))\geq 2$. Therefore, there is a time $t'$ when $EE^*E$.

  \item {\bf Case (c)} $cc(\conf(t_S))=(M|M[pm])^+(S|S[pc\rightarrow M]|M[pm])(M|M[pm])^*$ and $\#_S(\conf(t))=1$ ($t_S\geq t$)
  
  We show that if {\bf Case (c)} occurs,
there is a time $t_{\forall M}$($>t_S$) such that $cc(\conf(t_{\forall M}))=(M|M[pm,pc\rightarrow E]|E)(M|M[pm,pc\rightarrow E]|E)^*(M|M[pm,pc\rightarrow E]|E)$({\bf Case (b)}) or a time $t_{GP}$($>t_S$) such that $cc(\conf(t_{GP}))=(M|M[pm])^+\\(E|M|M[pm])(M|M[pm])^*$ and $\#_E(\conf(t_{GP}))=1$({\bf Case (d)}).

If all robots with $S$ perform $\look$ before $t_S$, they change their colors to $M$ and may move to a point on onLDS. Furthermore, if robots with $M$ finish moving and become active again, they do nothing as long as $S$ exists. Therefore, $\conf(t_S)$ becomes a configuration in $\forall M$. Then, since robots with $M$ change their colors to $E$,  there is a time $t_{\forall M}$ when $(M|M[pm,pc\rightarrow E]|E)(M|M[pm,pc\rightarrow E]|E)^*(M|M[pm,pc\rightarrow E]|E)$.

there is a time $t_{\forall M}$ when $cc(\conf(t_{\forall M}))= (M|M[pm])(M|M[pm])^*\\(M|M[pm])$. Otherwise, there is a robot with $S$ that does not perform $\look$ before $t_S$, and it changes its color to $E$ when observing $\#_S(\conf)=1$. Furthermore, if robots with $M$ finish moving and become active again, they do nothing as long as $S$ exists. Therefore, there is a time $t_E$ when $cc(\conf(t_{GP}))= (M|M[pm])^+(E|M|M[pm])(M|M[pm])^*$ and $\#_E(\conf(t_{GP}))=1$.

  \item {\bf Case (d)} $cc(\conf(t_{GP}))=(M|M[pm])^+(E|M|M[pm])(M|M[pm])^*$ and\\ $\#_E(\conf(t_{GP}))=1$ ($t_{GP}>t_S$)

We show that if {\bf Case (d)} occurs,
there is a time $t'$($>t_{GP}$) such that $\conf(t')$ is Gathering configuration. Let $p_{GP}$ be a position having robot with $E$ at time $t_{GP}$. Robots not on $p_{GP}$ have color $M$ and may move toward a point on onLDS or be done moving. They move toward $p_{GP}$ when observing this configuration. It is possible that robots with $M[pm]$ on $p_GP$ move toward $p_m$ and leave $p_{GP}$. However, when they are not on $p_{GP}$, they perform the same action as previously described. Robots with $E$ on $p_{GP}$ do nothing. Thus, the number of positions having robots have color $E$ do not change, and so there is a time $t'$ such that $cc(\conf(t'))=E$ by lemma~\ref{lemma:onePointME}.
  
  \item {\bf Case (e)} $cc(\conf(t_{E}))=(M|M[pc\rightarrow E,pm])^+\\(M|M[pm,pc\rightarrow E]|E)(M|M[pm,pc\rightarrow E])$ and $\#_E(\conf(t_E))=1$ ($t_E>t_{\forall M}$)

We show that if {\bf Case (e)} occurs,
there is a time $t'$($>t_E$) such that $cc(\conf(t'))=EE^*E$, a time $t'$($>t_E$) such that $\conf(t')$ is a Gathering configuration.

Let $p_E$ be the position having  robots with $E$ at time $t_E$. If there is a robot not on $p_E$ that observed between $t_{\forall M}$ and $t_E$, it changes its color to $E$. When it changes the color, the configuration has $\#_E(\conf)\geq 2$. Since all robots with $M$ become active by ASYNC scheduler's fairness and change their colors to $E$ until the configuration in $\forall E$, there is a time $t'$ when the color-configuration becomes $EE^*E$. Otherwise, robots that observed between $t_{\forall M}$ and $t_E$ are located at only $p_E$. Because of the same argument of transition from $(E|M|M[pm])(M|M[pm])^*(M|M[pm])$ and $\#_E(\conf)=1$, there is a time $t'$ when $cc(\conf(t'))=E$.
  \end{itemize}
  \qed
\end{proof}
\color{black}

Lemmas~\ref{lemma:ColorOnLDS}-\ref{lemma:switch} follow the following theorem and we obtain our main result.

\begin{theorem} \label{theorem:Gathering3col}
  $\LU$-Gather-in-ASYNC solves Gathering from onLDS for $\LU$ robots having 3 colors, under non-rigid movement.
\end{theorem}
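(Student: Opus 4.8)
The plan is to chain the transition lemmas proved above into a single argument that drives every reachable color-configuration to $cc(\conf)=E$ (all robots on one point, i.e.\ Gathering), using the endpoint distance $dis(\conf)$ as a potential that strictly decreases once per color-cycle. First I would invoke Lemma~\ref{lemma:ColorOnLDS} together with Corollary~\ref{corollary:destWhenSwitch}: when $\LU$-Gather-in-ASYNC takes over, the configuration is onLDS, its color-configuration is one of the five forms~(1)--(5), and every pending-move destination lies on the line through the current onLDS. This is precisely the hypothesis of Lemma~\ref{lemma:switch}, so that lemma yields a time $t_1$ at which either $cc(\conf(t_1))=E$---in which case Gathering is already achieved---or $cc(\conf(t_1))=EE^*E$.

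Next I would resolve the remaining case $cc(\conf(t_1))=EE^*E$ by splitting on its three sub-forms. If $cc=EE^+E$, Lemma~\ref{lemma:EEpE} produces $EE_mE$; if $cc=EE_mE$, the chain $EE_mE\to SE_mS\to ME_mM\to E$ supplied by Lemmas~\ref{lemma:EEE}, \ref{lemma:SES} and~\ref{lemma:MEM} reaches Gathering directly. The only other sub-form is $cc=EE$, and here Lemma~\ref{lemma:EE} returns the system to $cc=SS$, the start of a fresh color-cycle.

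To close the loop I would apply Lemma~\ref{lemma:SS}: from $cc=SS$ the system evolves either to $cc=E$ (Gathering) or to $cc=EE^*E$ with $dis(\conf)$ reduced by at least $2\delta$. Composed with the previous paragraph, every color-cycle that fails to gather returns (through $EE$) to an $SS$-configuration whose endpoint distance is at least $2\delta$ shorter than at the previous $SS$-configuration. Since $dis(\conf)$ is a nonnegative real dropping by the fixed amount $2\delta$ each cycle, only finitely many cycles can avoid $cc=E$. Once an $SS$-configuration with $dis(\conf)\le 2\delta$ is reached, the midpoint sits within distance $\delta$ of each endpoint, so by the $\delta$-reachability clause of the non-rigid model the endpoint robots that turn $M$ and head for the midpoint actually arrive there; all robots then coincide, giving $cc=M$ and then $cc=E$, so Gathering is attained in finite time.

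I expect the delicate step to be the use of Lemma~\ref{lemma:SS}. Its ASYNC interleavings---pending-move and pending-color robots reading intermediate snapshots---fan out into the Case~(a)--(e) analysis, and the care lies in ensuring that the $2\delta$ decrease is charged exactly once per cycle so that it genuinely defines a monotone potential rather than a per-transition estimate, and in verifying that the final distance collapse correctly appeals to the model guarantee that a robot within $\delta$ of its destination reaches it. Everything else is a mechanical concatenation of the transition lemmas already in hand.
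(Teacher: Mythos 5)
Your proposal is correct and follows essentially the same route as the paper: the paper's proof of this theorem is precisely the concatenation of Lemmas~\ref{lemma:ColorOnLDS}--\ref{lemma:switch} that you spell out, with Lemma~\ref{lemma:switch} funnelling every starting configuration into $E$ or $EE^*E$, the chain $EE^+E\to EE_mE\to SE_mS\to ME_mM\to E$ or the return through $EE\to SS$ handling the latter, and the $2\delta$ decrease of $dis(\conf)$ per color-cycle (from Lemma~\ref{lemma:SS}) guaranteeing termination once the endpoint distance falls below $2\delta$. Your explicit accounting of the $2\delta$ charge once per cycle and the final $\delta$-reachability collapse matches the paper's informal termination argument, so nothing is missing.
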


\begin{theorem} \label{theorem:GatheringASYNC}
  Gathering  can be solved in ASYNC by $\LU$  robots having 3 colors under non-rigid movement and agreement of chirality.
\end{theorem}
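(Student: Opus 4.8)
The plan is to read the statement off Algorithm~\ref{algo:3colorGather} as the composition of its two phases: first drive an arbitrary configuration to a single straight-line segment (onLDS), then gather from onLDS, everywhere reusing the same three colors $\{S,M,E\}$. First I would handle the onLDS-formation phase. Since ElectOneLDS is an $\OB$ algorithm it uses a single color, so instantiating the wrapper with $3\cdot 1=3$ colors is exactly what SIM-for-Unfair produces. The correctness chain is then purely a citation assembly: Theorem~\ref{theorem:SIMonLDS} certifies that the potential $f=\langle f_1^{area},f_2^{Cdist},f_3^{\#in},f_4^{Edist},f_5^{Vdist}\rangle$ strictly decreases along every ElectOneLDS step, which by the potential-function criterion established earlier means ElectOneLDS works in unfair SSYNC; Theorem~\ref{theorem:SIM} lifts any such unfair-SSYNC algorithm to ASYNC. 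Hence SIM-for-Unfair[ElectOneLDS] reaches onLDS in ASYNC. I would underline that termination is underwritten by $\delta$: every actual motion covers at least $\delta$, so no real-valued coordinate of $f$ can decrease indefinitely and the lexicographic descent bottoms out at $\langle 0,0,0,0,0\rangle$, i.e.\ onLDS, in finitely many simulated rounds.

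Next I would argue that the handover to $\LU$-Gather-in-ASYNC is clean. The delicate point of ASYNC is that at the instant $t$ when onLDS first appears there may still be robots in pending move or pending color. Corollary~\ref{corollary:destWhenSwitch} settles the geometric worry: every moving robot at $t$ has its destination on the straight line through the onLDS of $\conf(t)$, so all residual motion keeps the configuration on that line and onLDS persists thereafter. Lemma~\ref{lemma:ColorOnLDS} settles the symbolic worry by showing that $cc(\conf(t))$ must be one of the five color-configurations (1)--(5); these are precisely the admissible starting configurations declared in the input of $\LU$-Gather-in-ASYNC (Algorithm~\ref{algo:FLGA}), so the switch in Algorithm~\ref{algo:3colorGather} feeds the gathering routine a configuration it is built to accept.

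With a valid and persistent onLDS start, the gathering phase is exactly Theorem~\ref{theorem:Gathering3col}, which I would cite as already established: via Lemma~\ref{lemma:switch} each case (1)--(5) is driven to $EE^*E$ or to the gathered state $E$, and the transition lemmas (Lemmas~\ref{lemma:SSS}--\ref{lemma:onePointSM}) show that each subsequent color-cycle either creates a single $E$-point (a Gathering point) or shortens the segment by at least $2\delta$; since the length cannot fall below $2\delta$ without the endpoint robots reaching the midpoint, gathering completes in finite time. For the resource accounting I would note that both phases draw colors from $\{S,M,E\}$, so three colors suffice overall; chirality is consumed only inside ElectOneLDS and non-rigid movement is assumed throughout, matching the hypotheses of Theorem~\ref{theorem:GatheringASYNC} exactly.

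The step I expect to be the main obstacle is the handover in the second paragraph: making rigorous that the ASYNC residue present when onLDS first forms --- robots still carrying $M$ with a pending move, and robots with a pending color about to flip --- cannot push the configuration off the line or outside the five cases of Lemma~\ref{lemma:ColorOnLDS}. Everything else is a bookkeeping composition of results already proved, but this interface between the simulation and the gathering routine is where the asynchrony concentrates and must be controlled jointly by Corollary~\ref{corollary:destWhenSwitch} (destinations stay on the line) and Lemma~\ref{lemma:ColorOnLDS} (only admissible color-configurations arise).
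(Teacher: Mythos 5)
Your proposal is correct and follows essentially the same route as the paper: Theorem~\ref{theorem:SIMonLDS} plus Theorem~\ref{theorem:SIM} for the onLDS-formation phase, Lemma~\ref{lemma:ColorOnLDS} and Corollary~\ref{corollary:destWhenSwitch} for the handover, and Theorem~\ref{theorem:Gathering3col} (via Lemma~\ref{lemma:switch}) for the gathering phase, with the same three-color accounting. You also correctly locate the delicate point exactly where the paper concentrates its effort, namely the ASYNC residue of pending moves and pending colors at the instant onLDS first forms.
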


\section{Concluding Remarks}
\label{sec:conclusion}
We have shown a Gathering algorithm in non-rigid and ASYNC with $\LU$ of three colors. In order to obtain the algorithm, we have shown a simulating algorithm of any algorithm in unfair SSYNC by $\LU$ of three colors in ASYNC. We have reduced the number of colors used in the simulation to three from five, although the simulated algorithms are ones in unfair SSYNC.

The method by combining the simulation of SSYNC robots by ASYNC ones and algorithms working in SSYNC not only reduces the number of colors used in the resultant algorithm but also simplifies the proof of correctness of it. As is known from an example of ElectOneLDS, it seems to be very complicated to extend ElectOneLDS such that  it can work in ASYNC and prove the correctness. However, about correctness of the combined algorithm, it is enough to prove the correctness of the simulation working in ASYNC because the correctness of ElectOneLDS working in SSYNC has been obtained. 

One of the interesting open questions is the number of colors to solve Gathering in ASYNC, although two colors are enough to solve Rendezvous in ASYNC\cite{HDT}, we conjecture that three colors are necessary to solve Gathering in ASYNC.

\paragraph*{Acknowledgment}
This research was partly supported by  JSPS KAKENHI No. 20H04140, 20KK0232, 20K11685, 21K11748, and by Japan Science and Technology Agency (JST) SICORP Grant\#JPMJSC1806.
\bibliographystyle{plain}
\bibliography{referenceorg}

\begin{thebibliography}{10}

\bibitem{AP}
N.~Agmon and D.~Peleg.
\newblock Fault-tolerant gathering algorithms for autonomous mobile robots.
\newblock {\em SIAM Journal on Computing}, 36(1):56--82, 2006.

\bibitem{AOSY}
H.~Ando, Y.~Osawa, I.~Suzuki, and M.~Yamashita.
\newblock A distributed memoryless point concergence algorithm for mobile
  robots with limited visivility.
\newblock {\em IEEE Transactions on Robotics and Automation}, 15(5):818--828,
  1999.

\bibitem{BM}
S.~Bhagat and K.~Mukhopadhyaya.
\newblock Optimum gathering of asynchronous robots.
\newblock In {\em Algorithms and Discrete Applied Mathematics (CALDAM 2017)},
  pages 37--49, 2017.

\bibitem{BDT}
Z.~Bouzid, S.~Das, and S.~Tixeuil.
\newblock Gathering of mobile robots tolerating multiple crash faults.
\newblock In {\em the 33rd Int. Conf. on Distributed Computing Systems}, pages
  334--346, 2013.

\bibitem{DBLP:journals/access/CiceroneSN21}
S.~Cicerone, G.~Di Stefano, and A.~Navarra.
\newblock "semi-asynchronous": {A} new scheduler in distributed computing.
\newblock {\em {IEEE} Access}, 9:41540--41557, 2021.

\bibitem{CFPS}
M.~Cieliebak, P.~Flocchini, G.~Prencipe, and N.~Santoro.
\newblock Distributed computing by mobile robots: Gathering.
\newblock {\em SIAM Journal on Computing}, 41(4):829--879, 2012.

\bibitem{DFPSY}
S.~Das, P.~Flocchini, G.~Prencipe, N.~Santoro, and M.~Yamashita.
\newblock Autonomous mobile robots with lights.
\newblock {\em Theoretical Computer Science}, 609:171--184, 2016.

\bibitem{DHTW}
X.~D\'{e}fago, A.~H\'{e}riban, S.~Tixeuil, and K.~Wada.
\newblock Using model checking to formally verify rendezvous algorithms for
  robots with lights in euclidean space.
\newblock In {\em 2020 International Symposium on Reliable Distributed Systems
  (SRDS)}, pages 113--122, 2020.

\bibitem{DGCMR}
X.~D\'{e}fago, M.~Gradinariu Potop-Butucaru, J.~Cl\'{e}ment, S.~Messika, and
  P.~Raipin Parv\'{e}dy.
\newblock Fault and byzantine tolerant self-stabilizing mobile robots gathering
  - feasibility study.
\newblock CoRR abs/1602.05546, arXiv, 2016.

\bibitem{DKLMPW}
B.~Degener, B.~Kempkes, T.~Langner, F.~Meyer auf~der Heide, P.~Pietrzyk, and
  R.~Wattenhofer.
\newblock A tight run-time bound for synchronous gathering of autonomous robots
  with limited visibility.
\newblock In {\em 23rd ACM SPAA}, pages 139--148, 2011.

\bibitem{DFN}
M.~D'Emidio, D.~Frigioni, and A.~Navarro.
\newblock Synchronous robots vs asynchronous lights-enhanced robots on graphs.
\newblock {\em Electr. Notes Theor. Comput. Sci.}, 322:169--180, 2016.

\bibitem{DBLP:journals/iandc/DEmidioSFN18}
M.~D'Emidio, G.~Di Stefano, D.~Frigioni, and A.~Navarra.
\newblock Characterizing the computational power of mobile robots on graphs and
  implications for the euclidean plane.
\newblock {\em Inf. Comput.}, 263:57--74, 2018.

\bibitem{LFCPSV}
G.A. {Di Luna}, P.~Flocchini, S.G. Chaudhuri, F.~Poloni, N.~Santoro, and
  G.~Viglietta.
\newblock Mutual visibility by luminous robots without collisions.
\newblock {\em Information and Computation}, 254(3):392--418, 2017.

\bibitem{DP}
Y.~Dieudonn\'{e} and F.~Petit.
\newblock Self-stabilizing gathering with strong multiplicity detection.
\newblock {\em Theoretical Computer Science}, 428(13):47--57, 2012.

\bibitem{FPS}
P.~Flocchini, G.~Prencipe, and N.~Santoro.
\newblock {\em Distributed Computing by Oblivious Mobile Robots}.
\newblock Morgan \& Claypool, 2012.

\bibitem{DBLP:series/lncs/11340}
P.~Flocchini, G.~Prencipe, and N.~Santoro, editors.
\newblock {\em Distributed Computing by Mobile Entities, Current Research in
  Moving and Computing}, volume 11340 of {\em Lecture Notes in Computer
  Science}.
\newblock Springer, 2019.

\bibitem{FSVY}
P.~Flocchini, N.~Santoro, G.~Viglietta, and M.~Yamashita.
\newblock Rendezvous with constant memory.
\newblock {\em Theoretical Computer Science}, 621:57--72, 2016.

\bibitem{HDT}
A.~H\'{e}riban, X.~D\'{e}fago, and S.~Tixeuil.
\newblock Optimally gathering two robots.
\newblock In {\em Proc. of 19th Int. Conference on Distributed Computing and
  Networking (ICDCN)}, pages 1--10, 2018.

\bibitem{IKIW}
T.~Izumi, Y.~Katayama, N.~Inuzuka, and K.~Wada.
\newblock Gathering autonomous mobile robots with dynamic compasses: An optimal
  result.
\newblock In {\em 21st DISC}, pages 298--312, 2007.

\bibitem{ISKIDWY}
T.~Izumi, S.~Souissi, Y.~Katayama, N.~Inuzuka, X.~D\'{e}fago, K.~Wada, and
  M.~Yamashita.
\newblock The gathering problem for two oblivious robots with unreliable
  compasses.
\newblock {\em SIAM Journal on Computing}, 41(1):26--46, 2012.

\bibitem{KLOT}
S.~Kamei, A.~Lamani, F.~Ooshita, and S.~Tixeuil.
\newblock Asynchronous mobile robot gathering from symmetric configurations
  without global multiplicity detection.
\newblock In {\em 18th SIROCCO}, pages 150--161, 2011.

\bibitem{KLASING200827}
R.~Klasing, E.~Markou, and A.~Pelc.
\newblock Gathering asynchronous oblivious mobile robots in a ring.
\newblock {\em Theoretical Computer Science}, 390(1):27--39, 2008.

\bibitem{LMA}
J.~Lin, A.S. Morse, and B.D.O. Anderson.
\newblock The multi-agent rendezvous problem. parts 1 and 2.
\newblock {\em SIAM Journal on Computing}, 46(6):2096--2147, 2007.

\bibitem{P}
G.~Prencipe.
\newblock Impossibility of gathering by a set of autonomous mobile robots.
\newblock {\em Theoretical Computer Science}, 384(2--3):222--231, 2007.

\bibitem{SDY}
S.~Souissi, X.~D\'{e}fago, and M.~Yamashita.
\newblock Using eventually consistent compasses to gather memory-less mobile
  robots with limited visibility.
\newblock {\em ACM Transactions on Autonomous and Adaptive Systems},
  4(1):1--27, 2009.

\bibitem{SY}
I.~Suzuki and M.~Yamashita.
\newblock Distributed anonymous mobile robots: Formation of geometric patterns.
\newblock {\em SIAM Journal on Computing}, 28:1347--1363, 1999.

\bibitem{TWK}
S.~Terai, K.~Wada, and Y.~Katayama.
\newblock Gathering problems for autonomous mobile robots with lights.
\newblock {\em arXiv.org}, cs(ArXiv:1811.12068), 2018.

\bibitem{V}
G.~Viglietta.
\newblock Rendezvous of two robots with visible bits.
\newblock In {\em 10th International Symposium on Algorithms and Experiments
  for Sensor Systems, Wireless Networks and Distributed Robotics
  (ALGOSENSORS)}, pages 291--306, 2013.

\end{thebibliography}

\end{document}